\newcommand{\ignore}[1]{}
\newcounter{PolicyCnt}
  {\end{equation}\ignorespacesafterend}
\renewcommand{\phi}{\varphi}
\renewcommand{\rho}{\varrho}
\newcommand{\set}[1]{\{#1\}}
\newcommand{\setx}[2]{\{#1\mathbin{\,|\,}#2\}}
\newcommand{\Nat}{\mathbb{N}}
\newcommand{\N}{\mathbb{N}}
\newcommand{\Q}{\mathbb{Q}}
\newcommand{\Qpos}{\mathbb{Q}_{\geq 0}}
\newcommand{\pto}{\nrightarrow}  % for partial functions
\newcommand{\pdef}{\operatorname{def}}
\newcommand{\MTL}{\textup{MTL}}
\newcommand{\MTLdata}{{\MTL$^{\downarrow}$}\xspace}
\newcommand{\timestamp}{\operatorname{ts}}
\newcommand{\position}{\mathit{pos}}
\newcommand{\istp}{\mathrm{tp}}
\newcommand{\tc}{\mathrm{tc}}
\newcommand{\co}[1]{[#1)} % left closed, right open interval
\newcommand{\true}{\mathsf{t}}
\newcommand{\false}{\mathsf{f}}
\newcommand{\unknown}{\bot}
\newcommand{\freezequantifier}{\downarrow}
\newcommand{\freeze}[2]{\mathop{\freezequantifier^{\!#1}}{\!#2}\mathbin{\!.}}
\newcommand{\freezeshort}[1]{\mathop{\freezequantifier}{\!#1}\mathbin{\!.}}
\newcommand{\historically}{\operatorname{\blacksquare}}
\newcommand{\once}{\operatorname{\text{\raisebox{-.11em}{\begin{turn}{45}\scalebox{.9}{$\blacksquare$}\end{turn}}}}}
\newcommand{\until}{\mathbin{\mathsf{U}}}
\newcommand{\always}{\operatorname{\square}}
\newcommand{\eventually}{\operatorname{\text{\raisebox{-.11em}{\begin{turn}{45}\scalebox{.9}{$\square$}\end{turn}}}}}
\newcommand{\weakuntil}{\mathbin{\mathsf{W}}}
\newcommand{\Two}{\mathsf{2}}
\newcommand{\Three}{\mathsf{3}}
\newcommand{\omodels}{\mathrel|\joinrel\approx}
\newcommand{\sem}[2]{\llbracket{#1}\models{#2}\rrbracket}   % semantics over timed words
\newcommand{\esem}[2]{[{#1}\models{#2}]}                    % extended semantics over timed words
\newcommand{\osem}[2]{\llbracket{#1}\omodels{#2}\rrbracket} % semantics over observations
\newcommand{\eosem}[2]{[{#1}\omodels{#2}]}                  % extended semantics over observations
\newcommand{\emap}{[\,]} % empty map
\lstdefinelanguage{myML}{
  backgroundcolor = \color{lightgray!20!},
%  numbers = left,
%  stepnumber = 5,
  firstnumber = 1,
  morekeywords={let,in,if,then,else,otherwise,and,or,not,return,switch,case,fun,raise,loop,for,each,foreach,do,with,while,proc,match,until,default,procedure},
  columns=fullflexible,
  sensitive=true,
  commentstyle = \itshape\color{blue},
  morecomment={[l]\#},
  %% commentstyle = \itshape,
  %% morecomment={[l]//},  
  mathescape=true,
  basicstyle=\scriptsize,
  identifierstyle={\sffamily},
  escapechar = \&
}
\newcommand{\ls}{\lstinline[basicstyle=\normalsize]}
\newcommand{\lsfootnotesize}{\lstinline[basicstyle=\footnotesize]}
\newcommand{\lsf}{\lsfootnotesize}
\DeclareFontFamily{U}{mathb}{\hyphenchar\font45}
\DeclareFontShape{U}{mathb}{m}{n}{
      <5> <6> <7> <8> <9> <10> gen * mathb
      <10.95> mathb10 <12> <14.4> <17.28> <20.74> <24.88> mathb12
      }{}
\DeclareSymbolFont{mathb}{U}{mathb}{m}{n}
\DeclareMathSymbol{\sqsubset}{3}{mathb}{"80}
\DeclareMathSymbol{\sqsubseteq}{3}{mathb}{"84}
\DeclareMathSymbol{\sqsubsetneq}{3}{mathb}{"88}
\newcommand{\tpp}{\mathit{tp}}
\newcommand{\tcp}{\mathit{tc}}
\newcommand{\pos}{\position}
\newcommand{\AP}{\mathit{AP}}
\newcommand{\sub}{\mathit{Sub}}
\newcommand{\fPhi}[2]{\Phi_{#1}^{#2}}
\newcommand{\fPsi}[2]{\Psi_{#1}^{#2}}
\newcommand{\fPsip}[2]{\bar\Psi_{#1}^{#2}}
\newcommand{\deltap}{\hat\delta}
\newcommand{\overbar}[1]{\mkern 1.5mu\overline{\mkern-4.5mu #1\mkern-1.5mu}\mkern 1.5mu}
\newcommand{\bAP}{\overbar{\AP}}
\newcommand{\byref}[2]{\text{\small [by (#1): $#2$]}}
\newcommand{\byvar}[2]{\text{\small [by #1: $#2$]}}
\newcommand{\Val}{\mathit{Val}}
\newcommand{\blue}[1]{{\color{blue}#1}}
\title{Runtime Verification of Temporal Properties over Out-of-order
  Data Streams}
\titlerunning{Runtime Verification over Out-of-order Data Streams}
\author{David Basin\inst{1} \and Felix Klaedtke\inst{2} \and 
  Eugen Z\u{a}linescu\inst{3}}
\authorrunning{D. Basin et al.}
\institute{ETH Zurich, Department of Computer Science \and 
  NEC Laboratories Europe, Heidelberg \and 
  Technische Universit\"at M\"unchen}
\begin{document}

\maketitle

\begin{abstract}
  We present a monitoring approach for verifying systems at runtime.
  Our approach targets systems whose components communicate with the
  monitors over unreliable channels, where messages can be delayed or
  lost.  In contrast to prior works, whose property specification
  languages are limited to propositional temporal logics, our approach
  handles an extension of the real-time logic MTL with freeze
  quantifiers for reasoning about data values.  We present its
  underlying theory based on a new three-valued semantics that is well
  suited to soundly and completely reason online about event streams
  in the presence of message delay or loss.  We also evaluate our
  approach experimentally.  Our prototype implementation processes
  hundreds of events per second in settings where messages are
  received out of order.
\end{abstract}

\section{Introduction}
\label{sec:intro}

Verifying systems at runtime can be accomplished by instrumenting
system components so that they inform monitors about the actions they
perform. The monitors update their states according to the information
received and check whether the properties they are monitoring are
fulfilled or violated. Various runtime-verification approaches exist
for different kind of systems and property specification languages,
see for example
\cite{Barringer_etal:eagle,Bauer_etal:rv_tltl,Meredith_etal:mop,Basin_etal:rv_mfotl,Sen_etal:decentralized_disitributed_monitoring,MN04-signals,Bauer_Falcone:decentralised_monitor}.

Many of these specifications languages are based on temporal logics or
finite-state machines, which describe the correct system behavior in
terms of \emph{infinite} streams of system actions.  However, at any
point in time, a monitor has only partial knowledge about the system's
behavior.  In particular, a monitor can at best only be aware of the
previously performed actions, which correspond to a finite prefix of
the infinite action stream.  When communication channels are
unreliable, a monitor's knowledge about the previously performed
actions may even be incomplete since messages can be lost or delayed
and thus received out of order.
Nevertheless, a monitor should output a verdict promptly when the
monitored property is fulfilled or violated. Moreover, the verdict
should remain correct when some of the monitor's knowledge gaps are
subsequently closed.

Many runtime-verification approaches rely on an extension of the
standard Boolean semantics of the linear-time temporal logic LTL with
a third truth value, proposed by Bauer et
al.~\cite{Bauer_etal:ltl_rv}.  Namely, a formula evaluates to the
Boolean truth value~$b$ on a finite stream of performed
actions~$\sigma$ if the formula evaluates to $b$ on all infinite
streams that extend~$\sigma$; otherwise, the formula's truth value is
unknown on $\sigma$.  This semantics, however, only accounts for
settings where monitors are always aware of all previously performed
actions.  It is insufficient to reason soundly and completely about
system behavior at runtime when, for example, unreliable channels are used to
inform the monitors about the performed actions.

In this paper, we present an extension of the propositional real-time
logic
MTL~\cite{Koymans:realtime_properties,AlurHenzinger:realtimelogics_survey},
which we name \MTLdata.
First, \MTLdata comprises a freeze
quantifier~\cite{Henzinger:half_order} for reasoning about data values
in action streams.  The freeze quantifier~$\downarrow$ can be seen as
a restricted version of the first-order quantifiers~$\exists$
and~$\forall$.  More concretely, at a position of the action stream,
the formula~$\freeze{}{x}\phi$ uniquely binds a data value of the
action at that position to the logical variable~$x$.

Second, we equip \MTLdata with a new three-value semantics that is
well suited for settings where system components communicate with the
monitors over unreliable channels.  Specifically, we define the
semantics of \MTLdata's connectives over the three truth
values~$\true$,~$\false$, and~$\unknown$.  We interpret these truth
values as in Kleene logic and conservatively extend the logic's
standard Boolean semantics, where $\true$ and $\false$ stand for
``true'' and ``false'' respectively, and the third truth
value~$\unknown$ stands for ``unknown'' and accounts for the monitor's
knowledge gaps.  The models of \MTLdata are finite words where
knowledge gaps are explicitly represented.  Intuitively, a finite word
corresponds to a monitor's knowledge about the system behavior at a
given time and the knowledge gaps may result from message delays,
losses, crashed components, and the like.  Critically in our setting,
reasoning is monotonic with respect to the partial order on truth
values, where $\unknown$ is less than $\true$ and $\false$, and
$\true$ and $\false$ are incomparable.  This monotonicity property
guarantees that closing knowledge gaps does not invalidate previously
obtained Boolean truth values.

Third, we present an online algorithm for verifying systems at runtime
with respect to \MTLdata specifications.  Our algorithm is based on,
and extends, the algorithm for MTL by Basin et
al.~\cite{Basin_etal:failureaware_rv} to additionally handle the
freeze quantifier.  The algorithm's output is sound and complete for
\MTLdata's three-valued semantics and with respect to the monitor's
partial knowledge about the performed actions at each point in time.

Our algorithm works roughly as follows.
It receives messages from the system components describing
the actions they perform.  As with the algorithm
in~\cite{Basin_etal:failureaware_rv}, no assumptions are made on the
order in which messages are received.
The algorithm updates its state for each received
message.  This state comprises a graph structure for reasoning about the
system behavior, i.e., computing verdicts about the monitored
property's fulfillment.  The graph's nodes store the truth values of
the subformulas at the different times for the data values to which
quantified variables are frozen. In each update, the algorithm
propagates data values down to the graph's leaves and propagates
Boolean truth values for subformulas up along the graph's edges.  When
a Boolean truth value is propagated to a root node of the graph,
the algorithm outputs a verdict.

Our main contribution is a runtime-verification approach that makes no
assumptions about message delivery.
It handles a significantly richer specification language than previous
approaches, namely, an extension of the real-time logic MTL with a
quantifier for reasoning about the data processed by the monitored
system.
Furthermore, our approach guarantees sound and complete reasoning with
partial knowledge about system behavior.
Finally, we experimentally evaluate the performance of a prototype
implementation of our approach, illuminating its current capabilities,
tradeoffs, and performance limitations.

The remainder of this paper is structured as follows.
In Section~\ref{sec:prelim}, we introduce relevant notation and
terminology.  In Section~\ref{sec:mtl}, we extend MTL with the freeze
quantifier and give the logic's semantics.  In
Section~\ref{sec:monitor}, we describe our monitoring approach,
including its algorithmic details.  In Section~\ref{sec:casestudy}, we
report on our experimental evaluation.  Finally, in
Sections~\ref{sec:related} and~\ref{sec:concl}, we discuss related
work and draw conclusions. 
Further details are given in the appendixes.

%%% Local Variables: 
%%% mode: latex
%%% TeX-master: "main"
%%% End: 

\section{Preliminaries}
\label{sec:prelim}

In this section, we introduce relevant notation and terminology.

\paragraph{Intervals.}

An \emph{interval} $I$ is a nonempty subset of $\Qpos$ such that if
$a,b\in I$ then $c\in I$, for any $c\in\Qpos$ with $a\leq c\leq b$. We
use standard notation and terminology for intervals. For example,
$(a,b]$ denotes the interval that is left-open with bound~$a$ and
right-closed with bound~$b$.  Note that an interval $I$ with
cardinality $|I|=1$ is a singleton $\set{\tau}=[\tau,\tau]$, for some
$\tau\in\Qpos$.  An interval~$I$ is \emph{unbounded} if its right
bound is~$\infty$, and \emph{bounded} otherwise.  Let
$I-J := \setx{\tau-\tau'}{\tau\in I\text{ and }\tau'\in J}\cap\Qpos$.

\paragraph{Partial Functions.}

For a partial function $f:A\pto B$, let $\pdef(f):=\setx{a\in
  A}{f(a)\text{ is defined}}$.  If $\pdef(f)=\{a_1,\dots,a_n\}$, for
some $n\in\Nat$, we also write $[a_1\mapsto f(a_1),\dots, a_n\mapsto
f(a_n)]$ for $f$, when $f$'s domain $A$ and its codomain $B$ are
irrelevant or clear from the context.  Note that $[\,]$ denotes the
partial function that is undefined everywhere.  Furthermore, for
partial functions $f, g:A\pto B$, we write $f\sqsubseteq g$ if
$\pdef(f)\subseteq \pdef(g)$ and $f(a)=g(a)$, for all $a\in\pdef(f)$.
We write $f[a\mapsto b]$ to denote the update of a partial function
$f:A\pto B$ at $a\in A$, i.e., $f[a\mapsto b]$ equals $f$, except that
$a$ is mapped to $b$ if $b\in B$, and $a\not\in\pdef(f[a\mapsto b])$
if $b\notin B$.

\paragraph{Truth Values.}

Let $\Three$ be the set~$\set{\true,\false,\bot}$, where
$\true$~(true) and $\false$~(false) denote the standard Boolean
values, and $\unknown$ denotes the truth value ``unknown.''
Table~\ref{tab:connectives} shows the truth tables of some standard
logical operators over $\Three$.  Observe that these operators
coincide with their Boolean counterparts when restricted to the
set~$\Two:=\{\true,\false\}$.
\begin{table}[t]
  \caption{Truth tables for three-valued logical operators 
    (strong Kleene logic).}
  \label{tab:connectives}
  \centering
  \scalebox{1}{
    \renewcommand{\arraystretch}{.8}
  \begin{tabular}{c@{\ } | @{\ }c }
    $\neg$   &  \\
    \hline
    $\true$  & $\false$ \\
    $\false$ & $\true$ \\
    $\unknown$   & $\unknown$ \\
  \end{tabular}
  \qquad
  \begin{tabular}{ c@{\ } | @{\ }c  c  c }
    $\vee$   & $\true$ & $\false$ & $\unknown$ \\
    \hline
    $\true$  & $\true$ & $\true$  & $\true$\\
    $\false$ & $\true$ & $\false$ & $\unknown$\\
    $\unknown$   & $\true$ & $\unknown$   & $\unknown$\\
  \end{tabular}
  \qquad
  \begin{tabular}{ c@{\ } | @{\ }c  c  c}
    $\wedge$ & $\true$ & $\false$ & $\unknown$ \\
    \hline
    $\true$  & $\true$  & $\false$ & $\unknown$\\
    $\false$ & $\false$ & $\false$ & $\false$\\
    $\unknown$   & $\unknown$   & $\false$ & $\unknown$
  \end{tabular}
  \qquad
  \begin{tabular}{ c@{\ } | @{\ }c  c  c}
    $\rightarrow$ & $\true$ & $\false$ & $\unknown$ \\
    \hline
    $\true$  & $\true$ & $\false$ & $\unknown$  \\
    $\false$ & $\true$ & $\true$  & $\true$ \\
    $\unknown$   & $\true$ & $\unknown$   & $\unknown$  
  \end{tabular}
  }
\end{table}
We partially order the elements in $\Three$ by their knowledge:
$\unknown\prec\true$ and $\unknown\prec\false$, and $\true$ and $\false$
are incomparable as they carry the same amount of knowledge.  Note that
$(\Three,\prec)$ is a lower semilattice where $\curlywedge$ denotes the
meet.  
We remark that the operators in Table~\ref{tab:connectives} are
monotonic.  This ensures that reasoning is monotonic in
knowledge. Intuitively, when closing a knowledge gap, represented
by~$\unknown$, with $\true$ or $\false$, we never obtain a truth value
that disagrees with the previous one.

\paragraph{Timed Words.}

Let $\Sigma$ be an alphabet.  A \emph{timed word} over $\Sigma$ is an
infinite word
$(\tau_0,a_0)(\tau_1,a_1)\ldots\in(\Qpos\times\Sigma)^\omega$, where
the sequence of $\tau_i$s is strictly monotonic and nonzeno, that
is, $\tau_i<\tau_{i+1}$, for every $i\in\Nat$, and for every
$t\in\Qpos$, there is some $i\in\N$ such that $\tau_i>t$.

%%% Local Variables:
%%% mode: latex
%%% TeX-master: "main"
%%% End:

\section{Metric Temporal Logic Extensions}
\label{sec:mtl}

In this section, we extend the propositional real-time logic
MTL~\cite{Koymans:realtime_properties,AlurHenzinger:realtimelogics_survey}
with a freeze quantifier~\cite{Henzinger:half_order}.  The logic's
three-valued semantics conservatively extends the standard
Boolean semantics and accounts for 
knowledge gaps during monitoring.

\subsection{Syntax}

Let $P$ be a finite set of predicate symbols, where $\iota(p)$ denotes
the arity of $p\in P$. Furthermore, let $V$ be a set of variables and
$R$ a finite set of registers.
The syntax of the real-time logic \MTLdata is given by the grammar:
\begin{equation*}
  \phi
  \mathbin{\ ::=\ }
  \true \mathbin{\,\big|\,} 
  p(x_1,\dots,x_{\iota(p)}) \mathbin{\,\big|\,}
  \freeze{r}{x}\phi \mathbin{\,\big|\,} 
  \neg\phi \mathbin{\,\big|\,}
  \phi\vee\phi \mathbin{\,\big|\,}
  \phi\until_I\phi 
  \,,
\end{equation*}
where $p\in P$, $x, x_1,x_2\dots,x_{\iota(p)}\in V$, $r\in R$, and $I$
is an interval.  For the sake of brevity, we limit ourselves to the
future fragment and omit the temporal connective for ``next.''
A formula is \emph{closed} if each variable occurrence is bound by a
freeze quantifier.
A formula is \emph{temporal} if the connective at the root of the
formula's syntax tree is $\until_I$.
We denote by~$\sub(\phi)$ the set of $\phi$'s subformulas.

We employ standard syntactic sugar. For example, $\phi\rightarrow\psi$
abbreviates $(\neg\phi)\vee\psi$, and $\eventually_I\phi$
(``eventually'') and $\always_I\phi$ (``always'') abbreviate
$\true\until_I\phi$ and $\neg\eventually_I\neg\phi$, respectively.
The nonmetric variants of the temporal connectives are also easily
defined, e.g., $\always\phi:=\always_{[0,\infty)}\phi$.  Finally, we
use standard conventions concerning the connectives' binding strength
to omit parentheses.  For example, $\neg$ binds stronger than
$\wedge$, which binds stronger than $\vee$, and the connectives
$\neg$, $\vee$, etc. bind stronger than the temporal connectives,
which bind stronger than the freeze quantifier.  To simplify notation,
we omit the superscript~$r$ in formulas like $\freeze{r}{x}\phi$
whenever $r\in R$ is irrelevant or clear from the context.

\begin{example}
  \label{ex:mtl}
  Before defining the logic's semantics, we provide some intuition.
  The following formula formalizes the policy that whenever a customer
  executes a transaction that exceeds some threshold (e.g.~\$2,000) then this customer
  must not execute any other transaction for a certain period of time
  (e.g.~3 days).
  \begin{equation*}
    \always 
    \freeze{\mathit{cid}}{c}\freeze{\mathit{tid}}{t}\freeze{\mathit{sum}}{a}
    \mathit{trans}(c,t,a)\wedge a\geq2000 \to
    \always_{(0,3]}\freeze{\mathit{tid}}{t'}\freeze{\mathit{sum}}{a'}
    \neg\mathit{trans}(c,t',a')
  \end{equation*}
  We assume that the predicate symbol $\mathit{trans}$ is interpreted
  as a singleton relation or the empty set at any point in time.  For instance,
  the interpretation $\set{(\mathit{Alice},42,99)}$ of
  $\mathit{trans}$ at time~$\tau$ describes the action of $\mathit{Alice}$
  executing a
  transaction with identifier~$42$ with the
  amount \$99 at time~$\tau$.  When the interpretation is the empty
  set, no transaction is executed.
  We further assume that when the interpretation of the predicate
  symbol $\mathit{trans}$ is nonempty, the registers $\mathit{cid}$,
  $\mathit{tid}$, and $\mathit{sum}$ store (a)~the transaction's
  customer, (b)~the transaction identifier, and (c)~the transferred amount, respectively.  If the interpretation is the empty
  set, the registers store a dummy value, representing undefinedness.

  The variables $c$, $t$, $a$, $t'$, and~$a'$ are frozen to the
  respective register values.
  For example, $c$ is frozen to the value stored in the
  register~$\mathit{cid}$ at each point in time and is used to
  identify later transactions from this customer.  Furthermore, note
  that, e.g., the variables $t$ and~$t'$ are frozen to values stored
  in the registers~$\mathit{tid}$ at different times.
  The freeze quantifier can be seen as a weak form of the standard
  first-order quantifiers~\cite{Henzinger:half_order}. Since a
  register stores exactly one value at any time, it is irrelevant
  whether we quantify existentially or universally over a register's
  value.
  \qed
\end{example}

\subsection{Semantics}

\MTLdata's models under the three-valued semantics are finite words
(see Definition~\ref{def:observation} below). Such a model represents
a monitor's partial knowledge about the system behavior at a given
point in time.  This is in contrast to the models for the standard
Boolean semantics for MTL, which are infinite timed words and capture
the complete system behavior in the limit.
 
\begin{definition}
  \label{def:observation}
  Let $D$ be the \emph{data domain}, a nonempty set of values with
  $\unknown\not\in D$.
  \emph{Observations} are finite words with letters of the form
  $(I,\sigma,\rho)$, where $I$ is an interval, $\sigma:P\pto
  2^{\bigcup_{\iota\in\Nat}D^\iota}$, and $\rho:R\pto D$.  We define
  observations inductively.
  \begin{itemize}[--]
  \item The word $\big([0,\infty), [\,], [\,]\big)$ of length~$1$
    is an observation.
  \item If $w$ is an observation, then the word obtained by
    applying one of the following transformations to $w$ is an
    observation.
    \begin{enumerate}[(T1)]
    \item \label{enum:observation_split} Some letter $(I,\sigma,\rho)$
      of $w$, where $|I|>1$, is replaced by the three-letter word
      ${\big(I\cap [0,\tau),\sigma,\rho\big)}
      {\big(\set{\tau},\sigma,\rho\big)} {\big(I\cap
        (\tau,\infty),\sigma,\rho\big)}$, where $\tau\in I$ and
      $\tau>0$.  If $\tau=0$, then $(I,\sigma,\rho)$ is replaced by
      $\big(\set{\tau},\sigma,\rho\big)\big(I\cap(\tau,\infty),\sigma,\rho\big)$.
    \item \label{enum:observation_removal} Some letter
      $(I,\sigma,\rho)$ of $w$, where $|I|>1$ and $I$ is bounded, is removed.
    \item \label{enum:observation_data} Some letter $(I,\sigma,\rho)$ of $w$, 
      where $|I|=1$, is replaced by $(I,\sigma',\rho')$, where
      $\sigma\sqsubseteq\sigma'$ and $\rho\sqsubseteq\rho'$, and
      $\sigma\neq\sigma'$ or $\rho\neq\rho'$.
    \end{enumerate}
  \end{itemize}
\end{definition}
For an observation $w$ of length $n\in\Nat$, let
$\position(w):=\{0,\dots,n-1\}$.  We call $i\in \position(w)$ a
\emph{time point} in $w$ if the interval $I_i$ of the letter at
position $i$ in $w$ is a singleton.  In this case, the element of
$I_i$ is the \emph{timestamp} of the time point~$i$, denoted by
$\timestamp_w(i)$.
We note that for any letter $(I,\sigma,\rho)$ of an observation, if
$|I|>1$ then $\sigma=\rho=[\,]$.

\begin{example}
  \label{ex:trans-observation}
  A monitor's initial knowledge is represented by the observation $w_0
  = \big([0,\infty),[\,],[\,]\big)$.
  Suppose a transaction of $\$99$ with identifier~$42$  from $\mathit{Alice}$
  is executed at time~$3.0$.
  The monitor's initial knowledge $w_0$ is then updated by~(T1)
  and~(T3) to $w_1 = \big([0,3.0),[\,],[\,]\big)
  \big(\{3.0\},\sigma,\rho\big) \big((3.0,\infty),[\,],[\,]\big)$,
  where $\sigma(\mathit{trans}) = \set{(\mathit{Alice}, 42, 99)}$ and
  $\rho = [\mathit{cid}\mapsto\mathit{Alice}, \mathit{tid}\mapsto 42,
  \mathit{sum}\mapsto 99]$.
  If the monitor also receives the information that no action has
  occurred in the interval $[0,3.0)$, then its updated knowledge is
  represented by $\big(\{3.0\},\sigma,\rho\big)
  \big((3.0,\infty),[\,],[\,]\big)$, obtained from $w_1$ by~(T2).
  The information that no action has occurred in an interval can be
  communicated explicitly or implicitly by the monitored system to the
  monitor, for instance, by attaching a sequence number to each
  action.  See~\cite{Basin_etal:failureaware_rv} for details.
  Finally, note that the interval of the last letter of any
  observation is always unbounded. This reflects that a
  monitor is unaware of what it will observe in the future.
  \qed
\end{example}

\begin{definition}
  \label{def:refinement}
  The observation~$w'$ \emph{refines} the observation~$w$, written
  $w\sqsubset_1 w'$, iff $w'$ is obtained from $w$ by one of
  the transformations (T\ref{enum:observation_split}),
  (T\ref{enum:observation_removal}), or
  (T\ref{enum:observation_data}).  The reflexive-transitive closure of
  $\sqsubset_1$ is $\sqsubseteq$.
\end{definition}

\MTLdata's three-valued semantics is defined by a function
$\phi\mapsto \osem{w,i,\nu}{\phi}\in\Three$, for a given
observation~$w$, time point~$i\in\N$, and partial
valuation~$\nu:V\pto D$.
We define this function inductively over the formula structure.
For a predicate symbol~$p\in P$, we write in the following
$p(\bar{x})$ instead of $p(x_1,\dots,x_{\iota(p)})$. Furthermore, we
abuse notation by abbreviating, e.g., $\nu(x_1),\dots,\nu(x_n)$ as
$\nu(\bar{x})$, for a partial valuation~$\nu:V\pto D$ and
variables~$x_1,\dots,x_n$.  Also, the notation $\bar{x}\in\pdef(\nu)$
means that $x\in\pdef(\nu)$, for each $x$ occurring in $\bar{x}$.
Finally, we identify the logic's constant symbol~$\true$ with the
Boolean value $\true\in\Three$, and the connectives~$\neg$ and $\vee$
with the corresponding three-valued logical operators in
Table~\ref{tab:connectives}.
\begin{align*}
  \osem{w,i,\nu}{\true}
  :=\ &
  \true
  \\%[.1cm] 
  \osem{w,i,\nu}{p(\bar{x})}
  :=\ &
  \begin{cases}
    \true & \text{if $\bar{x}\in\pdef(\nu)$, $p\in\pdef(\sigma_i)$,
      and $\nu(\bar{x})\in\sigma_i(p)$}
    \\      
    \false & \text{if $\bar{x}\in\pdef(\nu)$, $p\in\pdef(\sigma_i)$,
      and $\nu(\bar{x})\not\in\sigma_i(p)$}
    \\
    \unknown & \text{otherwise}
  \end{cases}
  \\%[.1cm] 
  \osem{w,i,\nu}{\freeze{r}{x}\phi}
  :=\ &
  \osem{w,i,\nu[x\mapsto \rho_i(r)]}{\phi}
  \\%[.1cm] 
  \osem{w,i,\nu}{\neg\phi}
  :=\ &
  \neg\osem{w,i,\nu}{\phi}
  \\%[.1cm] 
  \osem{w,i,\nu}{\phi\vee\psi}
  :=\ &
  \osem{w,i,\nu}{\phi}
  \vee
  \osem{w,i,\nu}{\psi}
  \\%[.1cm]
  \osem{w,i,\nu}{\phi\until_I\psi}
  :=\ &
  \textstyle\bigvee_{j\in\position(w), j\geq i}
  \Big(\istp_w(j) \wedge \tc_{w,I}(j,i) \wedge 
  \osem{w,j,\nu}{\psi} 
  \,\wedge
  \\[-.2cm] & \qquad\qquad\qquad\qquad\qquad
  \textstyle\bigwedge_{i\leq k<j} \big(\istp_w(k)\rightarrow
  \osem{w,k,\nu}{\phi}\big)\Big)
\end{align*}
The auxiliary functions $\istp_w:\position(w)\rightarrow\Three$ and
$\tc_{w,I}:\position(w)\times\position(w)\rightarrow\Three$, are
defined as follows, where $I_k$ denotes the interval at position
$k\in\position(w)$ in $w$.
\begin{align*}
  \istp_w(j) & :=\begin{cases}
    \true & \text{if $j$ is a time point in $w$}
    \\
    \unknown & \text{otherwise}
  \end{cases}
  \\
  \tc_{w,I}(i,j) & :=\begin{cases}
    \true & 
    \text{if $\tau-\tau'\in I$, for all $\tau \in I_i$ and $\tau'\in I_j$}
    \\
    \false & 
    \text{if $\tau-\tau'\notin I$, for all $\tau \in I_i$ and $\tau'\in I_j$}
    \\
    \unknown & 
    \text{otherwise}
  \end{cases}
\end{align*}

We comment on the semantics of $\phi\until_I\psi$.
The auxiliary functions account for the positions in $w$ that are not time
points. For example, at position $i$, for a position $j\leq i$ to be a
``valid anchor'' for the formula, $j$ must be a time point (in this
case $\istp_w(j)=\true$). Otherwise, the truth value $\unknown$ is
used to express that it is not yet known whether the
interval at position $j$ in $w$ will contain a time point. Note that
using the truth value $\false$ would be incorrect since a refinement
of $w$ might contain a time point with a timestamp in $I_j$.
Furthermore, $\tc_{w,I}(i,j)$ is used to account for the metric
constraint of the temporal connective.  In particular,
$\tc_{w,I}(i,j)$ is $\unknown$ if it is unknown in $w$ whether
the formula's metric constraint is always satisfied or never satisfied
for the positions $i$ and $j$.
Finally, suppose that $\phi$'s truth value is $\false$ at a position
$k$ between $j$ and $i$.  If the interval $I_k$ at position~$k$ is
not a singleton, the function $\istp_w(k)$ ``downgrades'' this value
to $\unknown$, since it will be irrelevant in refinements of $w$ that
do not contain any time points with timestamps in $I_k$.

Note that it may be the case that $\osem{w,i,\nu}{\phi}\in\Two$ when
$i$ is not a time point in $w$ (i.e., $I_i$ is not a singleton).  A
trivial example is when $\phi=\true$.  In a refinement of $w$, it
might turn out that there are no time points with timestamps in $I_i$,
and hence a monitor should not output a verdict for the
specification~$\phi$ at position~$i$ in $w$.  We address this artifact
by downgrading (with respect to the partial order~$\prec$) a Boolean
truth value $\osem{w,i,\nu}{\phi}$ to $\unknown$ when $i$ is not a
time point. To this end, we introduce the following variant of the
semantics.
\begin{definition}
For a formula $\phi$, an observation $w$, $\tau\in\Qpos$, and $\nu$ a
partial valuation, we define $\eosem{w, \tau,\nu}{\phi}:=\osem{w,
  i,\nu}{\phi}$, provided that $\tau$ is the timestamp of some time
point $i\in\position(w)$ in $w$, and $\eosem{w,
  \tau,\nu}{\phi}:=\unknown$, otherwise.
\end{definition}

\subsection{Properties}

The following theorem states that \MTLdata's three-valued semantics is
monotonic in $\sqsubseteq$ (on observations and partial valuations) and
$\preceq$ (on truth values). This property is crucial for monitoring
since it guarantees that a verdict output for an observation stays
valid for refined observations.
\begin{theorem}
  \label{thm:monotonicity}
  Let $\phi$ be a formula, $\mu$ and $\nu$ partial valuations, $u$
  and $v$ observations, and $\tau\in\Qpos$. If $u\sqsubseteq v$ and
  $\mu\sqsubseteq \nu$ then
  $\eosem{u,\tau,\mu}{\phi} \preceq \eosem{v,\tau,\nu}{\phi}$.
\end{theorem}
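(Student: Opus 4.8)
The plan is to reduce the statement to a single refinement step and then argue by structural induction on $\phi$. Since $\sqsubseteq$ is the reflexive–transitive closure of $\sqsubset_1$ and $\preceq$ is transitive, it suffices to treat the case where $v$ is obtained from $u$ by a single application of (T1), (T2), or (T3), together with a valuation step $\mu \sqsubseteq \nu$; the general case then follows by composing the chain of single steps, using $\nu \sqsubseteq \nu$ after the first one. First I would record that each of (T1)–(T3) preserves time points and their timestamps: (T1) and (T2) act only on non-singleton letters, while (T3) modifies only the data at a singleton and leaves its timestamp intact. Hence if $\tau$ is the timestamp of a time point $i$ in $u$, it is the timestamp of a time point $i'$ in $v$, and the claim $\eosem{u,\tau,\mu}{\phi}\preceq\eosem{v,\tau,\nu}{\phi}$ reduces to $\osem{u,i,\mu}{\phi}\preceq\osem{v,i',\nu}{\phi}$; when $\tau$ is not a timestamp in $u$ the left-hand side is $\unknown$ and the inequality is immediate.

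For the induction I would work at the level of positions, fixing for each single step a correspondence between the positions of $u$ and those of $v$: the identity for (T3); the order-preserving shift that skips the deleted letter for (T2); and for (T1) the map sending the split position to the three (or two) pieces it is replaced by, shifting later positions accordingly. The invariant I prove is that for every subformula $\chi$ and every pair $(i,i')$ of corresponding positions (so that $I_{i'}\subseteq I_i$), $\osem{u,i,\mu}{\chi}\preceq\osem{v,i',\nu}{\chi}$. The cases $\true$, $p(\bar x)$, $\neg\phi$, and $\phi\vee\psi$ are routine: an atom can only move from $\unknown$ to a Boolean value as $\sigma_i\sqsubseteq\sigma_{i'}$ and $\mu\sqsubseteq\nu$ define more, and $\neg$ and $\vee$ preserve $\preceq$ because the operators of Table~\ref{tab:connectives} are monotone. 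The freeze case $\freeze{r}{x}\phi$ reduces to a valuation step: since $\rho_i\sqsubseteq\rho_{i'}$, the updated valuations satisfy $\mu[x\mapsto\rho_i(r)]\sqsubseteq\nu[x\mapsto\rho_{i'}(r)]$ (an undefined register simply leaves $x$ unbound), so the induction hypothesis for $\phi$ applies.

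The Until case is where the real work lies, and I expect it to be the main obstacle. Two auxiliary monotonicity facts are needed first: $\istp$ is monotone because a non-time-point may become a time point under (T1) but never the reverse, and $\tc_{w,I}$ is monotone because intervals only shrink along the correspondence, so a mixed (hence $\unknown$) metric test can become determinate while a determinate $\true$/$\false$ test is preserved. The difficulty is that the semantics of $\until_I$ is a Kleene disjunction over positions $j\geq i$ of a conjunction over $i\leq k<j$, and these index sets change under (T1) and (T2); Kleene $\vee$ and $\wedge$ are \emph{not} monotone under arbitrary changes of their index sets (e.g.\ an all-$\false$ disjunction would become $\true$ if a new $\true$ disjunct appeared). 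What saves the argument is that every position of $v$ is a piece of some position of $u$, and that the induction hypothesis, which gives $\osem{u,\cdot,\mu}{\cdot}\preceq\osem{v,\cdot,\nu}{\cdot}$, upgrades to an equality whenever the $u$-value is Boolean, since $\true$ and $\false$ are maximal in $\preceq$.

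Concretely, I would argue per value of the $u$-side. If $\osem{u,i,\mu}{\phi\until_I\psi}=\unknown$ there is nothing to prove. If it is $\true$, a witness time point $j$ with a $\true$ disjunct persists to $v$, and every position of the inner range $[i',j')$ is a piece of a $u$-position in $[i,j)$ whose conjunct was $\true$; in particular each non-time-point there forces $\osem{u,\cdot,\mu}{\phi}=\true$, which the induction hypothesis transports to the new time points created by (T1), so the corresponding $v$-disjunct is again $\true$. If it is $\false$, then every $u$-disjunct is $\false$; each $v$-disjunct dominates its $u$-source and is thus forced to $\false$ by maximality, (T1) creates no new witness (a new time point lies in the interval of a split $u$-position whose disjunct was already $\false$, and the responsible $\false$ factor—$\tc$, $\psi$, or the inner conjunction—transfers to it by maximality), and (T2) only deletes a non-time-point disjunct lying at most at $\unknown$; hence the whole $v$-disjunction is $\false$. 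The monotonicity of the inner conjunction under insertion of (T1)-pieces and deletion of the (T2)-letter is checked the same way, using that the added or removed conjuncts lie in $\{\true,\unknown\}$ together with the maximality of Boolean values. Assembling these per-transformation checks completes the induction, and composing single steps via transitivity of $\preceq$ yields the theorem.
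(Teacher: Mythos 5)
Your proposal is correct and follows essentially the paper's own route: the paper proves the theorem via Lemma~\ref{lem:refinement}, which packages exactly your position correspondence (interval inclusion plus $\sigma$- and $\rho$-refinement, properties (R1)--(R3)) for an arbitrary refinement by composing the single-step maps, and then runs the same structural induction that you describe---maximality of Boolean values under $\preceq$, witness transfer for the $\true$ case of $\until_I$, and transfer of the responsible $\false$ factor ($\tc$, the $\psi$-conjunct, or an inner conjunct at a persisting time point) for the $\false$ case---carried out in detail in the proof of Theorem~\ref{thm:osem_refinement}. The only organizational difference is that you compose the semantic inequalities over single refinement steps via transitivity of $\preceq$, whereas the paper composes the position maps once and performs a single induction; the mathematical content is the same.
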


A similar theorem shows that \MTLdata's three-valued semantics
conservatively extends the standard Boolean semantics (see
Appendix~\ref{app:mtl_bool} for details). Intuitively speaking, if a
formula~$\phi$ evaluates to a Boolean value for an observation at time
$\tau\in\Qpos$, then $\phi$ has the same Boolean value at time~$\tau$
for any timed word\footnote{We assume here that the timed words are
  over the alphabet $\Sigma$ that consists of the pairs
  $(\sigma,\rho)$, where (i)~$\sigma$ is a total function over $P$
  with $\sigma(p)\subseteq D^{\iota(p)}$ for $p\in P$, and (ii)~$\rho$
  is a total function over $R$ with $\rho(r)\in D$ for $r\in R$.  }
that refines the observation.  Formally, a timed word $w'$
\emph{refines} an observation~$w$, $w\sqsubseteq w'$ for short, if for
every $j\in\Nat$, there is some $i\in\position(w)$, such that
$\tau_j\in I_i$, $\sigma_i\sqsubseteq \sigma'_j$, and
$\rho_i\sqsubseteq\rho'_j$, where $(I_\ell,\sigma_\ell,\rho_\ell)$ and
$(\tau_k,\sigma'_k,\rho'_k)$, for $\ell\in\position(w)$ and
$k\in\Nat$, are the letters of $w$ and $w'$, respectively.

We investigate next the decision problem that underlies monitoring.
\begin{theorem}
  \label{thm:decidable}
  For an arbitrary formula~$\phi$, observation~$w$, partial
  valuation~$\nu$, time $\tau\in\Qpos$, and truth value $b\in\Two$,
  the question of whether $\eosem{w,\tau,\nu}{\phi}$ equals~$b$ is
  $\mathrm{PSPACE}$-complete.
\end{theorem}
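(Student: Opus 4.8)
The plan is to prove the two matching bounds separately: membership in $\mathrm{PSPACE}$ by evaluating the three-valued semantics directly with a recursion whose depth is controlled by the formula size, and $\mathrm{PSPACE}$-hardness by a reduction from the truth problem for quantified Boolean formulas (QBF). For the upper bound, I would first reduce $\eosem{w,\tau,\nu}{\phi}$ to $\osem{w,i,\nu}{\phi}$ by scanning $w$ for a time point $i$ with $\timestamp_w(i)=\tau$ (returning $\unknown$ if none exists), which costs polynomial space, and then evaluate $\osem{w,i,\nu}{\phi}\in\Three$ by structural recursion following the semantic clauses. The only clause ranging over $\position(w)$ is $\until_I$; there the big disjunction over $j\ge i$ and the nested big conjunction over $i\le k<j$ are computed iteratively with three-valued accumulators and two position counters of $\bigO(\log|\position(w)|)$ bits each, while the recursive calls it issues are all on the proper subformulas $\psi$ and $\phi$. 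Hence along any branch of the call tree the formula strictly shrinks, so the recursion depth is $\bigO(|\phi|)$. Each stack frame stores a valuation $\nu$ whose domain has at most $|\phi|$ variables, each bound either to a value of the input valuation or to a register value occurring in $w$ (hence of size polynomial in the input), plus the counters and the rational arithmetic needed for $\tc_{w,I}$ and $\istp_w$. The total space is therefore polynomial, giving membership in $\mathrm{PSPACE}$.

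For the lower bound, I would reduce from QBF. Given $\Phi=Q_1p_1\cdots Q_mp_m\,\psi$ with $\psi$ quantifier-free, I build an observation whose time points are grouped into $m$ blocks, block $i$ consisting of two time points at integer-spaced timestamps carrying the register values $0$ and $1$ in a fixed register $r$, preceded by an anchor time point at timestamp $0$ and followed by a distant sentinel time point that pushes the final unbounded letter far to the right. The formula nests, for $i=1,\dots,m$ from the outside in, the operator $Q_i^\star$ immediately followed (inside its scope) by the freeze quantifier $\freeze{r}{y_i}$, wrapped around a matrix $\widehat\psi$, where $Q_i^\star=\eventually_I$ when $Q_i=\exists$ and $Q_i^\star=\always_I$ when $Q_i=\forall$, for one fixed bounded interval $I$ chosen so that from the point selected for $y_{i-1}$ exactly the two time points of block $i$ fall in the metric window. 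Because the freeze sits inside its temporal operator, $y_i$ is bound to the register value at the time point that operator selects. The matrix $\widehat\psi$ is $\psi$ with each literal $p_i$ replaced by a predicate atom testing "$y_i=1$", realized by interpreting a unary predicate as the singleton $\set{1}$ at every time point. Evaluating at the anchor, $\eventually_I$ followed by $\freeze{r}{y_i}$ realizes $\exists y_i\in\set{0,1}$ and $\always_I$ followed by $\freeze{r}{y_i}$ realizes $\forall y_i\in\set{0,1}$, while $\widehat\psi$ returns the Boolean value of $\psi$ under the chosen assignment. The reduction is polynomial, so deciding whether the verdict equals $\true$ is $\mathrm{PSPACE}$-hard, and the case $b=\false$ follows by negating $\phi$.

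The crux, and the step I expect to be the main obstacle, is arguing that the three-valued semantics never produces a spurious $\unknown$ in the reduction, so that the gadget faithfully computes the Boolean value of $\Phi$. I would handle this by making the observation \emph{gap-free}: using (T1)--(T3) to realize a word consisting only of singleton-interval time points followed by a single unbounded tail, so that the sole non-time-point position is the tail. Placing the sentinel far enough out guarantees that $\tc_{w,I}$ is $\false$ on the tail for every anchor used, so the tail contributes $\false$ to every disjunction and conjunction, and for any pair of time points both $\tc_{w,I}$ and $\istp_w$ are two-valued. Consequently every $\eventually_I$ and $\always_I$ collapses to its classical two-valued counterpart over the block time points, the quantifications over the $m$ blocks are mutually independent because $I$ isolates the next block from each selected anchor, and the whole evaluation stays within $\Two$. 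Verifying these window-isolation and two-valuedness invariants, together with the relationship to the standard semantics established alongside Theorem~\ref{thm:monotonicity}, is the main technical work; the upper bound, by contrast, is the routine polynomial-space recursion described above.
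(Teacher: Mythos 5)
Your proof is correct, and your upper bound is essentially the paper's own argument: the paper likewise only sketches membership as a depth-first, polynomial-space evaluation of the inductive three-valued semantics, noting (as you do) that a subformula may be revisited at a position under different partial valuations; your accounting of recursion depth, position counters, and valuation size just fills in that sketch. The hardness argument is where you genuinely diverge. The paper also reduces from QBF with the same core encoding---register values $0$ and $1$ frozen at time points, a unary predicate interpreted as $\set{1}$ to test truth, existential quantifiers mapped to $\eventually$ and universal ones to $\always$---but, for simplicity, it first enlarges \MTLdata with the past connectives $\once$ and $\historically$, so that every quantifier can renavigate to a single shared pair of time points at timestamps $0$ and $1$ via $\once\eventually_{[0,1]}$ resp.\ $\historically\always_{[0,1]}$; it merely asserts that ``a slightly more involved reduction without these temporal connectives is possible.'' Your block construction is precisely that omitted reduction: one two-point block per quantifier, a single bounded window $I$ stepping from each block to the next, and a sentinel pushing the unbounded tail out of metric range. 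This buys a hardness proof for the future-only fragment exactly as the logic is defined (no added connectives), at the cost of a more delicate gadget: the inter-block spacing must dominate the intra-block offset so that one fixed $I$ captures the next block from either point of the current one, and---as you correctly recognize---the sentinel is essential, since otherwise $\tc$ between the unbounded tail and the last anchors is $\unknown$, and $\always_I$ would evaluate to $\unknown$ rather than $\true$. One small correction to your wording: the tail does not ``contribute $\false$ to every disjunction and conjunction''; rather, the conjunct $\tc$ inside the tail's disjunct of the $\until_I$ semantics is $\false$, which makes that disjunct $\false$ and hence, after dualizing for $\always_I$, the corresponding conjunct $\true$---a literally false conjunct would wrongly falsify the $\always_I$ conjunction. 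This slip does not affect the argument, whose stated invariants (two-valuedness of $\istp$ and $\tc$ on all relevant pairs, and window isolation of blocks) are the right ones to verify.
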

In a propositional setting, the corresponding decision problem can be
solved in polynomial time using dynamic programming,
where the truth values at the positions of an observation are
propagated up the formula structure.  Note that the truth value of a
proposition at a position is given by the observation's letter at that
position.
This is in contrast to \MTLdata, where atomic formulas can have free
variables and their truth values at the positions in an
observation~$w$ may depend on the data values stored in the registers
and frozen to these variables at different time points of $w$.  Before
truth values are propagated up,  the bindings of variables to data
values must be propagated down.

%%% Local Variables:
%%% mode: latex
%%% TeX-master: "main"
%%% End:
      
\section{Monitoring Algorithm}
\label{sec:monitor}

In this section, we present an online algorithm that computes verdicts
for \MTLdata specifications.  To support scalable monitoring, the
computation is incremental in that, when refining an observation
according to the
transformations~(T\ref{enum:observation_split})--(T\ref{enum:observation_data}),
the results from previous computations are reused, including the
propagated data values and Boolean values.  We also define correctness
requirements for monitoring and establish the algorithm's correctness.

\subsection{Correctness Requirements}
\label{subsec:requirements}

We define when a sequence of observations is valid for representing a
monitor's knowledge over time.  We assume that the monitor receives
in the limit infinitely many messages containing information about the system
behavior.  This assumption is invalid if the system ever terminates.
Nevertheless, we make this assumption to simplify matters and
it is easy to adapt the definitions and results to the general case.
\begin{definition}
  The infinite sequence $\bar{w} = (w_i)_{i\in\Nat}$ of observations
  is \emph{valid} if $w_0 = ([0,\infty), [\,], [\,])$ and
  $w_i\sqsubsetneq w_{i+1}$, for all $i\in\Nat$.
\end{definition}

Let $M$ be a monitor and $\bar{w}$ a valid sequence of observations.
In the following, we view $w_i$ as the input to $M$ at iteration $i$.
For the input $w_i$, $M$ outputs a set of \emph{verdicts}, which is a
finite set of pairs $(\tau,b)$ with $\tau\in\Qpos$ and $b\in\Two$. We
denote this set by $M(w_i)$.
Note that in practice, $M$ would receive at iteration $i>0$ a message
that describes just the differences between $w_{i-1}$ and $w_i$.
Furthermore, the $w_i$s can be understood as abstract descriptions of
$M$'s states over time, representing $M$'s knowledge about the system
behavior, where $w_0$ represents $M$'s initial knowledge.  Also note
that if the timed word $v$ is the system behavior in the limit, then
$w_i\sqsubseteq v$, for all $i\in\Nat$, assuming that components do
not send bogus messages. However, for every $i\in\Nat$, there are
infinitely many timed words~$u$ with $w_i\sqsubseteq u$.  Since
messages sent to the monitor can be lost, it can even be the case that
there are timed words $u$ with $u\not=v$ and $w_i\sqsubseteq u$, for
all $i\in\Nat$.

\begin{definition}
  \label{def:soundness_completeness-observation}
  Let $M$ be a monitor, $\phi$ a formula, and $\bar{w}$ a valid
  observation sequence.
  \begin{itemize}[--]
  \item $M$ is \emph{observationally sound} for $\bar{w}$ and $\phi$
    if for all partial valuations~$\nu$ and $i\in\N$, if
    $(\tau,b)\in M(w_i)$ then $\eosem{w_i,\tau,\nu}{\phi}=b$.
  \item $M$ is \emph{observationally complete} for $\bar{w}$ and
    $\phi$ if for all partial valuations~$\nu$, $i\in\N$, and
    $\tau\in\Q_{\geq0}$, if $\eosem{w_i,\tau,\nu}{\phi}\in\Two$
    then $(\tau,b)\in\bigcup_{j\leq i}M(w_j)$, for some $b\in\Two$.
  \end{itemize}
  We say that $M$ is \emph{observationally sound} if $M$ is
  observational sound for all valid observation sequences and formulas
  $\phi$. The definition of $M$ being \emph{observationally complete}
  is analogous.
\end{definition}

It follows from Theorem~\ref{thm:decidable} that there exist monitors
for \MTLdata that are both observationally sound and complete.
This is in contrast to correctness requirements that demand that a
monitor outputs a verdict as soon as the specification has the same
Boolean value on every extension of the monitor's current knowledge.
It is easy to see that, for a given specification language, such monitoring
is at least as hard as checking satisfiability for the
language.  The propositional fragment of \MTLdata is already
undecidable~\cite{OuaknineW06}.
Thus monitors satisfying such strong requirements do not exist for
\MTLdata.
For LTL, such stronger requirements are standardly formalized using a
three-valued “runtime-verification” semantics, as introduced by Bauer
et al.~\cite{Bauer_etal:rv_tltl}, and adopted by other
runtime-verification approaches, e.g.~\cite{Bauer-FMSD15}.  See
Appendix~\ref{subapp:req_bool} for a formal definition of these
requirements in our setting.

\begin{example}
  Consider the formula $\phi=\always(p \wedge \eventually \neg p)$.
  Under the classical Boolean semantics, $\phi$ is logically
  equivalent to $\false$, however not under our semantics. For
  example, $\osem{w,0,\nu}{\phi}=\bot$, for
  $w=\big([0,\infty),\emap,\emap\big)$ and any valuation~$\nu$.
  Given a valid observation sequence~$\bar{w}$, an observationally
  sound and complete monitor for $\bar{w}$ and $\phi$ will first
  output the verdict $(0,\false)$ for the minimal $i$ such that $w_i$
  contains a letter that assigns $p$ to false.  \qed
\end{example}

\subsection{Monitoring Algorithm}
\label{subsec:alg}

We sketch the algorithm's state, its main procedure, and its main data
structure. We provide further algorithmic details in
Appendix~\ref{app:pseudocode}.

\subsubsection{Monitor State.}
\label{subsubsec:state}

Before explaining the algorithm, we first rephrase the \MTLdata's
semantics such that it is closer to the representation used by the
monitor.
Given an $i\in\Nat$, a position $j\in\position(w_i)$, and a subformula
$\gamma$ of $\phi$, we denote by~$\fPhi{i}{\gamma,J_j}$, where $J_k$
is the interval of the $k$th letter of $w_i$, the
propositional~formula:
\begin{equation*}
  \fPhi{i}{\gamma,J_j} := \left\{
    \begin{array}{l@{\quad}l}
      \gamma^{J_j} & \text{if $\gamma$ is atomic} %$\gamma=p(\bar{x})$ or $\gamma=\true$}
      \\
      \neg\alpha^{J_j} & \text{if $\gamma=\neg\alpha$}
      \\
      \alpha^{J_j}\lor\beta^{J_j} & \text{if $\gamma=\alpha\lor\beta$}
      \\
      \alpha^{J_j} & \text{if $\gamma=\freeze{r}{x}{\alpha}$}
      \\
      \bigvee_{k\geq j}\big(\tpp^{J_k} \land \tcp^{J_k,J_j}_\gamma \land \beta^{J_k} \land
      \bigwedge_{j\leq h<k}(\tpp^{J_h} \to \alpha^{J_h})
      \big)
      & 
      \text{if $\gamma=\alpha\until_I\beta$},
      % \\
      % \qquad\quad \land \bigwedge_{j\leq h<k}(\tpp^{J_h} \to \alpha^{J_h})) \land (\tpp^{J_k} \lor \alpha^{J_k})\big)
      % & 
    \end{array}
    \right.
\end{equation*}
where $\alpha^{K}$, $\tpp^{K}$, and $\tcp^{H,K}_\psi$ denote atomic
propositions, for each proper subformula~$\alpha$ of $\phi$, each
temporal subformula $\psi$ of $\phi$, and all intervals $H,K$ of
letters in $w_i$.
Next, we define, for any partial valuation~$\nu$, the
substitution~$\theta_i^\nu$ of Boolean values for these atomic
propositions as follows:
\[\begin{array}{rl@{\quad}l}
  \theta_i^\nu(\alpha^{J_j}) := & \osem{w_i,j,\nu}{\alpha} & \text{if $\osem{w_i,j,\nu}{\alpha}\in\Two$},
  \\
  \theta_i^\nu(\tpp^{J_j}) := & \istp_{w_i}(j) & \text{if $\istp_{w_i}(j)\in\Two$},
  \\
  \theta_i^\nu(\tcp_{\alpha\until_I\beta}^{J_j,J_k}) := & \tc_{w_i,I}(j,k) & \text{if $\tc_{w_i,I}(j,k)\in\Two$},
\end{array}\]
and $\theta_i^\nu$ is undefined otherwise.
In what follows, the symbol $\equiv$ denotes semantic equivalence
between propositional formulas.
It is easy to see that
$$
\theta_i^\mu(\fPhi{i}{\gamma,J_j}) \equiv \osem{w_i,j,\nu}{\gamma}
\quad \text{iff} \quad 
\osem{w_i,j,\nu}{\gamma}\in\Two,
$$
where $\mu = \nu[x\mapsto \rho_j(r)]$ if
$\gamma=\freeze{r}{x}{\alpha}$ and $\mu=\nu$ otherwise, with $\rho_j$
being the third component of the $j$th letter of $w_i$.
Note that the formula $\theta_i^\mu(\fPhi{i}{\gamma,J_j})$ tells us
more than the truth value $\osem{w_i,j,\nu}{\gamma}$. Indeed, when
$\theta_i^\mu(\fPhi{i}{\gamma,J_j})\not\equiv b$, for each $b\in\Two$,
then we know not only that $\osem{w_i,j,\nu}{\gamma}=\bot$, but we
also know what the causes of uncertainty are, namely the direct
subformulas~$\alpha$ of $\gamma$ and indexes~$k$ with
$\osem{w_i,k,\mu}{\alpha}=\bot$.

The monitor maintains as state between its iterations a variant of
the propositional formulas $\theta_i^\mu(\fPhi{i}{\gamma,J_j})$.
The reason for using variants is
that it is not algorithmically convenient to transform
$\theta_i^\mu(\fPhi{i}{\gamma,J})$ into
$\theta_{i+1}^\mu(\fPhi{i+1}{\gamma,K})$, where $K$ is an interval (of a letter) in $w_{i+1}$
that originates from the interval $J$ in $w_i$.
Such a transformation is needed for obtaining an incremental
monitoring algorithm that reuses information already
computed at previous iterations.

The formulas that the monitors maintains, denoted
$\fPsi{i}{\gamma,J_j,\nu}$, can be obtained from the formulas
$\theta_i^\mu(\fPhi{i}{\gamma,J_j})$ as follows.
When $\gamma$ is a nontemporal formula, then
$\fPsi{i}{\gamma,J_j,\nu}$ equals
$\theta_i^\mu(\fPhi{i}{\gamma,J_j})$.
When $\gamma$ is a temporal formula $\alpha\until_I\beta$, then, to
each disjunct for index~$k$ in $\fPhi{i}{\gamma,J_j}$, we add the
subformula $(\tpp^{J_k} \lor \alpha^{J_k})$ as a conjunct. This is
sound, based on the equivalence
$\tpp^{J_k}\equiv \tpp^{J_k} \land (\tpp^{J_k} \lor \alpha^{J_k})$.
Furthermore, the monitor treats the subformulas
$(\tpp^{J_k} \land \beta^{J_k})$, $(\tpp^{J_h} \to \alpha^{J_h})$, and
$(\tpp^{J_k} \lor \alpha^{J_k})$ in a special way: they are not
simplified in $\fPsi{i}{\gamma,J,\nu}$ when they are still needed to
obtain~$\fPsi{i+1}{\gamma,K,\nu}$. That is, even if one the atomic
propositions~$q$ of these subformulas could be instantiated
(i.e. $q\in\pdef(\theta_i^\mu)$) this is not always done, as explained
in the next section.
Instead, these three types of subformulas are represented in
$\fPsi{i}{\gamma,J_j,\nu}$ by the atomic
propositions~$\bar\beta^{J_k}$, $\bar{\alpha}^{J_h}$, and
$\bar{\bar{\alpha}}^{J_k}$, respectively.

\begin{example}\label{ex:alphaK}
  We illustrate here the definitions of the propositional formulas
  $\fPhi{i}{\gamma,J,\nu}$ and $\fPsi{i}{\gamma,J,\nu}$ for temporal
  formulas~$\gamma$. We also suggest why variants of
  the formulas $\theta_i^\mu(\fPhi{i}{\gamma,J_j})$ are needed.

  Let $\gamma = p \until q$, where $p$ and $q$ are $0$-ary
  predicates. Assume that in $w_1$ we have the intervals
  $L=\co{0,\tau_1}$, $N=\set{\tau_1}$, and $R=(\tau_1,\infty)$, and in
  $w_2$ we have the intervals $L_1=\co{0,\tau_0}$, $L_2=\set{\tau_0}$,
  $L_3=(\tau_0,\tau_1)$, $N$, and $R$, with $\tau_0\in L$.
  Assume also that neither $p$ nor $q$ holds at $\tau_1$. Then
  \[
  \begin{array}{rl@{\qquad}rl}
    \theta_1^{\emap}(\fPhi{1}{\gamma,L}) & \equiv \tpp^L \land q^L
    & 
    \theta_2^{\emap}(\fPhi{2}{\gamma,L_2}) & \equiv q^{L_2} \lor (\tpp^{L_3} \land q^{L_3} \land p^{L_2}) 
    \\[0.5ex]
    \fPsi{1}{\gamma,L,\emap} & = \bar{q}^L \land \bar{\bar{p}}^L
    &
    \fPsi{2}{\gamma,{L_2},\emap} & = \bar{q}^{L_2} \lor (\bar{q}^{L_3} \land \bar{\bar{p}}^{L_3} \land \bar{p}^{L_2})
  \end{array}
  \]
  Note that $p^L$ is not an atomic proposition
  of~$\fPhi{1}{\gamma,L}$, while $\bar{\bar{p}}^L$ is an atomic
  proposition of~$\fPsi{1}{\gamma,L,\emap}$.
  This last fact allows the monitoring algorithm to obtain
  $\fPsi{2}{\gamma,{L_2},\emap}$ from $\fPsi{1}{\gamma,{L},\emap}$, by
  introducing the needed new propositions $\bar{p}^{L_2}$, $\bar{p}^{L_3}$, and~$\bar{\bar{p}}^{L_2}$.
  \qed
\end{example}

To recapitulate, the monitor's state at iteration~$i$ consists of
propositional formulas~$\fPsi{i}{\gamma,J,\nu}$, one for each
subformula $\gamma$ of $\phi$, interval $J$ occurring in a letter of
$w_i$, where $i$ is the current iteration, and partial valuation $\nu$
that is \emph{relevant} for the current subformula and position
corresponding to $J$ in~$w_i$.
Intuitively, a valuation~$\nu$ is relevant for $\psi$ and a position
$j\in\pos(w_i)$,  if $\osem{w_i,j,\nu}{\psi}$ is reached when unfolding
the formula that defines $\osem{w_i,k,\emap}{\phi}$, for some
$k\in\pos(w_i)$.\footnote{We consider here that the formulas defining
  the semantics are first simplified. E.g., assuming that
  $\osem{w_i,j,\nu}{\alpha\until_I\beta}$ is reached, $k\in\pos(w_i)$,
  and $k\geq j$, if ${\tc_{w_i,I}(k,j)=\false}$, then
  $\osem{w_i,k,\nu}{\beta}$ is not reached, otherwise
  (i.e.~$\tc_{w_i,I}(k,j)\neq\false$) it is reached.}
For instance, $\emap$ is relevant for $\phi$ and any
$j\in\pos(w_i)$. Furthermore, if $\nu$ is relevant for
$\freeze{r}{x}\psi$ and $j$, then $\nu[x\mapsto\rho_j(r)]$ is relevant
for $\psi$ and $j$.
 
\begin{example}\label{ex:freeze}
    Let $\phi := \freeze{r}{x}\eventually_{(0,1]} p(x)$.  For brevity,
    we treat the temporal connective $\eventually_{(0,1]}$ as a
    primitive.
    Also, for readability, we let $\alpha := \eventually_{(0,1]} p(x)$
    and $\beta := p(x)$.
    Consider an observation $w_1$ that has the same interval structure as in the
    previous example and the second letter is $(\tau_1,\sigma,\rho)$
    with $\rho(r)=d$ for some data value $d$ and $p\notin\pdef(\sigma)$.
    The monitor's state for $w_1$ consists of the formulas:
    \[\begin{array}{rl@{\qquad\ }rlr}
     \fPsi{1}{\phi,K,\emap} & = \alpha^{K},\ \text{for any $K\in\set{L,N,R}$},
     &
     \fPsi{1}{\alpha,L,\emap} & =
     \bar\beta^{L} \lor \bar\beta^{N} \lor \bar\beta^{R},
     &
     \\
     \fPsi{1}{\beta,K,\emap} & = \beta^{K},\ \text{for any $K\in\set{L,N,R}$},
     &
     \fPsi{1}{\alpha,N,[x\mapsto d]} & = \bar\beta^{R},
     &
     \\
     \fPsi{1}{\beta,R,[x\mapsto d]} & = \beta^{R},
     &   
     \fPsi{1}{\alpha,R,\emap} & = \bar\beta^{R}.  
     & % \text{\qed}
    \end{array}\]
  Note that there are two relevant valuations for $\beta$ and
  position~2 (which is the position of the interval $R$ in $w_1$),
  namely $\emap$ and $[x\mapsto d]$. This follows from the definition
  and it corresponds to the fact that $\bar\beta^R$ is an atomic proposition
  of a formula both of the form $\fPsi{1}{\alpha,K,\emap}$ (namely,
  when $K\in\set{L,R}$) and of the form
  $\fPsi{1}{\alpha,K,[x\mapsto d]}$ (namely, when $K=N$).
  \qed
\end{example}

\subsubsection{Main Procedure.}
\label{subsubsec:main}

The monitor's pseudocode is shown in Listing~\ref{fig:monitor}.
After initializing the monitor's state, the monitor loops. In each
loop iteration, the monitor receives a message, updates its state
according to the information extracted from the message, and outputs
the computed verdicts.

We assume that each received message describes a new time point in an
observation, i.e., a letter of the form~$(\{\tau\},\sigma,\rho)$.  
Furthermore, we assume that each received message~$m$ contains
information that identifies the \emph{component} that has sent the
message to the monitor and a \emph{sequence number}, i.e., the number
of messages, including $m$, that the component has sent to the monitor
so far.
Using this information, the monitor can detect \emph{complete}
intervals, i.e., the non\-singleton intervals that do not contain the
timestamp of any message that the monitor processes in later
iterations.
Thus, the received messages describe the ``deltas'' of a valid
observation sequence (cf.~Section~\ref{subsec:requirements}), where
the next observation is obtained from the previous one by applying
transformation~(T1), followed by~(T3), possibly followed by several
applications of~(T2).

\begin{listing}[t]
  \centering
  \begin{minipage}[t]{0.15\linewidth}
    \caption{}
    %\caption{The monitor.}
    \label{fig:monitor}
  \end{minipage}
  \quad
  \begin{minipage}[t]{0.8\linewidth}
\begin{lstlisting}
procedure Monitor($\phi$)
  Init($\phi$)
  loop
    $m$ $\leftarrow$ NewMessage()
    $\tau$, $\sigma$, $\rho$, comp, seq_num := Parse($m$)
    $J$, new := Split($\tau$, comp, seq_num)
    NewTimePoint($\phi$, $J$, new)
    foreach $\freeze{r}{x}{\psi}$ in $\sub(\phi)$ with $r\in\pdef(\rho)$ do
      PropagateDown($\psi$, $\set{\tau}$, $x$, $\rho(r)$)
    foreach $\fPsi{}{p(\bar{x}),\set{\tau},\nu}$ $\neq$ nil with $\bar{x}\in \pdef(\phi), p\in\pdef(\sigma)$ do
      $b$ := ($\nu(\bar{x})\in \sigma(p)$)
      $\fPsi{}{p(\bar{x}),\set{\tau},\nu}$ := $b$
      PropagateUp($p(\bar{x})$, $\set{\tau}$, $b$)
    NewVerdicts()
\end{lstlisting}
  \end{minipage}
\end{listing}

With the procedure \ls{NewMessage}, the monitor receives a new
message, for instance over a channel or a log file.
Next, the monitor parses the message to recover the corresponding
letter~$(\{\tau\},\sigma,\rho)$, the component, and the sequence
number.
Afterwards, using the procedure~\ls{Split}, the monitor determines
the interval~$J$ that is split (namely, the one
where $\tau\in J$) and the resulting new, incomplete intervals,
stored in the sequence~\ls{new}. Concretely, the intervals in~\ls{new}
consist of those intervals among $J\cap\co{0,\tau}$, $\set{\tau}$, and
$J\cap(\tau,\infty)$ that are not complete. Note that \ls{new}
contains at least the singleton~$\set{\tau}$.
The detection of complete intervals by the \ls{Split} procedure is
done in the same manner as in~\cite{Basin_etal:failureaware_rv}.

The remaining pseudocode updates the monitor's state to reflect
the new observation. It first transforms formulas
$\fPsi{}{\gamma,K,\nu}$ so that they reflect the interval structure of
the new observation, with \ls{NewTimePoint}.
Afterwards, the monitor propagates the new data values down (the
formula $\phi$'s syntax tree) with \ls{PropagateDown}, and propagates newly
obtained Boolean values up with \ls{PropagateUp}.
The procedures \ls{NewTimePoint} and \ls{PropagateUp} are conceptually
similar to analogous procedures given
in~\cite{Basin_etal:failureaware_rv}, although the
formulas~$\fPsi{}{\gamma,K,\nu}$ were implicit
in~\cite{Basin_etal:failureaware_rv}.
We outline next these three procedures and give their pseudocode in
Appendix~\ref{app:pseudocode}.
Finally, the monitor reports the verdicts computed during the current
iteration by calling the procedure \ls{NewVerdicts}. 

In the rest of the section, we use the convention that whenever
$\gamma$ or $\nu$ are not specified in a formula
$\fPsi{}{\gamma,J_j,\nu}$ then we assume they are an arbitrary
subformula of~$\phi$ and respectively an arbitrary partial valuation
that is relevant for $\gamma$ and~$j$.

\paragraph{Adding a New Time Point.}

The procedure \ls{NewTimePoint} builds new formulas
$\fPsi{}{\gamma,K,\nu}$ with $K\in\mathsf{new}$ from the corresponding
formulas $\fPsi{}{\gamma,J,\nu}$. It also updates all formulas
$\fPsi{}{\gamma,J,\nu}$ such that they use atomic propositions
$\alpha^K$ with $K\in\mathsf{new}$ instead of $\alpha^J$.
For nontemporal formulas~$\gamma$, the update is straightforward. For
instance, if $\gamma=\alpha\lor\beta$ and $\fPsi{}{\gamma,J,\nu} =
\beta^J$, then $\fPsi{}{\gamma,K,\nu} = \beta^K$, for each
$K\in\mathsf{new}$.
For temporal formulas~$\gamma$, the update is more involved, although it
can be performed easily by applying well-suited substitutions.
To illustrate the kind of updates that are needed, suppose for
example that $\gamma=\alpha\until_I\beta$ and that
$\fPsi{}{\gamma,J',\nu}$, for some $J'<J$, contains the atomic
proposition $\bar{\alpha}^J$. Then $\fPsi{}{\gamma,J',\nu}$ is updated
by replacing $\bar{\alpha}^J$ with the conjunct
$\bigwedge_{K\in\mathsf{new}}\bar\alpha^K$.
Finally, we note that formulas $\fPsi{}{\gamma,K,\nu}$ with $K\neq J$
and without atomic propositions $\alpha^J$ need not be updated.

\paragraph{Downward Propagation.}

Whenever a variable $x$ is frozen to a data value at time~$\tau$, the
procedure \ls{PropagateDown} updates the monitor's state to account for
this fact. Concretely, this value is propagated according to the
semantics through partial valuations to atomic formulas $p(\bar{y})$.
The propagation is performed by starting from formulas
$\fPsi{}{\freeze{r}{x}{\psi}, \{\tau\}, \mu}$ and recursively visiting
formulas $\fPsi{}{\alpha,K,\nu}$ with $\alpha$ a subformula of
$\psi$. For each visited formula, a new formula
$\fPsi{}{\alpha,K,\nu[x\mapsto \rho(r)]}$ is created, where the new
formula is simply a copy of~$\fPsi{}{\alpha,K,\nu}$.
Note that the old formula $\fPsi{}{\alpha,K,\nu}$ may still be
relevant in the future.
For instance, suppose a value $d$ is propagated from
$\fPsi{}{\eventually_I\beta,\set{\tau},\nu}$ to
$\fPsi{}{\beta,K,\nu}$, copying it to
$\fPsi{}{\beta,K,\nu[x\mapsto d]}$, and suppose also that
$\bar{\beta}^K$ is an atomic proposition in
$\fPsi{}{\eventually_I\beta,J',\nu}$. Then $\fPsi{}{\beta,K,\nu}$
might be used again later when another data value~$d'$ is propagated
downwards from $\fPsi{}{\eventually_I\beta,\set{\tau'},\nu}$ with
$\tau'\in J'$, to copy it to
$\fPsi{}{\beta,K,\nu[x\mapsto d']}$.

\paragraph{Upward Propagation.}

The procedure \ls{PropagateUp} performs the following update of the
monitor's state.
When a formula $\fPsi{}{\alpha,K,\mu}$ simplifies to a Boolean
value~$b$, then this Boolean value is propagated up the syntax tree of
$\phi$ as follows: $\alpha^K$ is instantiated to $b$ in every formula
$\fPsi{}{\gamma,J',\nu}$ that has $\alpha^K$ as an atomic proposition,
except when $\gamma$ is itself an atom of $\phi$. The formula is then
simplified (using rules like $z\lor\true\equiv\true$) and if it
simplifies to a Boolean value then propagation continues recursively.
Note that $\gamma$ is a parent of $\alpha$. When
$\fPsi{}{\phi,\set{\tau'},\emap}$ is simplified to a Boolean value
$b'$, then $(\tau',b')$ is marked as a new verdict.
Propagation starts from the atoms of $\phi$. The Boolean value $\true$
is propagated from the atom~$\true$ only once, in the \ls{Init}
procedure. 
For an atom~$\alpha=p(\bar{x})$, the monitor sets
$\fPsi{}{p(\bar{x}),\set{\tau},\mu}$ to a Boolean value, if possible,
according to the semantics, for all relevant valuations~$\mu$.

Recall that for temporal formulas $\gamma=\alpha\until_I\beta$, the
formula $\fPsi{}{\gamma,J',\nu}$ contains atomic propositions of the
form $\bar{\alpha}^K$, $\bar{\bar\alpha}^K$, and $\bar{\beta}^K$
instead of $\alpha^K$ and $\beta^K$. These atomic propositions are
treated specially: they are not instantiated when $K$ is not a
singleton and the value $b$ to be propagated is $\true$ for $\beta$
formulas and $\false$ for $\alpha$ formulas (otherwise they are
instantiated). This behavior corresponds to the meaning of these
atomic proposition given in Section~\ref{subsubsec:state}.
For instance, $\bar{\beta}^K$ stands for $\tpp^K\lor\beta^K$ and thus
it is not instantiated to $\true$ in $\fPsi{}{\gamma,J',\nu}$ when $K$
is not a singleton even when $\fPsi{}{\gamma,K,\nu}=\true$, because
the existence of a time point in $K$ is not guaranteed: it might turn
out that $K$ is a complete interval.
The propagation will be done later for singletons $\set{\tau'}$ with
$\tau'\in K$, if and when a message with timestamp $\tau'$ arrives.

\subsubsection{Data Structure.}
\label{subsubsec:datastructure}

We have not yet described the data structure used in our pseudocode,
which is needed for an efficient implementation.
The data structure that we use is similar to that described
in~\cite{Basin_etal:failureaware_rv}.
Namely, it is a directed acyclic graph.
The graph's \emph{nodes} are tuples of the form~$(\psi,J,\nu)$, where
$\psi$ is a subformula of~$\phi$, $J$ an interval, and $\nu$ a
partial valuation.
Each node $(\gamma,J,\nu)$ stores the associated propositional formula
$\fPsi{}{\gamma,J,\nu}$.
Nodes are linked via \emph{triggers}: a trigger of a
node~$(\alpha,K,\mu)$ points to a node~$(\gamma,J,\nu)$ if and only if
$\alpha^K$, $\bar{\alpha}^K$, or $\bar{\bar\alpha}^K$ is an atomic
proposition of $\fPsi{}{\gamma,J,\nu}$, $\gamma$ is a nonatomic
formula, and $\mu = \nu[x\mapsto \rho^i_j(r)]$ if
$\gamma=\freeze{r}{x}{\alpha}$ and $\mu=\nu$ otherwise.
Triggers are actually bidirectional: for any (outgoing) trigger
there is a corresponding ingoing trigger.

This data structure allows us to directly access, given a
formula~$\fPsi{}{\alpha,K,\mu}$, all the
formulas~$\fPsi{}{\gamma,J,\nu}$ that have $\alpha^K$ as an atomic
proposition.
Also, conversely, for any formula~$\fPsi{}{\gamma,J,\nu}$ the data
structure allows us to directly access the
formula~$\fPsi{}{\alpha,K,\mu}$ for any atomic proposition $\alpha^K$
of $\fPsi{}{\gamma,J,\nu}$.
These two operations are used for upward and downward propagation
respectively.
We note also that a node for which the associated propositional
formula has simplified to a Boolean value that has been propagated can
be deleted.

\begin{figure}[t]
  \centering
  \scalebox{0.8}{
    \begin{tikzpicture}[thick, x=1pt, y=1pt]
      \tikzstyle{node} = [draw, minimum height = 14pt, minimum width = 14pt]
      \tikzstyle{trigger} = [->]
      \tikzstyle{guard} = [draw, circle, fill, minimum size = 0pt, inner sep = 0pt]
      \newcommand{\x}{0}
      
      \node at (\x,80){$\freezeshort{x}\eventually_{(0,1]}p(x)$};
      \node at (\x,40) {$\phantom{\freezeshort{x}}\eventually_{(0,1]}p(x)$};
      \node at (\x,0) {$\phantom{\freezeshort{x}\eventually_{(0,1]}}p(x)$};

      \renewcommand{\x}{95}
      \node at (\x,-25) {(a) for $w_0$};

      \draw (\x-20,100) -- node[above] {$[0,\infty)$} ++ (40,0);

      \node[node] (Nfreeze) at (\x,80) {$\emap$};
      \node[node] (Neventually) at (\x,40) {$\emap$};
      \node[node] (Natomic) at (\x,0) {$\emap$};
      
      \node[guard] (Geventually) at (\x,33) {};
      \node[guard] (Gfreeze) at (\x,73) {};
      
      \draw[trigger] (Natomic) to [out=90, in=-90] (Geventually);        
      % \node at (\x+6,15) {\small $t_1$};
      \draw[trigger] (Neventually) to [out=90, in=-90] (Gfreeze);
      % \node at (\x+6,55) {\small $t_2$};

      \renewcommand{\x}{160}
      % \node at (\x-30,38) {\scalebox{2}{$\pmb\mapsto$}};      
      \node at (\x+50,-25) {(b) for $w_1$};

      \draw (\x-10,100) -- node[above] {$[0,\tau_1)$} ++ (40,0);
      \draw[|-|] (\x+50,100) -- node[above] {$\set{\tau_1}$} ++ (0.1,0);
      \draw (\x+70,100) -- node[above] {$(\tau_1,\infty)$} ++ (40,0);
      
      \node[node] (NfreezeL) at (\x+10, 80) {$\emap$};
      \node[node] (NeventuallyL) at (\x+10, 40) {$\emap$};
      \node[node] (NatomicL) at (\x+10, 0) {$\emap$}; 
      
      \node[node] (NfreezeM) at (\x+50, 80) {$\emap$};
      \node[node] (NeventuallyM) at (\x+50, 40) {$\nu$};
      \node[node] (NatomicM) at (\x+50, 0) {$\emap$};
      
      \node[node] (NfreezeR) at (\x+90, 80) {$\emap$};
      \node[node] (NeventuallyR) at (\x+90, 40) {$\emap$};
      \node[node] (NatomicR2) at (\x+94, 4) {};
      \node[node, fill=white] (NatomicR) at (\x+90, 0) {$\emap$};

      \node[guard] (GeventuallyL1) at (\x+4, 33) {};
      \node[guard] (GeventuallyL2) at (\x+10, 33) {};
      \node[guard] (GeventuallyL3) at (\x+16, 33) {};
      \node[guard] (GfreezeL) at (\x+10, 73) {};
      
      \node[guard] (GeventuallyM) at (\x+50, 33) {};
      \node[guard] (GfreezeM) at (\x+50, 73) {};
      
      \node[guard] (GeventuallyR) at (\x+90, 33) {};
      \node[guard] (GfreezeR) at (\x+90, 73) {};
      
      \draw[trigger] (NatomicL) to [out=90, in=-90] (GeventuallyL1); 
      \draw[trigger] (NatomicM) .. controls (\x+50,20)  and (\x+10,0) .. (GeventuallyL2);
      \draw[trigger] (NatomicR) to [out=90, in=-90] (GeventuallyR);        
      % \draw[trigger] (NatomicR2) to [out=100, in=-50] (\x+84,22) to [out=120, in=-90] (GeventuallyM);
      \draw[trigger] (NatomicR2) to [out=90, in=-90] (GeventuallyM);
      \draw[trigger] (NatomicR) .. controls (\x+70,30) and (\x+16,5) .. (GeventuallyL3);

      \draw[trigger] (NeventuallyL) to [out=90, in=-90] (GfreezeL);        
      \draw[trigger] (NeventuallyM) to [out=90, in=-90] (GfreezeM);        
      \draw[trigger] (NeventuallyR) to [out=90, in=-90] (GfreezeR);        
      
      % \node at (\x+2,15) {\small $t_1'$};
      % \node at (\x+4,55) {\small $t_2'$};
      
      % \node at (\x+35,7) {\small $t_3'$};
      % \node at (\x+44,55) {\small $t_4'$};
      
      % \node at (\x+74,10) {\small $t_5'$};
      % \node at (\x+76,33) {\small $t_6'$};
      % \node at (\x+97,15) {\small $t_7'$};
      % \node at (\x+84,55) {\small $t_8'$};
    \end{tikzpicture}}
  \caption{Graph structures.}
  \label{fig:datastruct}
\end{figure}
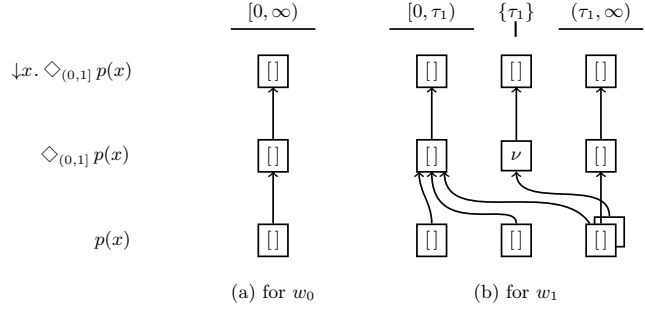
Figure~\ref{fig:datastruct} illustrates the data structure at the end
of iterations~0 and~1, that is, corresponding to the
observations~$w_0$ and~$w_1$, for the setting in
Example~\ref{ex:freeze}.
A box in the figure corresponds to a node of the graph structure,
where the node's formula is given by the row, the interval by the
column of the box, and the valuation by the content of the box. The
valuation of the hidden box in the lower right corner is
$\nu=[x\mapsto d]$, the same as the box in the middle of
Figure~\ref{fig:datastruct}(b).
Arrows correspond to triggers.

\subsubsection{Correctness.}

The following theorem establishes the monitor's correctness. 
We refer to Appendix~\ref{app:proof} for its proof.
\begin{theorem}
  \label{thm:correctness}
  Let $\bar{w}$ be the valid observation sequence derived from the
  messages received by {\normalfont\ls{Monitor}}.  Furthermore, let
  $\phi$ be a closed \MTLdata formula.
  {\normalfont\ls{Monitor($\phi$)}} is observationally complete and 
  sound for $\bar{w}$ and $\phi$.
\end{theorem}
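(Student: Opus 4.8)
The plan is to establish a loop invariant relating the monitor's state after iteration~$i$ to the three-valued semantics over~$w_i$, and then read soundness and completeness off from it. The invariant I would carry is: for every relevant triple $(\gamma,J_j,\nu)$ present in the data structure after processing $w_i$, the stored propositional formula $\fPsi{i}{\gamma,J_j,\nu}$ simplifies to a value $b\in\Two$ if and only if $\osem{w_i,j,\nu}{\gamma}=b$, and otherwise it is equivalent to no Boolean value, matching $\osem{w_i,j,\nu}{\gamma}=\unknown$. The first step is to make precise the semantic bridge already sketched in Section~\ref{subsubsec:state}, namely that $\theta_i^\mu(\fPhi{i}{\gamma,J_j})\equiv\osem{w_i,j,\nu}{\gamma}$ exactly when the right-hand side lies in~$\Two$, and that replacing $\fPhi$ by the variant $\fPsi$ (adding the conjuncts $\tpp^{J_k}\lor\alpha^{J_k}$ and encoding $\tpp^{J_k}\land\beta^{J_k}$, $\tpp^{J_h}\to\alpha^{J_h}$, $\tpp^{J_k}\lor\alpha^{J_k}$ by the barred atoms $\bar\beta^{J_k}$, $\bar\alpha^{J_h}$, $\bar{\bar\alpha}^{J_k}$) preserves this correspondence. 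This rests on the monotonicity of the Kleene connectives in Table~\ref{tab:connectives} and on the identity $\tpp^{J_k}\equiv\tpp^{J_k}\land(\tpp^{J_k}\lor\alpha^{J_k})$ noted in the text.

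I would then prove the invariant by induction on~$i$. The base case is \ls{Init}, which builds the formulas for $w_0=([0,\infty),[\,],[\,])$; here the invariant is straightforward since the only Boolean-valued atom at $w_0$ is the constant~$\true$, so a node is Boolean precisely when $\gamma$ is a Boolean combination of~$\true$. For the inductive step I would treat the three state updates separately. For \ls{NewTimePoint}, which realizes transformation~(T1), I would show that substituting the atoms $\alpha^K$ for $\alpha^J$ (and, for temporal $\gamma$, replacing $\bar\alpha^J$ by $\bigwedge_{K\in\mathsf{new}}\bar\alpha^K$) turns $\fPsi{i}{\gamma,J,\nu}$ into $\fPsi{i+1}{\gamma,K,\nu}$, using that splitting an interval leaves the truth value of every atom unchanged at the positions it is split into. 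For \ls{PropagateDown}, I would show that copying $\fPsi{}{\alpha,K,\nu}$ to $\fPsi{}{\alpha,K,\nu[x\mapsto\rho(r)]}$ exactly mirrors the clause $\osem{w,i,\nu}{\freeze{r}{x}\phi}=\osem{w,i,\nu[x\mapsto\rho_i(r)]}{\phi}$ and that the copies created are precisely the relevant valuations. For \ls{PropagateUp}, I would verify that instantiating $\alpha^K$ to a newly obtained Boolean value and simplifying propagates Boolean values correctly up the formula structure, and that the uninstantiation convention for the barred atoms faithfully encodes the three Kleene subterms: e.g.\ $\bar\beta^K$ is left uninstantiated exactly when $K$ is a non-singleton and $\osem{w,k,\nu}{\beta}=\true$, the case in which $\tpp^K\land\beta^K=\unknown\land\true=\unknown$.

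From the invariant the theorem follows. For soundness, a verdict $(\tau,b)$ is emitted only when $\fPsi{i}{\phi,\{\tau\},\emap}$ simplifies to~$b$; by the invariant $\osem{w_i,j,\emap}{\phi}=b$ for the time point $j$ with $\timestamp_{w_i}(j)=\tau$, and since $\phi$ is closed the valuation is irrelevant, so $\eosem{w_i,\tau,\nu}{\phi}=b$ for all~$\nu$. For completeness, if $\eosem{w_i,\tau,\nu}{\phi}\in\Two$ then $\tau$ is the timestamp of some time point~$j$ and, again using that $\phi$ is closed, $\osem{w_i,j,\emap}{\phi}\in\Two$; by the invariant $\fPsi{i}{\phi,\{\tau\},\emap}$ has simplified to a Boolean value. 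I would then argue that \ls{PropagateUp} marks the corresponding verdict at the first iteration $i'\leq i$ at which this simplification occurs, since propagation is triggered along every edge whose source has just become Boolean, and Theorem~\ref{thm:monotonicity} guarantees the value never changes afterward; hence $(\tau,b)\in\bigcup_{j\leq i}M(w_j)$.

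The step I expect to be the main obstacle is the \ls{PropagateUp} case of the inductive step, specifically justifying the deliberately incomplete instantiation of the barred atoms. I must show at once that withholding a $\true$-instantiation of $\bar\beta^K$ for a non-singleton~$K$ never causes a missed verdict (completeness) and never yields an unsound one (soundness): the same formula has to stay correct both if $K$ is later split to expose a time point and if $K$ turns out to be a complete interval with no time point. Reconciling this reuse with the $\tpp_w$-downgrading built into the $\until$ semantics, and checking that the substitutions of \ls{NewTimePoint} commute with semantic evaluation across a split, is where the bulk of the case analysis lies. A secondary obstacle is the bookkeeping of relevant valuations---ensuring the node set is exactly the triples reachable by unfolding $\osem{w_i,k,\emap}{\phi}$---which completeness needs so that no required node $\fPsi{i}{\phi,\{\tau\},\emap}$ is ever absent from the state.
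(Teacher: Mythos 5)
Your proposal follows essentially the same route as the paper's proof: Appendix~\ref{app:proof} establishes exactly your invariant---formalized as the key lemma $\theta_i^\mu(\fPhi{i}{\psi,J}) \equiv \deltap_i^\mu(\fPsi{i}{\psi,J,\nu})$ for all subformulas $\psi$, positions, and relevant valuations $\nu\in\Val_i(\psi,j)$---by nested induction on the iteration~$i$ and the formula structure, organizing the inductive step as a chain of commutation equivalences between the rewriting and propagation substitutions, and then reads off observational soundness and completeness just as you do (including the trace-back to the first iteration at which the singleton $\set{\tau}$ exists and its formula is Boolean). The only notable presentational difference is that the paper carries the stronger formula-level equivalence (barred atoms expanded via $\deltap_i^\mu$) as the invariant itself, since the Boolean-level biconditional you state at the outset is not inductive on its own; but your ``semantic bridge'' and your planned checks that the \lsf{NewTimePoint} substitutions commute with semantic evaluation amount to precisely this strengthening.
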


An important property class in monitoring are safety properties. We
note that our monitor is not limited to formulas of this class, and
the monitor is sound and observationally complete for any formula.
For instance, for the formula $\phi = \always\eventually p$, which
states that $p$ is true infinitely often, the monitor will never
output a verdict, as expected. It is nevertheless observationally
complete for any valid observation sequence $\bar{w}$, since
$\eosem{w_i,\tau,\nu}{\phi}=\bot$, for any $i\in\Nat$, $\tau\in\Qpos$,
and partial valuation~$\nu$.

Besides correctness requirements, time and space requirements are
also important. Concerning time requirements, recall that the
underlying decision problem is PSPACE-complete, see
Theorem~\ref{thm:decidable}. Concerning space requirements, note
that space cannot be bounded even in the setting without message loss
and with in-order delivery. Indeed, consider the formula
$\always \freezeshort{x} p(x) \to \always_{(0,\infty)} \neg p(x)$
stating that the parameter of $p$ events are fresh at each time point.
Any monitor must store the parameters seen.
Further investigation of the time and space complexity of the
monitoring procedure is left for future work.

\section{Experiments}
\label{sec:casestudy}

We have implemented our monitor in a prototype tool, written in the
programming language Go.  Our tool either reads messages from a log
file or over a UDP socket.  Our experimental
evaluation focuses on the prototype's performance in settings with
different message orderings.

\paragraph{Setup.}

We monitor the formulas in
Figure~\ref{fig:specifications}, which vary in their temporal
requirements and the data involved. They express compliance policies
from the banking domain and are variants of policies that have been
used in previous case studies~\cite{Basin_etal:rv_mfotl}.
\begin{figure}[t]
  \centering
  \footnotesize
  \scalebox{.8}{
    \begin{minipage}{15.2cm}
  \begin{gather}
    \hspace{-.5cm}
    \tag{P1}
    \always 
    \freeze{cid}{c} \freeze{tid}{t} \freeze{sum}{a} 
    \mathit{trans}(c,t,a)\wedge a>2000 
    \rightarrow \eventually_{[0,3]} \mathit{report}(t)
    \\
    \hspace{-.5cm}
    \tag{P2}
    \always 
    \freeze{cid}{c} \freeze{tid}{t} \freeze{sum}{a}{}
    \mathit{trans}(c,t,a) \land a>2000\to
    \always_{(0,3]}\freeze{tid}{t'} \freeze{sum}{a'} 
    \mathit{trans}(c,t',a') \rightarrow a'\leq 2000
    \\
    \hspace{-.5cm}
    \tag{P3}
    \always 
    \freeze{cid}{c}{} \freeze{tid}{t} \freeze{sum}{a}{}
    \mathit{trans}(c,t,a) \land a>2000\to
    \big((\freeze{tid}{t'} \freeze{sum}{a'} \mathit{trans}(c,t',a')\rightarrow t=t')
    \weakuntil \mathit{report}(t)\big)
    \\
    \hspace{-.5cm}
    \tag{P4}
    \always 
    \freeze{cid}{c}{} \freeze{tid}{t} \freeze{sum}{a}{}
    \mathit{trans}(c,t,a) \land a>2000\to
    \always_{[0,6]}
    \freeze{tid}{t'} \freeze{sum}{a'} \mathit{trans}(c,t',a')
    \to \eventually_{[0,3]} \mathit{report}(t') 
  \end{gather}
    \end{minipage}}
  \caption{\MTLdata formulas used in the experimental evaluation.}
  \label{fig:specifications}
\end{figure}
Furthermore, we synthetically generate log files.  Each log spans
over 60 time units (i.e., a minute) and contains one event per time
point.  The number of events in a log is determined by the
\emph{event rate}, which is the approximate number of events per time
unit (i.e., a second).  For each time point~$i$, with $0\leq i<60$, the
number of events with a timestamp in the time interval $[i,i+1)$ is
randomly chosen within ±10\% of the event rate.  The events and their
parameters are randomly chosen such that the number of violations is
in a provided range.  For instance, a log with event rate 100
comprises approximately 6000 events.  Finally, we use a standard
desktop computer with a 2.8Ghz Intel Core i7 CPU, 8GB of RAM, and the
Linux operating system.

\begin{figure}[t]
  \centering
  \subfigure[in-order]{\includegraphics[scale=.20]{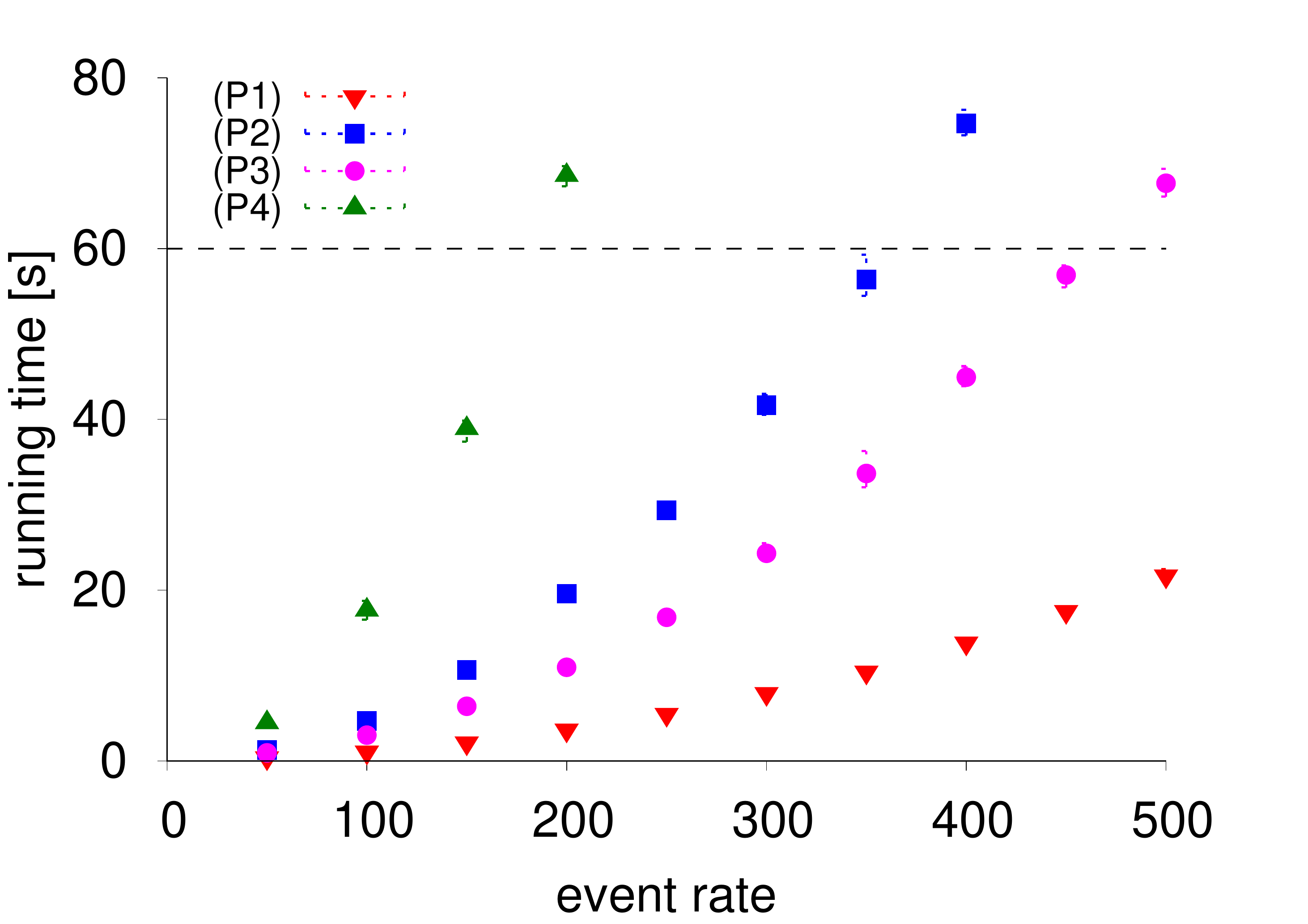}}
  \subfigure[out-of-order]{\includegraphics[scale=.20]{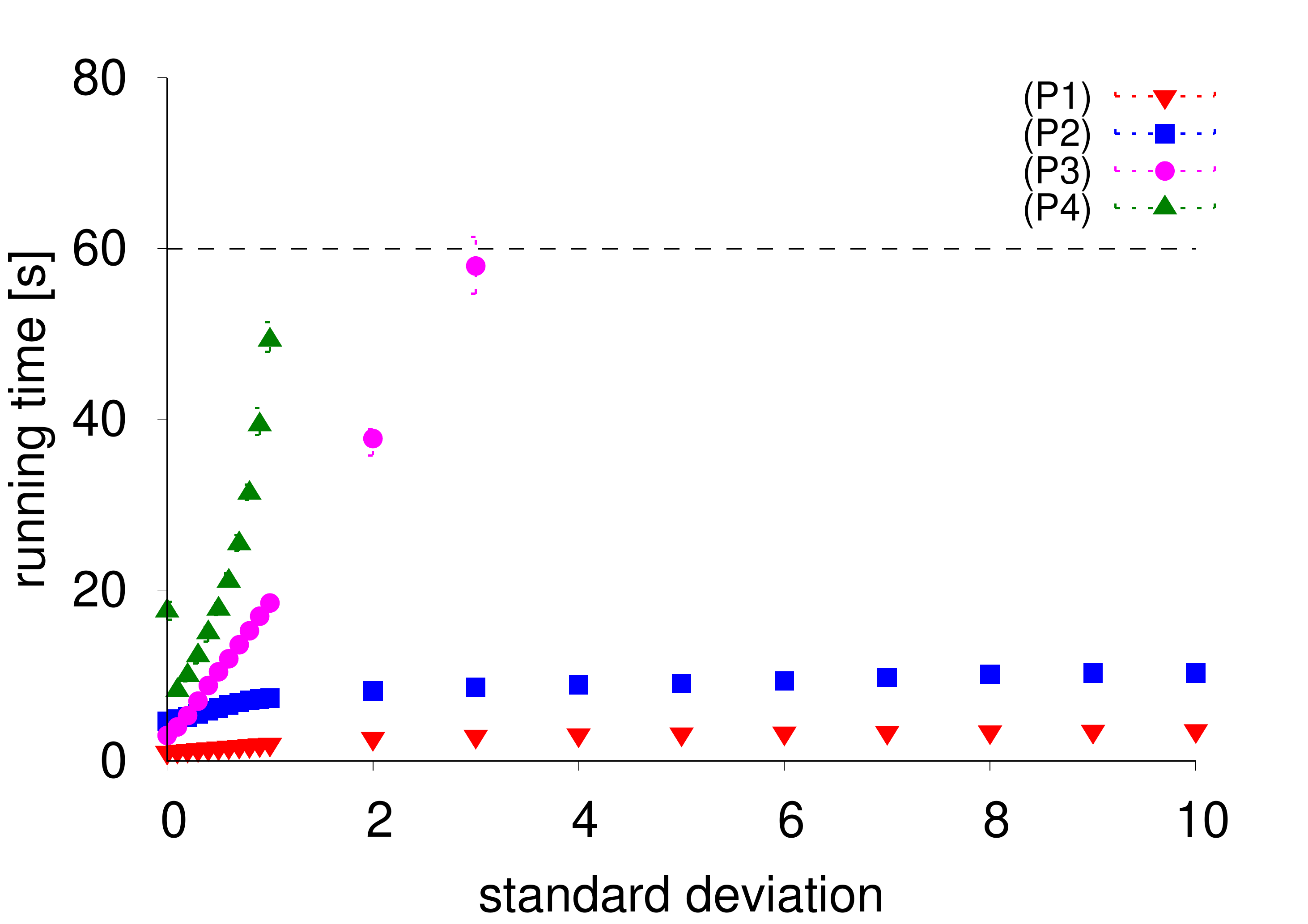}}
  \caption{Running times (where each data point shows the mean over five
    logs together with the minimum and maximum, which are very close
    to the mean).}
  \label{fig:performance}
\end{figure}

\paragraph{In-order Delivery.}

In our first setting, messages are received ordered by their
timestamps and are never lost. Namely, all events of the log are
processed in the order of their timestamps.
Figure~\ref{fig:performance}(a) shows the running times of our
prototype tool for different event rates.  Note that each log spans 60
seconds and a running time below 60 seconds essentially means that the
events in the log could have been processed online.  
The dashed horizontal lines mark this border.

\paragraph{Out-of-order Delivery.}

In our second setting, messages can arrive out of order but they are
not lost. We control the degree of message arrival disruption as
follows.
For the events in a generated log file, we choose their \emph{arrival
  times}, which provide the order in which the monitor processes them.
The arrival time of an event is derived from the event's timestamp by
offsetting it by a random delay with respect to the normal
distribution with a mean of 10 time units and a chosen standard
deviation.  In particular, for an event's timestamp~$\tau$ and for a
standard deviation~$\sigma>0$, it holds that an arrival time $\tau'$
is in the interval $[\tau+10-\sigma,\tau+10+\sigma]$ with
probability~$0.68$ and in $[\tau+10-2\sigma,\tau+10+2\sigma]$ with
probability~$0.95$.  For the degenerate case $\sigma=0$, the reordered
log is identical to the original log.
We remark that the mean value does not impact the event reordering
because it does not influence the difference between arrival times.

Figure~\ref{fig:performance}(b) shows the prototype's running times on
logs with the fixed event rate 100 for different deviations.  For
instance, for (P1), the logs are processed in around $1$~second when
$\sigma=0$ and in $3.5$~seconds when $\sigma=10$.

\paragraph{Interpretation.}

The running times are nonlinear in the event rate for all four
formulas.  This is expected from Theorem~\ref{thm:decidable}.  The
growth is caused by the data values occurring in the events.  A log
with a higher event rate contains more different data values and the
monitor's state must account for those.
As expected, (P1) is the easiest to monitor. It has only one block of
freeze quantifiers.  Note that (P1)--(P3) have two temporal
connectives, where one is the outermost connective~$\always$, which is
common to all formulas, whereas (P4) has an additional nesting of
temporal connectives. The time window is also larger than in~(P1)
and~(P2).

Also expected, the running times increase when messages are received
out of order.  Again, (P4) is worst. For (P1) and (P2), however, the
growth rate decreases for larger standard deviations. This is because,
as the standard deviation increases, all the events within the
relevant time window for a given time point arrive at the monitor in
an order that is increasingly close to the uniformly random one. The
running times thus stabilize.  Due to the larger time window, this
effect does not take place for (P4).  The running times for (P3)
increase more rapidly than for (P1) and (P2) because of the data
values and the ``continuation formula'' of the derived unbounded
temporal connective $\weakuntil$.

To put the experimental results in perspective, we carried out two
additional experiments.
First, we conducted similar experiments on formulas with their freeze
quantifiers removed and further transformed into propositional
formulas, as described in Appendix~\ref{app:propositonal}.  We make
similar observations in the propositional setting. However, in the
propositional setting the running times increase linearly with respect
to the event rate and logs are processed several orders of magnitude
faster.  Overall, one pays a price at runtime for the expressivity
gain given by the freeze quantifier.
Second, we compared our prototype with the MONPOLY tool~\cite{monpoly}.
MONPOLY's specification language is, like \MTLdata, a point-based
real-time logic.  It is richer than \MTLdata in that it admits
existential and universal quantification over domain
elements. However, MONPOLY specifications are syntactically restricted
in that temporal future connectives must be bounded (except for the
outermost connective $\always$).
Thus, (P3) does not have a counterpart in MONPOLY's specification
language.
MONPOLY handles the counterparts of (P1), (P2), and (P4) significantly
faster, up to three orders of magnitude.  Comparing the performance of
both tools should, however, be taken with a grain of salt.  First,
MONPOLY only handles the restrictive setting where messages must be
received in-order.  Second, MONPOLY outputs violations for
specifications with (bounded) future only after all events in the
relevant time window are available, whereas our prototype outputs
verdicts promptly.\footnote{For instance, for the formula
  $\always_{[0,3]} p$, if $p$ does not hold at time point $i$ with
  timestamp~$\tau$, then our prototype outputs the corresponding
  verdict directly after processing the time point $i$, whereas MONPOLY
  reports this violation at the first time point with a timestamp
  larger than $\tau+3$.}  Finally, while MONPOLY is optimized, our
prototype is not.

In summary, our experimental evaluation shows that one pays a high
price to handle an expressive specification language together with
message delays. Nevertheless, our prototype's performance is
sufficient to monitor systems that generate hundreds of events per
second, and the prototype can be used as a starting point for a more
efficient implementation.

%%% Local Variables: 
%%% mode: latex
%%% TeX-master: "main"
%%% End: 

\section{Related Work}
\label{sec:related}

Runtime verification is a well-established approach for checking at
runtime whether a system's execution fulfills a given specification.
Various monitoring algorithms exist,
e.g.,~\cite{Barringer_etal:eagle,Bauer_etal:rv_tltl,Meredith_etal:mop,Basin_etal:rv_mfotl}.
They differ in the specifications they can handle (some of the
specification languages account for data values) and they make
different assumptions on the monitored systems.  A commonly made
assumption is that a monitor has always complete knowledge about the
system behavior up to the current time.  Only a few
runtime-verification approaches exist that relax this assumption.
Note that this assumption is, for instance, not met in distributed
systems whose components communicate over unreliable channels.

Closest to our work is the runtime-verification approach
by Basin et al.~\cite{Basin_etal:failureaware_rv}.
We use the same system model
and our monitoring algorithm extends their monitoring algorithm for
the propositional real-time logic MTL. Namely, our algorithm handles
the more expressive specification language \MTLdata and handles data values.
Furthermore, we present a semantics for \MTLdata that is based on
three truth values and uses observations instead of timed words.  This
enables us to cleanly state correctness requirements and establish
stronger correctness guarantees for the monitoring algorithm.  Basin
et al.'s completeness result~\cite{Basin_etal:failureaware_rv} is
limited in that it assumes that all messages are eventually received.
Finally, Basin et al.~\cite{Basin_etal:failureaware_rv} do not
evaluate their monitoring algorithm experimentally.

Colombo and Falcone~\cite{ColomboFalcone:global_clock} propose 
a runtime-verification approach, based on formula rewriting, that 
also allows the monitor to receive messages out of order. 
Their approach only handles the propositional
temporal logic LTL with the three-valued semantics proposed by 
Bauer et al.~\cite{Bauer_etal:ltl_rv}.
In a nutshell, their approach unfolds
temporal connectives as time progresses and 
special propositions act as placeholders for subformulas.
The subsequent assignment of these placeholders to Boolean truth values 
triggers the reevaluation and simplification of the formula.
Their approach only guarantees soundness but not completeness,
since the simplification rules used for formula rewriting are incomplete.
Finally, its performance with respect to out-of-order 
messages is not evaluated.

The monitoring approaches by Garg et
al.~\cite{Garg_etal:policy_incomplete} and Basin et
al.~\cite{BasinKMZ-RV12}, both targeting the auditing of policies on
system logs, also account for knowledge gaps, i.e., logs that may not
contain all the actions performed by a system.  Both
approaches handle rich policy specification languages with first-order
quantification and a three-valued semantics.
Garg et al.'s approach~\cite{Garg_etal:policy_incomplete}, which is
based on formula rewriting, is however, not suited for online use
since it does not process logs incrementally.  It also only accounts
for knowledge gaps in a limited way, namely, the interpretation of a
predicate symbol cannot be partially unknown, e.g., for certain time
periods. Furthermore, their approach is not complete.
Basin et al.'s approach~\cite{BasinKMZ-RV12}, which is based on their
prior work~\cite{Basin_etal:rv_mfotl}, can be used online.  However,
the problem of how to incrementally output verdicts as prior knowledge
gaps are resolved is not addressed, and thus it does not deal with
out-of-order events.  Moreover, the semantics of the specification
language handled does not reflect a monitor's partial view about the
system behavior. Instead, it is given for infinite data streams that
represent system behavior in the limit.

Several dedicated monitoring approaches for distributed systems have
been developed~\cite{Sen_etal:decentralized_disitributed_monitoring,Mostafa_Bonakdarbour:decentralized_rv,Bauer_Falcone:decentralised_monitor}.
These approaches only handle less expressive specification languages,
namely, the propositional temporal logic LTL or variants thereof.
Furthermore, none of them handles message loss or out-of-order
delivery of messages, problems that are inherent to such systems
because of crashing components and nonuniform delays in message
delivery.

A similar extension of MTL with the freeze quantifier is defined by
Feng et al.~\cite{Feng_etal:MTL_data}.  Their analysis focuses on the
computational complexity of the path-checking problem. However, they
use a finite trace semantics, which is less suitable for runtime
verification. Out-of-order messages are also not considered.

Temporal logics with additional truth values have also been considered
in model checking finite-state systems.  Closest to our three-valued
semantics is the three-valued semantics for LTL by Goidefroid and
Piterman~\cite{Goidefroid_Piterman:gmc}, which is based on infinite
words, not observations (Definition~\ref{def:observation}).  Similar
to (T\ref{enum:observation_data}) of Definition~\ref{def:observation},
a proposition with the truth value~$\unknown$ at a position can be
refined by $\true$ or $\false$.  In contrast, their semantics does not
support refinements that add and delete letters,
cf.~(T\ref{enum:observation_split})
and~(T\ref{enum:observation_removal}) of
Definition~\ref{def:observation}.

%%% Local Variables: 
%%% mode: latex
%%% TeX-master: "main"
%%% End: 

\section{Conclusion}
\label{sec:concl}

We have presented a runtime-verification approach to checking
real-time specifications given as \MTLdata formulas.  Our approach
handles the practically-relevant setting where messages sent to the
monitors can be delayed or lost, and it provides soundness and
completeness guarantees.
Although our experimental evaluation is promising, 
our approach does not yet scale to monitor systems that generate
thousands or even millions of events per second.  This requires
additional research, including algorithmic optimizations.  
We plan to do this in future work, as well as 
to deploy and evaluate our approach in
realistic, large-scale case studies.

\paragraph{Acknowledgments.} 

This work was partly performed within the 5G-ENSURE project
(www.5gensure.eu) and received funding from the EU Framework Programme
for Research and Innovation Horizon 2020 under grant agreement
no.~671562.  David Basin acknowledges support from the Swiss National
Science Foundation grant Big Data Monitoring (167162).

%%% Local Variables: 
%%% mode: latex
%%% TeX-master: "main"
%%% End: 

\bibliographystyle{abbrv}
\bibliography{references}

\newpage
\appendix
\section{Additional \MTLdata Details}
\label{app:mtl_bool}

\subsection{Standard Boolean Semantics}
\label{subapp:synsem}

\MTLdata with a Boolean semantics has the same syntax as \MTLdata with
a three-valued semantics as defined in Section~\ref{sec:mtl}.  We use
the same syntactic sugar and conventions from Section~\ref{sec:mtl}.
The Boolean semantics for \MTLdata is defined over timed words.

To define \MTLdata's Boolean semantics, we introduce the following
notation and terminology.
Recall that
$D$---the \emph{data domain}---is a nonempty set of values.
Furthermore, let $\Sigma$ be the set of the pairs $(\sigma,\rho)$,
where $\sigma$ is a total function over $P$ with
$\sigma(p)\subseteq D^{\iota(p)}$ for $p\in P$ and $\rho$ is a total
function over $R$ with $\rho(r)\in D$ for $r\in R$.  In the following,
timed words are always over the alphabet $\Sigma$.  Note that the
letter at position $i\in\Nat$ of a timed word (over $\Sigma$) is of
the form $(\tau_i, \sigma_i,\rho_i)$, where $\sigma_i$ interprets the
predicate symbols at time $\tau_i$ and $\rho_i$ determines the values
stored in the registers in $R$ at time $\tau_i$.
As for observations, we call the positions $i\in\Nat$ of
a timed word $w$ the \emph{time points} of $w$. Furthermore, we call
$\tau_i$ the \emph{timestamp} of the time point $i\in\Nat$.

\begin{remark}
  Observations generalize the notion of a finite prefix of a timed
  word. Let $w=(\tau_1,\sigma_1,\rho_1) (\tau_2,\sigma_2,\rho_2)\dots$
  be a timed word. For the prefix of length $n\in\Nat$ of $w$, we
  define the word $w_n$ as $(\set{\tau_1},\sigma_1,\rho_1) \dots
  (\set{\tau_n},\sigma_n,\rho_n)(I,[\,],[\,])$, with
  $I=\Qpos\setminus\big(\bigcup_{1\leq i\leq n}[0,\tau_i]\big)$.  That
  is, we transform the timestamps of the prefix into singletons and
  attach a last letter, which can be seen as a placeholder for the
  remaining letters in $w$.
  The words~$w_n$ for $n\in\Nat$ are observations.  Obviously,
  $w_0=([0,\infty),[\,],[\,])$ is an observation. For $n>0$, we obtain
  $w_n$ from $w_{n-1}$ by applying the transformation
  (T\ref{enum:observation_split}) with the timestamp $\tau_n$ on
  $w_{n-1}$'s last letter, then applying
  (T\ref{enum:observation_removal}) to delete the letter with the
  interval $(\tau_{n-1},\tau_n)$, and finally applying
  (T\ref{enum:observation_data}) on the letter with interval
  $\set{\tau_n}$ to populate it with $\sigma_n$ and $\rho_n$.
\end{remark}

\MTLdata's Boolean semantics is defined inductively over the formula
structure.  In particular, similar to the function
$\phi\mapsto \osem{w,i,\nu}{\phi}$ defined in Section~\ref{sec:mtl},
we define a function $\phi\mapsto\sem{w,i,\nu}{p(\bar{x})}\in\Two$,
for a given timed word~$w$, a time point~$i\in\N$, and a
valuation~$\nu:V\pto D$.  Let
$w=(\tau_0,\sigma_0,\rho_0) (\tau_1,\sigma_1,\rho_1)\dots$.
\begin{equation*}
  \begin{array}{@{}rcl@{}}
    \sem{w,i,\nu}{\true}
    &:=&
    \true
    \\[.1cm] 
    \sem{w,i,\nu}{p(\bar{x})}
    &:=&
    \begin{cases}
      \true & \text{if $\nu(\bar{x})\in\sigma_i(p)$}
      \\
      \false & \text{otherwise}
    \end{cases}
    \\[.4cm] 
    \sem{w,i,\nu}{\freeze{r}{x}\phi}
    &:=&
    \sem{w,i,\nu[x\mapsto \rho_i(r)]}{\phi}
    \\[.1cm] 
    \sem{w,i,\nu}{\neg\phi}
    &:=&
    \neg\sem{w,i,\nu}{\phi}
    \\[.1cm] 
    \sem{w,i,\nu}{\phi\vee\psi}
    &:=&
    \sem{w,i,\nu}{\phi}
    \vee
    \sem{w,i,\nu}{\psi}
    \\[.1cm] 
    % \sem{w,i,\nu}{\phi\since_I\psi}
    % &:=&
    % \bigvee_{j\in \setx{\ell\in\N}{\tau_i-\tau_\ell\in I}}
    %   \big(\sem{w,j,\nu}{\psi} 
    %     \wedge\bigwedge_{j<k\leq i}\sem{w,k,\nu}{\phi}\big)
    % \\[.1cm] 
    \sem{w,i,\nu}{\phi\until_I\psi}
    &:=&
    \bigvee_{j\in \setx{\ell\in\N}{\tau_\ell-\tau_i\in I}}
      \big(\sem{w,j,\nu}{\psi} 
        \wedge\bigwedge_{i\leq k<j}\sem{w,k,\nu}{\phi}\big)
  \end{array}
\end{equation*}
Note that we abuse notation here and identify the logic's constant
symbol~$\true$ with the Boolean value $\true$, and the
connectives~$\neg$ and $\vee$ with the corresponding logical
operators.

\begin{definition}
For $\tau\in\Qpos$, a timed word~$w$, a valuation $\nu$, and a formula
$\phi$, we define $\esem{w,\tau,\nu}{\phi}:=\sem{w,j,\nu}{\phi}$,
provided that there is some time point $j$ in $w$ with timestamp
$\tau$, and $\esem{w,\tau,\nu}{\phi}:=\unknown$, otherwise.
\end{definition}

The following theorem shows that the semantics of \MTLdata on
observations conservatively extends the logic's semantics on timed
words. In particular, if a formula evaluates to a Boolean value for an
observation for a given time $\tau\in\Qpos$, it has the same Boolean
value on any timed word that refines the observation.
\begin{theorem}
  \label{thm:osem_refinement}
  Let $\phi$ be a formula, $\mu$ a partial valuation, $\nu$ a total
  valuation, $u$ an observation, $v$ a timed word, and
  $\tau\in\Qpos$. If $u\sqsubseteq v$ and $\mu\sqsubseteq \nu$ then
  $\eosem{u,\tau,\mu}{\phi} \preceq \esem{v,\tau,\nu}{\phi}$.
\end{theorem}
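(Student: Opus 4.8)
The plan is to prove the statement directly, exploiting that $\unknown \preceq b$ for every $b\in\Three$: the inequality is nontrivial only when $\eosem{u,\tau,\mu}{\phi}\in\Two$, in which case one must show $\esem{v,\tau,\nu}{\phi}$ equals it. I would first record the two structural consequences of $u\sqsubseteq v$ that drive everything. (i)~Since the intervals of an observation are pairwise disjoint, each time point $q$ of $v$ lies in a unique interval $I_p$ of $u$, and the refinement condition gives $\sigma_p\sqsubseteq\sigma'_q$ and $\rho_p\sqsubseteq\rho'_q$ for that covering position. (ii)~Each singleton interval (time point) of $u$ is the timestamp of a time point of $v$; this is what lets a Boolean verdict obtained at a $u$-time-point be transferred to $v$. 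Both correspondences respect the strict timestamp order.

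The core is a single generalized lemma proved by structural induction on $\phi$: for every position $p\in\position(u)$, every time point $q$ of $v$ with $\tau_q\in I_p$, and all partial valuations with $\mu\sqsubseteq\nu$ and $\nu$ total, $\osem{u,p,\mu}{\phi}\preceq\sem{v,q,\nu}{\phi}$. The theorem then follows by instantiating $p$ to the $u$-time-point $i$ with timestamp $\tau$ and $q$ to its $v$-counterpart from~(ii), since the left side is $\eosem{u,\tau,\mu}{\phi}$ and the right side is $\esem{v,\tau,\nu}{\phi}$ by definition. The base and propositional cases are routine: for $\true$ both sides are $\true$; for $p(\bar x)$, when $I_p$ is not a singleton we have $\sigma_p=[\,]$ so the left side is $\unknown$ and the claim is vacuous, whereas when $I_p$ is a singleton we use $\mu\sqsubseteq\nu$ together with $\sigma_p\sqsubseteq\sigma'_q$ from~(i) to match the Boolean values; the cases $\neg\phi$ and $\phi\vee\psi$ reduce to the $\preceq$-monotonicity of the Kleene operators noted after Table~\ref{tab:connectives}. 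For $\freeze{r}{x}\phi$ I would push the update through: if $r\in\pdef(\rho_p)$ then $\rho_p(r)=\rho'_q(r)$ by~(i), so both semantics freeze the same value and $\mu[x\mapsto\rho_p(r)]\sqsubseteq\nu[x\mapsto\rho'_q(r)]$; if $r\notin\pdef(\rho_p)$ the left update merely deletes $x$ from $\mu$, which still refines $\nu[x\mapsto\rho'_q(r)]$, and the induction hypothesis applies.

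The main obstacle is the temporal case $\phi=\alpha\until_I\beta$, where the $u$-semantics disjoins over \emph{all} positions $j'\ge p$ (with guards $\istp_u(j')$ and $\tc_{u,I}(j',p)$ and the guarded inner conjunction $\bigwedge_{p\le k<j'}(\istp_u(k)\to\osem{u,k,\mu}{\alpha})$), whereas the $v$-semantics disjoins only over the time points $q''$ with $\tau_{q''}-\tau_q\in I$. I would split on the left-hand value. If it is $\true$, some $u$-disjunct $j'$ is $\true$; then $j'$ is a time point, $\tc_{u,I}(j',p)=\true$ forces $\tau_{j'}-\tau_q\in I$ (using $\tau_q\in I_p$), and every $u$-position in $[p,j')$ makes $\alpha$ true. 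Taking the $v$-counterpart $q''$ of $j'$ via~(ii), the inner $v$-conjunction ranges over $v$-time-points $k''$ with $\tau_q\le\tau_{k''}<\tau_{j'}$, each covered by a $u$-position in $[p,j')$, so the induction hypothesis upgrades every factor to $\true$ and the matching $v$-disjunct is $\true$. If the left-hand value is $\false$, every $u$-disjunct is $\false$; for an arbitrary candidate $v$-disjunct $q''$ I would look at the $u$-position $j'$ covering $\tau_{q''}$ (which satisfies $j'\ge p$), observe that $\tc_{u,I}(j',p)$ cannot be $\false$ because $\tau_{q''}-\tau_q\in I$ witnesses membership, and conclude that the $u$-disjunct being $\false$ must stem from $\osem{u,j',\mu}{\beta}=\false$ or from a time point $m\in[p,j')$ with $\osem{u,m,\mu}{\alpha}=\false$; in either subcase the induction hypothesis (plus~(ii) for $m$) forces the corresponding $v$-factor to $\false$, so the $v$-disjunct is $\false$, and hence the whole $v$-disjunction is $\false$.

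The delicate bookkeeping is concentrated here: that the covering position of each relevant $v$-time-point falls in the expected $u$-index range, and that $\istp_u$ and $\tc_{u,I}$ never wrongly report $\false$ on intervals that a refinement may later populate (this is precisely why the $u$-semantics downgrades to $\unknown$ rather than $\false$). Once the covering maps~(i)--(ii) are in place, each of the remaining subcases collapses to a short monotonicity argument in $(\Three,\preceq)$, so I expect the temporal split above to be the only step requiring genuine care.
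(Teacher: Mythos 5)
Your proposal is correct and takes essentially the same route as the paper's own proof: your covering correspondences (i) and (ii) are precisely the paper's properties (R1)--(R3) for a monotonic map $\pi$ from positions of $v$ to positions of $u$, together with the fact (also used, not derived, by the paper) that each time point of $u$ is realized by a time point of $v$; your generalized lemma is the paper's inductive statement $\osem{u,\pi(i'),\mu}{\phi}\preceq\sem{v,i',\nu}{\phi}$, and your case analysis---including the treatment of the freeze quantifier via $\mu[x\mapsto\rho_p(r)]\sqsubseteq\nu[x\mapsto\rho'_q(r)]$ and the $\true$/$\false$ split for $\until_I$ exploiting that $\istp$ and $\tc$ are never wrongly $\false$---matches the paper's step for step. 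The only cosmetic difference is in the $\false$ case, where you pre-restrict to $v$-disjuncts satisfying the metric constraint and argue $\tc$ cannot be $\false$ there, whereas the paper takes an arbitrary $j'$ and treats $\tc_{u,I}(i,j)=\false$ as a subcase concluding the metric constraint fails in $v$; the two organizations are logically identical.
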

\begin{proof} 
  Let $(I_i,\sigma_i,\rho_i)$ and $(\tau_j,\sigma'_j,\rho'_j)$, for
  $i\in\position(u)$ and $j\in\Nat$ be the letters of $u$ and~$v$,
  respectively.
  As $u\sqsubseteq v$, we have that there is a function
  $\pi:\Nat\to\position(u)$ such that (R1)~$\tau_j\in I_{\pi(j)}$,
  (R2)~$\sigma_{\pi(j)}\sqsubseteq \sigma'_j$, and
  (R3)~$\rho_{\pi(j)}\sqsubseteq\rho'_j$, for every $j\in\Nat$.
  It is easy to see that $\pi$ is monotonic.

  We prove by structural induction on $\phi$ that for any time point
  $i'\in\Nat$ and partial valuations~$\mu$ and~$\nu$ with
  $\mu\sqsubseteq\nu$, it holds that
  $\osem{u,\pi(i'),\mu}{\phi} \preceq \sem{v,i',\nu}{\phi}$.  The
  theorem's statement easily follows from this property.
  For the reminder of the proof, we fix an arbitrary time point
  $i'\in\Nat$ and arbitrary partial valuations~$\mu$ and $\nu$ with
  $\mu\sqsubseteq\nu$.
  Let $i=\pi(i')$.

  When $\osem{u,i,\mu}{\phi}=\bot$ the statement clearly holds.
  Hence, it suffices to show that
  $\osem{u,i,\mu}{\phi} = \sem{v,i',\nu}{\phi}$, provided that
  $\osem{u,i,\mu}{\phi}\in\Two$.

  \emph{Base cases.} The case $\phi=\true$ is trivial. Consider the
  case $\phi=p(\bar x)$, for some $p\in P$. As
  $\osem{u,i,\nu}{p(\bar{x})}\in\Two$, it holds that
  $\bar{x}\in\pdef(\nu)$ and $p\in\pdef(\sigma_i)$. It follows
  from the theorem's premise that $\mu(\bar{x})=\nu(\bar{x})$, and
  from~(R2) that $\sigma_{i}(p)=\sigma'_{i'}(p)$. Thus
  $\osem{u,i,\nu}{p(\bar x)} = \sem{v,i',\nu}{p(\bar x)}$.

  \emph{Inductive cases.} The cases where $\phi$ is of the form
  $\neg\alpha$ or $\alpha\vee\beta$ 
  are straightforward and omitted. The remaining cases are as follows.

  First, assume that $\phi$ is of the form $\freeze{r}{x}\psi$.
  Let $\eta=\mu[x\mapsto \rho_i(r)]$ and
  $\eta'=\nu[x\mapsto \rho'_{i'}(r)]$.
  By~(R3), we have that if $r\in\pdef(\rho_i)$ then
  $r\in\pdef(\rho'_{i'})$ and $\rho_i(r)=\rho'_{i'}(r)$, and thus
  $\eta(x)=\eta'(x)$. Furthermore, if $r\not\in\pdef(\rho_i)$, then
  $x\not\in\pdef(\eta)$. This shows that $\eta\sqsubseteq\eta'$.
  It follows from the induction hypothesis that
  $\osem{u,i,\eta}{\psi} \preceq \sem{v,i',\eta'}{\psi}$. We conclude
  that $\osem{u,i,\mu}{\phi}=\sem{v,i',\nu}{\phi}$.

  Finally, assume that $\phi$ is of the form $\alpha\until_I\beta$. We
  consider first the case
  $\osem{u,i,\mu}{\alpha\until_I\beta}=\true$. By definition, there is
  a $j\geq i$ such that $\istp_u(j)=\true$, $\tc_{u,I}(i,j)=\true$,
  $\osem{u,j,\mu}{\beta}=\true$, and
  $\istp_u(k)\to\osem{u,k,\mu}{\alpha}=\true$, for all $k$ with
  $i \leq k < j$.
  As $j$ is a time point in $u$, there is a time point $j'$ in $v$
  such that $\pi(j')=j$. From~(R1) we have that
  $\tau_{j'}=\timestamp_{u}(j)$.  As $\tc_{u,I}(i,j)=\true$ and
  $I_j=\set{\tau_{j'}}$, we have that $\tau_{j'}-\tau\in I$, for all
  $\tau\in I_i$. From (R1), we have that $\tau_{i'}\in I_i$. Thus
  $\tau_{j'}-\tau_{i'}\in I$~(I1).
  From the induction hypothesis, we have that
  $\osem{u,j,\mu}{\beta} \preceq \sem{v,j',\nu}{\beta}$. Hence
  $\sem{v,j',\nu}{\beta}=\true$~(I2).
  From the induction hypothesis, we also have that 
  $\osem{u,\pi(k'),\mu}{\alpha}\preceq\sem{v,k',\nu}{\alpha}$, for
  any $k'\in\Nat$.
  Let $k'\in\Nat$ such that $i'\leq k'< j'$, and let $k=\pi(k')$. From
  the monotonicity of~$\pi$ we have that $i\leq k\leq j$. Since $j$ is
  a time point in $u$ we also have that $k<j$.
  As $\istp_u(k)\to\osem{u,k,\mu}{\alpha}=\true$ and $\istp$ is never
  $\false$ by definition, we have that
  $\osem{u,k,\mu}{\alpha}=\true$. Then
  $\sem{v,k',\mu}{\alpha}=\true$~(I3).
  Summing up, from (I1), (I2), (I3), and as $k'$ was chosen
  arbitrarily, we obtain that
  $\sem{v,i',\nu}{\alpha\until_I\beta}=\true$.
  
  The case $\osem{u,i,\mu}{\alpha\until_I\beta} = \false$ is as
  follows.  Note that each disjunct in the definition of
  $\osem{u,i,\mu}{\alpha\until_I\beta}$ is $\false$.
  We fix an arbitrary $j'\geq i'$ and let $j=\pi(j')$. It holds that
  $\istp_u(j) \wedge \tc_{u,I}(i,j) \wedge \osem{u,j,\mu}{\beta}
  \wedge \bigwedge_{i\leq k<j} (\istp_u(k)\rightarrow
  \osem{u,k,\mu}{\alpha}) = \false$.
  Since $\istp_u(j)\not=\false$, one of the remaining conjuncts must
  be $\false$.
  \begin{enumerate}[(1)]
  \item If $\tc_{u,I}(i,j)=\false$, then $\tau'-\tau''\not\in I$, for
    all $\tau''\in I_i$ and $\tau'\in I_j$. From~(R1),
    $\tau_{i'}\in I_i$ and $\tau_{j'}\in I_j$. It follows that
    $\tau_{j'}-\tau_{i'}\not\in I$.
  \item If $\osem{u,j,\mu}{\beta}=\false$, then
    $\sem{v,j',\nu}{\beta}=\false$, by induction hypothesis.
  \item If $\istp_u(k)\to\osem{u,k,\mu}{\alpha}=\false$, for some $k$
    with $i\leq k< j$, then $\istp_u(k)=\true$ and
    $\osem{u,k,\mu}{\alpha}=\false$. It follows as before that there
    is a $k'$ with $i'\leq k'<j'$ such that
    $\sem{v,k',\nu}{\alpha}=\false$.
  \end{enumerate}
  We have thus obtained that either $\tau_{i'}-\tau_{j'}\not\in I$ or
  one of the conjuncts of
  $\sem{v,j',\nu}{\beta} \wedge \bigwedge_{i'\leq k'< j'}
  \sem{v,k',\nu}{\alpha}$ is $\false$. In other words, if $j'$ is
  such that $\tau_{j'}-\tau_{i'}\in I$, then
  $\sem{v,j',\nu}{\beta} \wedge \bigwedge_{i'\leq k'<j'}
  \sem{v,k',\nu}{\alpha} = \false$.
  As $j'$ was chosen arbitrarily, we conclude that
  $\sem{v,i',\nu}{\alpha\until_I\beta}=\false$.  \qed
\end{proof}

\subsection{Correctness Requirements}
\label{subapp:req_bool}

In this section, we formulate similar monitoring requirements as in
Section~\ref{subsec:requirements}, formulating them this time with
respect to the Boolean \MTLdata semantics. We then argue that these
requirements are too strong.

For an observation~$w$, we define $U_w:=\setx{v}{v\text{ a timed word
    with }w\sqsubseteq v}$. Intuitively, $U_w$ contains all the timed
words that are compatible with the reported system behavior that a
monitor received so far, represented by $w$.
\begin{definition}
  \label{def:soundness_completeness}
  Let $M$ be a monitor, $\phi$ a formula, and $\bar{w}$ a valid
  observation sequence.
  \begin{itemize}[--]
  \item $M$ is \emph{sound} for $\bar{w}$ and $\phi$ if for all
    valuations~$\nu$ and $i\in\Nat$, whenever $(\tau,b)\in M(w_i)$
    then  $\bigcurlywedge_{v\in U_{w_i}}\esem{v,\tau,\nu}{\phi}=b$.

  \item $M$ is \emph{complete} for $\bar{w}$ and $\phi$ if for all
    valuations~$\nu$, $i\in\Nat$, and $\tau\in\Q_{\geq0}$, whenever
    $\bigcurlywedge_{v\in U_{w_i}}\esem{v,\tau,\nu}{\phi}\in\Two$
    then
    $(\tau,b)\in \bigcup_{j\leq i}M(w_j)$, for some $b\in\Two$.
  \end{itemize}
  We say that $M$ is \emph{sound} if $M$ is sound for all valid
  observation sequences $\bar{w}$ and formulas $\phi$. The definition
  of $M$ being \emph{complete} is analogous.
\end{definition}

The correctness requirements in
Definition~\ref{def:soundness_completeness} are related to
the use of a three-valued ``runtime-verification'' semantics for a
specification language, as introduced by Bauer et
al.~\cite{Bauer_etal:rv_tltl} for LTL and adopted by other
runtime-verification approaches (e.g.~\cite{Bauer-FMSD15}).
Intuitively speaking, both a sound and complete monitor and a monitor
implementing a three-valued ``runtime-verification'' semantics output
a verdict as soon as the specification has the same Boolean value on
every extension of the monitor's current knowledge.  However, Bauer et
al.~\cite{Bauer_etal:rv_tltl} make no distinction between a monitor's
soundness and its completeness. Distinguishing these two
requirements separates concerns and this is important, as explained
next.
Ideally, a monitor is both sound and complete.  However, achieving
both of these properties can be hard or even impossible for non-trivial
specification languages, as we now explain, when relying on the
standard Boolean semantics.
\begin{remark}
  \label{rem:correctness}
  For a specification language, having a sound and complete monitor
  $M$ for a specification language is at least as hard as checking
  satisfiability for this language.  For \MTLdata, a closed formula
  $\phi$ is satisfiable iff $(0,\false)\not\in M(w_1)$, assuming that
  $0$ is always the timestamp of the first time point of a timed word
  and the observation $w_1$ is obtained from the monitor's initial
  knowledge $w_0$ by (T\ref{enum:observation_split}) for the
  timestamp~$0$.  Note that $w_0=([0,\infty),[\,],[\,])$, $U_{w_0}$ is
  the set of all timed words, and $U_{w_1}=U_{w_0}$.  The
  propositional fragment of \MTLdata is already
  undecidable~\cite{OuaknineW06}.  There are fragments that are
  decidable but the complexity is usually high. Recall that for LTL,
  checking satisfiability is already PSPACE-complete.\footnote{In
    contrast to model checking, runtime verification is often
    advertised as a ``light-weight'' verification technique. In terms
    of complexity classes, the problem of soundly and completely
    monitoring finite-state systems with respect to LTL specifications
    is however at least as hard as the corresponding model-checking
    problem, which is PSPACE-complete.}

  Some monitoring approaches try to compensate for this complexity
  burden with a pre-processing step. For instance, the monitoring
  approach of Bauer et al.~\cite{Bauer_etal:rv_tltl} translates an LTL
  formula into an automaton prior to monitoring.  The resulting
  automaton can be directly used for sound and complete monitoring in
  environments where messages are neither delayed nor lost.  However,
  there are no obvious extensions for handling out-of-order message
  delivery.  Furthermore, not every specification language has such a
  corresponding automaton model and, for the ones where translations
  are known, the automaton construction can be very costly.  For LTL,
  the size of the automaton is already in the worst case doubly
  exponential in the size of the formula~\cite{Bauer_etal:rv_tltl}.
\end{remark}
In contrast the correctness requirements from
Definition~\ref{def:soundness_completeness-observation} are weaker and
achievable. This is due to the three-valued semantics, based on Kleene logic,  
for \MTLdata over observations. Furthermore, note that the three-valued semantics
for \MTLdata conservatively extends the standard Boolean semantics, as
shown by Theorem~\ref{thm:osem_refinement}.

Theorem~\ref{thm:osem_refinement} allows us to prove that the
completeness requirement from
Definition~\ref{def:soundness_completeness-observation} is indeed a
weaker notion than completeness requirement from
Definition~\ref{def:soundness_completeness}, while the soundness
requirement from Definition~\ref{def:soundness_completeness} offers
the same correctness guarantees as the one from
Definition~\ref{def:soundness_completeness-observation}.
\begin{theorem}
  \label{thm:obs_requirements}
  Let $M$ be a monitor. If $M$ is observationally sound, then $M$ is
  sound. If $M$ is complete, then $M$ is observationally complete.
\end{theorem}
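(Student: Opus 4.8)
The plan is to derive both implications from Theorem~\ref{thm:osem_refinement} together with a single order-theoretic observation: in $(\Three,\prec)$ the Boolean values $\true$ and $\false$ are the maximal elements, so whenever $b\in\Two$ and $b\preceq x$ we must have $x=b$. This is exactly the lever that turns the inequality supplied by Theorem~\ref{thm:osem_refinement} into an equality, and it is the conceptual core of the argument. I would also record at the outset that $U_{w_i}\neq\emptyset$ (each observation is refined by infinitely many timed words, as already noted for valid sequences) and that any partial valuation extends to a total one because $D$ is nonempty; these two facts keep the meets and the invocations of the Boolean-semantics definitions well formed.

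For the soundness direction, I assume $M$ is observationally sound, fix a total valuation $\nu$, an index $i$, and a verdict $(\tau,b)\in M(w_i)$. Observational soundness (Definition~\ref{def:soundness_completeness-observation}), applied with $\nu$ viewed as a partial valuation, yields $\eosem{w_i,\tau,\nu}{\phi}=b$. For each timed word $v\in U_{w_i}$ we have $w_i\sqsubseteq v$, so Theorem~\ref{thm:osem_refinement} (with $\mu=\nu$) gives $\eosem{w_i,\tau,\nu}{\phi}\preceq\esem{v,\tau,\nu}{\phi}$. Since the left-hand side is the maximal element $b$, this forces $\esem{v,\tau,\nu}{\phi}=b$ for every such $v$. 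Taking the meet over the nonempty set $U_{w_i}$ then yields $\bigcurlywedge_{v\in U_{w_i}}\esem{v,\tau,\nu}{\phi}=b$, which is precisely the soundness requirement of Definition~\ref{def:soundness_completeness}.

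For the completeness direction, I assume $M$ is complete and suppose $\eosem{w_i,\tau,\nu}{\phi}=b\in\Two$ for some partial valuation $\nu$; the goal is to exhibit a verdict $(\tau,b')\in\bigcup_{j\leq i}M(w_j)$ for some $b'\in\Two$. I would first extend $\nu$ to a total valuation $\nu'$ with $\nu\sqsubseteq\nu'$. For every $v\in U_{w_i}$, Theorem~\ref{thm:osem_refinement} (with $u=w_i$, $\mu=\nu$, and total valuation $\nu'$) gives $b=\eosem{w_i,\tau,\nu}{\phi}\preceq\esem{v,\tau,\nu'}{\phi}$, and maximality of $b$ again upgrades this to $\esem{v,\tau,\nu'}{\phi}=b$. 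Hence $\bigcurlywedge_{v\in U_{w_i}}\esem{v,\tau,\nu'}{\phi}=b\in\Two$, so the completeness hypothesis (instantiated with $\nu'$, $i$, and $\tau$) delivers the required verdict.

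The argument is largely mechanical once Theorem~\ref{thm:osem_refinement} is in hand, so the only real care is in handling the valuations correctly: observational soundness and completeness quantify over \emph{partial} valuations, whereas soundness and completeness in Definition~\ref{def:soundness_completeness} quantify over \emph{total} ones. The mild obstacle is therefore bookkeeping, namely checking that restricting a total valuation to a partial one (for soundness) and extending a partial valuation to a total one (for completeness) both preserve the $\sqsubseteq$ relation between valuations required as a hypothesis of Theorem~\ref{thm:osem_refinement}. Everything else reduces to the maximality of Boolean values in the knowledge order.
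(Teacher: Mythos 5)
Your proof is correct and follows essentially the same route as the paper's: both directions hinge on Theorem~\ref{thm:osem_refinement} plus the fact that Boolean values are maximal in the knowledge order, with a total valuation restricted to a partial one for soundness and a partial valuation extended to a total one for completeness. Your write-up is in fact slightly more careful than the paper's, which leaves the maximality step and the nonemptiness of $U_{w_i}$ implicit (and contains minor typos, e.g.\ writing $\esem{w_i,\tau,\nu}{\phi}$ where $\eosem{w_i,\tau,\nu}{\phi}$ is meant).
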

\begin{proof}
  First, let $M$ be an observationally sound monitor. Let $\phi$ be a
  formula and $\bar w$ a valid observation sequence. Furthermore, let
  $\nu$ be a total valuation, $i\in\Nat$, $\tau\in\Qpos$, and
  $b\in\Two$ such that $(\tau,b)\in M(i)$. Then, by definition,
  $\eosem{w_i,\tau,\nu}{\phi}=b$. Now, let $v\in U_{w_i}$. We have
  that $w_i\sqsubseteq v$.  Then from Theorem~\ref{thm:osem_refinement}
  we have that $\esem{v,\tau,\nu}{\phi}=b$. As $v$ was chosen
  arbitrarily, we get
  $\bigcurlywedge_{v\in U_{w_i}}\esem{v,\tau,\nu}{\phi}=b$.
  Thus $M$ is a sound monitor.

  Now, let $M$ be a complete monitor.  Let $\phi$ be a formula and
  $\bar w$ a valid observation sequence. Furthermore, let $\nu$ be a
  partial valuation, $i\in\Nat$, $\tau\in\Qpos$ such that
  $\esem{w_i,\tau,\nu}{\phi}=b'$ for some $b'\in\Two$. 
  Let $\nu'$ be a total valuation with $\nu\sqsubseteq \nu'$.  Also,
  let $v\in U_{w_i}$. Then, as $w_i\sqsubseteq v$, from
  Theorem~\ref{thm:osem_refinement} we obtain that
  $\esem{v,\tau',\nu}{\phi}=b'$. As $v$ was chosen arbitrarily, we get
  $\bigcurlywedge_{v\in U_{w_i}}\esem{v,\tau,\nu'}{\phi}=b'$. As $M$
  is complete, it follows, by definition, that there is $b\in\Two$ and
  $j\leq i$ such that $(\tau,b)\in M(w_j)$. Thus $M$ is an
  observationally complete monitor.
  \qed
\end{proof}

\subsection{Proof of Theorem~\ref{thm:monotonicity}}

The following lemma characterizes the $\sqsubseteq$ relation on
observations.
\begin{lemma}
  \label{lem:refinement}
  Let $w$ and $w'$ be observations with letters
  $(I_i,\sigma_i,\rho_i)$ and respectively $(I'_j,\sigma'_j,\rho'_j)$,
  for $i\in\position(w)$ and $j\in\position(w')$. % It holds that
  % $w\sqsubseteq w'$ iff there is a monotonic function
  % $\pi:\position(w')\rightarrow \position(w)$ with the following properties.
  If $w\sqsubseteq w'$ then there is a monotonic function
  $\pi:\position(w')\rightarrow \position(w)$ with the following properties.
  \begin{enumerate}[(R1)]
  \item\label{R:intervals} $I_j'\subseteq I_{\pi(j)}$, for all $j\in \position(w')$.
  \item\label{R:sigmas} $\sigma_{\pi(j)}\sqsubseteq \sigma_j'(p)$, for
    all $j\in \position(w')$ and $p\in P$.
  \item\label{R:rhos} $\rho_{\pi(j)}\sqsubseteq \rho'_{j}$, for all
    $j\in \position(w')$ and $r\in R$.
  \end{enumerate}
\end{lemma}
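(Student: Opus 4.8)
Since $\sqsubseteq$ is the reflexive--transitive closure of $\sqsubset_1$ (Definition~\ref{def:refinement}), there is a chain $w=u_0\sqsubset_1 u_1\sqsubset_1\cdots\sqsubset_1 u_n=w'$, and I would prove the lemma by induction on $n$. For $n=0$ the identity map on $\position(w)$ works, with (R\ref{R:intervals})--(R\ref{R:rhos}) holding as equalities. For the inductive step it suffices to produce, for a single step $u\sqsubset_1 u'$, a monotonic map $\pi_0\colon\position(u')\to\position(u)$ satisfying the three properties relating $u'$ to $u$, and then to set $\pi:=\pi_1\circ\pi_0$, where $\pi_1\colon\position(u)\to\position(w)$ comes from the induction hypothesis. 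This $\pi$ is monotonic as a composition of monotonic maps, and (R\ref{R:intervals})--(R\ref{R:rhos}) transfer by transitivity of $\subseteq$ on intervals and of $\sqsubseteq$ on partial functions, e.g.\ $I'_j\subseteq I^u_{\pi_0(j)}\subseteq I_{\pi_1(\pi_0(j))}$, and analogously for the $\sigma$- and $\rho$-components.

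It then remains to treat a single step, which I would split according to the applied transformation. For (T\ref{enum:observation_data}) only the $\sigma$- and $\rho$-components of one singleton position change, and only upward in $\sqsubseteq$; here $\pi_0$ is the identity, (R\ref{R:intervals}) holds with equality, and (R\ref{R:sigmas}),(R\ref{R:rhos}) hold directly from the hypotheses $\sigma\sqsubseteq\sigma'$ and $\rho\sqsubseteq\rho'$ of that transformation.

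For (T\ref{enum:observation_split}), a letter at some position $p$ with $|I_p|>1$ is replaced by two or three consecutive letters whose intervals are among $I_p\cap[0,\tau)$, $\{\tau\}$, and $I_p\cap(\tau,\infty)$, all carrying the same $\sigma,\rho$; since $|I_p|>1$ forces $\sigma=\rho=[\,]$ (as noted after Definition~\ref{def:observation}), these components are empty. I would define $\pi_0$ to be the identity on positions left of $p$, to send each newly inserted position to $p$, and to shift the positions right of $p$ down by $2$ (or by $1$ when $\tau=0$). Each new interval is a subset of $I_p$, giving (R\ref{R:intervals}); (R\ref{R:sigmas}),(R\ref{R:rhos}) hold as equalities between empty functions; and $\pi_0$ is monotonic because the inserted positions all map to the single value $p$, consistently with the surrounding positions. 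For (T\ref{enum:observation_removal}) a letter is deleted, so $\pi_0$ is the identity left of the deleted index and maps each position at or after it to its index in $u$ (a shift up by one), leaving all intervals and components unchanged; hence (R\ref{R:intervals})--(R\ref{R:rhos}) hold with equality and $\pi_0$ is again monotonic.

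The only delicate point is the index bookkeeping for the shifts in (T\ref{enum:observation_split}) and (T\ref{enum:observation_removal}), together with the verification of non-strict monotonicity at the boundary where several inserted positions collapse onto $p$. I expect no deeper obstacle: the $\sigma$- and $\rho$-components never decrease along a single step and each refined interval is always contained in the interval it originates from, so both monotonicity and (R\ref{R:intervals})--(R\ref{R:rhos}) follow once the indices are tracked correctly.
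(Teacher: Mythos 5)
Your proof is correct and takes essentially the same route as the paper's: the paper's proof sketch likewise uses the identity map in the reflexive case, constructs a map for each single $\sqsubset_1$-step, and composes these maps along a chain of transformations, exactly as in your induction. The only difference is that you spell out the explicit single-step maps and index bookkeeping for (T1)--(T3), which the paper dismisses as ``easy to construct,'' and your constructions are indeed the right ones.
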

\begin{proof}[sketch]
  If $w=w'$ then take $\pi$ to be the identity. If $w'$ is obtained
  from $w$ using one of the transformations, that is, if
  $w\sqsubset w'$, then, for each transformation it is easy to construct a
  function~$\pi$ satisfying the stated properties. If
  $w\sqsubsetneq w'$, then there is a sequence
  $(w_i)_{0\leq i \leq n}$ of observations, with $n\geq 1$, such that
  $w=w_0\sqsubset w_1 \sqsubset \dots \sqsubset w_n=w'$. From the
  previous observation there is a sequence of functions
  $\pi_i:\position(w_{i})\to\position(w_{i-1})$, with $1\leq i\leq n$, each
  satisfying the stated properties. Then it is easy to see that their
  composition $\pi=\pi_1\circ\dots\circ\pi_{n}$ also satisfies these
  properties.
  \qed
\end{proof}

The proof of Theorem~\ref{thm:monotonicity} is similar to that of
Theorem~\ref{thm:osem_refinement} and is thus omitted. We just note
that the omitted proof uses properties~(R1) to (R3) from
Lemma~\ref{lem:refinement}, which correspond to the ones given in the
proof of Theorem~\ref{thm:osem_refinement}.

\subsection{Proof of Theorem~\ref{thm:decidable}}

We first show that the problem is PSPACE-hard by reducing the
satisfiability problem for quantified Boolean logic (QBL) to it.  For
simplicity, we assume that \MTLdata comprises the temporal past-time
connectives $\once$~(``once'') and $\historically$~(``historically''),
which are the counterparts of $\eventually$ and $\always$. Their
semantics is as expected. A slightly more involved reduction without
these temporal connectives is possible.

Let $\alpha$ be a closed QBL formula over propositions
$p_1,\dots,p_n$.  We define the set $P$ of predicate symbols as
$\{P_1,\dots,P_n\}$, where each predicate symbol has arity $1$.
Moreover, let $R:=\{r\}$ and $D:=\{0,1\}$, and let $w$ be the
observation
$(\{0\},\sigma,\rho_0)(\{1\},\sigma,\rho_1)(\{3\},[\,],[\,])((3,\infty),[\,]\,[\,])$,
with $\sigma(P_i)=\{1\}$, for each $i\in\{1,\dots,n\}$, and
$\rho_i(r)=i$, for $i\in\{0,1\}$.  Finally, we translate the QBL
formula $\alpha$ to an \MTLdata formula $\alpha^*$ as follows.
\begin{equation*}
  \begin{array}{@{}r@{\ }lcr@{\ }lcl@{}}
    p_i^* &:= P_i(x_i) && 
    (\neg \beta)^* &:= \neg \beta^* &&
    (\beta\vee\gamma)^* := \beta^*\vee\gamma^*
    \\
    (\exists p_i.\,\beta)^* &:= \once\eventually_{[0,1]}\freeze{}{x_i}{\beta^*}
    &&
    (\forall p_i.\,\beta)^* &:= \historically\,\always_{[0,1]}\freeze{}{x_i}{\beta^*}
  \end{array}  
\end{equation*}
It is easy to see that $\alpha$ is satisfiable iff
$\eosem{w,0,[\,]}{\alpha^*}=\true$.

We only sketch the problem's membership in PSPACE.  Note that $w$ is
finite. If there is no time point in $w$ with timestamp $\tau$, then
$\eosem{w,\tau,\nu}{\phi}=\unknown$. Suppose that $i\in\position(w)$
is a time point in $w$ with timestamp~$\tau$.
A computation of $\phi$'s truth value at position $i$ can be easily
obtained from the inductive definition of the satisfaction
relation~$\omodels$.  This computation can be done in polynomial space
when traversing the formula structure depth-first. Note that for a
position, a subformula of $\phi$ might be visited multiple times with
possibly different partial valuations.

%%% Local Variables:
%%% mode: latex
%%% TeX-master: "main"
%%% End:

\section{Additional Algorithmic Details}
\label{app:pseudocode}

We present in this section the procedures~\ls{Init},
\ls{NewTimePoint}, \ls{PropagateDown}, and \ls{PropagateUp}, which are
called from the main procedure. Their pseudo-code is given in the
Listings~\ref{fig:proc_init} to~\ref{fig:proc_up}, respectively.

\begin{listing}[t]
  \centering
  \begin{minipage}[t]{0.15\linewidth}
    \caption{}
    \label{fig:proc_init}
  \end{minipage}
  \quad
  \begin{minipage}[t]{0.8\linewidth}
\begin{lstlisting}
procedure Init($\phi$)
  verdicts := $\emptyset$
  $J$ := $[0,\infty)$
  foreach $\psi\in \sub(\phi)$ # in a bottom-up manner
    case $\psi=p(\bar{x})$: $\Psi^{\psi,J,\emap}$ := $\psi^{J}$
    case $\psi=\neg\alpha$: $\Psi^{\psi,J,\emap}$ := $\neg\alpha^J$
    case $\psi=\alpha\lor\beta$: $\Psi^{\psi,J,\emap}$ := $\alpha^J \lor \beta^J$
    case $\psi=\freeze{}{x}{\alpha}$: $\Psi^{\psi,J,\emap}$ := $\alpha^J$
    case $\psi=\alpha\until_I\beta$: 
      if $I = \tl$ then tc := $\true$ else tc := $\tcp^{J,J}$
      $\Psi^{\psi,J,\emap}$ := tc $\land$ $\bar\beta^J \land \bar{\bar\alpha}^J$
  foreach $\alpha$ in Atoms($\phi$) with $\alpha=\true$ do
    PropagateUp($\alpha$, $J$, $\true$)
\end{lstlisting}
  \end{minipage}
\end{listing}

The procedure \ls{Init} initializes the state of the monitor, which
consists of the formulas $\fPsi{}{\gamma,\co{0,\infty},\emap}$, for
each subformula $\gamma$ of the monitored formula~$\phi$.
These formulas are as defined in Section~\ref{subsubsec:state}.
The procedure also propagates the Boolean value~$\true$ from the
atoms~$\true$ of $\phi$.

\begin{listing}[t]
  \centering
  \begin{minipage}[t]{0.15\linewidth}
    \caption{}
    \label{fig:proc_newtp}
  \end{minipage}
  \quad
  \begin{minipage}[t]{0.8\linewidth}
\begin{lstlisting}
procedure NewTimePoint($\phi$, $J$, new)
  foreach $\psi\in \sub(\phi)$ # in a top-down manner
    foreach $\nu$ with $\fPsi{}{\psi,J,\nu}$ $\neq$ nil do
      foreach $K$ in new do
        case $\psi = p(\bar{x})$:
          $\fPsi{}{\psi,K,\nu}$ := Apply($\fPsi{}{\psi,J,\nu}$, $[\psi^J\mapsto \psi^K]$)
        case $\psi = \neg\alpha$:
          $\fPsi{}{\psi,K,\nu}$ := Apply($\fPsi{}{\psi,J,\nu}$, $[\alpha^J\mapsto \alpha^K]$)
        case $\psi = \alpha\lor\beta$:
          $\fPsi{}{\psi,K,\nu}$ := Apply($\fPsi{}{\psi,J,\nu}$, $[\alpha^J\mapsto \alpha^K, \beta^J\mapsto \beta^K]$)
        case $\psi = \freeze{}{x}\alpha$:
          $\fPsi{}{\psi,K,\nu}$ := Apply($\fPsi{}{\psi,J,\nu}$, $[\alpha^J\mapsto \alpha^K]$)
        case $\psi = \alpha\until_I\beta$:
          $\fPsi{}{\psi,K,\nu}$ := Apply($\fPsi{}{\psi,J,\nu}$, $[\tcp^{H,J}\mapsto\tcp^{H,K}]_{\tcp^{H,J}\in \AP(\fPsi{}{\psi,J,\nu})}$)
      foreach $\gamma, H, \mu$ with $\psi^J\in \AP(\fPsi{}{\gamma,H,\mu})$ and ($\gamma=\psi\until_I\_$ or $\gamma=\psi\until_I\_$)
        $\theta$ := RefinementUntil($\gamma$, $H$, $J$, new)
        $\fPsi{}{\gamma,H,\mu}$ := Apply($\fPsi{}{\gamma,H,\mu}$, $\theta$)

procedure RefinementUntil($\alpha\until_I\beta$, $H$, $J$, new)
  anchor, continuation := $\false$, $\true$
  for $K$ in new with $K\geq H$ do
    if Singleton(K) then cont := $\true$ else $\bar{\bar\alpha}^K$
    anchor := anchor $\lor$ $\bar\beta^K \land \tcp^{K,H} \land \mathsf{cont} \land \mathsf{continuation}$
    continuation := continuation $\land$ $\bar\alpha^K$
  return $[\tcp^{J,K}\mapsto\true, \bar\beta^J\mapsto \mathsf{anchor}, \bar\alpha^J\mapsto \mathsf{continuation}, \bar{\bar\alpha}^J\mapsto \true]$
\end{lstlisting}
\end{minipage}
\end{listing}

The procedure \ls{NewTimePoint} transforms formulas
$\fPsi{}{\gamma,K,\nu}$ so that they reflect the interval structure of
the new observation, the one obtained after receiving the current
message. % It does this by 
Recall that, in the pseudo-code, $J$ is the interval that is split at
the current iteration and the sequence \ls{new} consists of those
intervals among $J\cap\co{0,\tau}$, $\set{\tau}$, and
$J\cap(\tau,\infty)$ that are not complete.
\ls{NewTimePoint} creates a new formula $\fPsi{}{\gamma,K,\nu}$ for
each $K\in\mathsf{new}$ and each each $\nu$ such that the variable
$\fPsi{}{\gamma,J,\nu}$ is defined. The new formula is obtained by
applying a substitution which translates atomic propositions
$\alpha^J$ into propositional formulas over atomic
propositions~$\alpha^{K'}$ with $K'\in\mathsf{new}$. For non-temporal
formulas this propositional formula is simply $\alpha^K$. That is the
substitution is $[\alpha^J\mapsto\alpha^K]$. For temporal formulas the
substitution is obtained in two steps: a first substitution replaces
atomic propositions of the form $\tcp^{H,J}$ with atomic substitutions
of the form~$\tcp^{H,K}$, and a second substitution, obtained by
calling the procedure~\ls{RefinementUntil}, deals with atomic
propositions of the form $\alpha^J$.

\ls{NewTimePoint} also transforms some formulas
$\fPsi{}{\gamma,K,\nu}$ with $K\not\in\mathsf{new}$, that is for
intervals $K$ that occur in the letters of the old observation, that
is, the one from the previous iteration. It does that for those
formulas that have atomic propositions of the form $\alpha^J$. Note
that then it is necessarily the case that $\gamma$ is a temporal
formula. The required substitution is also computed by the
\ls{RefinementUntil} procedure.

The \ls{RefinementUntil} procedure produces the necessary substitution
to update the atomic propositions $\bar{\alpha}^J$,
$\bar{\bar{\alpha}}^J$, and $\bar\beta^J$ that may occur in formulas
$\fPsi{}{\gamma,H,\nu}$ for some interval~$H$ of the new observation,
where $\gamma=\alpha\until_I\beta$. (Note that before calling
\ls{RefinementUntil} the new formulas $\fPsi{}{\gamma,K,\nu}$ with
$K\in\mathsf{new}$ have already been created.)
The substitution of the proposition $\bar\alpha^J$ is
straightforward. Namely, we replace $\bar\alpha^J$ by the conjunction
$\bigwedge_{K\in\mathsf{new},K\geq H} \bar{\alpha}^K$.
Next, the atomic propositions $\tcp^{J,K}$ and $\bar{\bar\alpha}^J$
are discarded: they appeared in $\fPsi{}{\gamma,H,\nu}$ as
conjuncts and thus replacing them by $\true$ effectively discards
them.
The substitution of $\bar\beta^J$ is more involved. We illustrate it
with an example. Suppose that $\mathsf{new} = (K_1,K_2,K_3)$ and that
$H\leq K_1$. Note that $K_2$ is a singleton.
In this case, $\bar\beta^J$ is substituted by the following formula.
\begin{align*}
& \bar{\beta}^{K_1}\land\tcp^{K_1,H}\land\bar{\bar\alpha}^{K_1}
\ \lor \\
& \bar{\beta}^{K_2}\land\tcp^{K_2,H}\land\bar\alpha^{K_1}
\ \lor \\
& \bar{\beta}^{K_3}\land\tcp^{K_3,H}\land\bar{\bar\alpha}^{K_3}\land\bar\alpha^{K_2}\land\bar\alpha^{K_1}
\end{align*}

The application of the computed substitution is performed by the
procedure~\ls{Apply} given in Listing~\ref{fig:proc_up}. \ls{Apply}
actually does more than just applying the substitution given as an
argument. First, it also simplifies the resulting formula. The actual
application and the simplification are performed by procedure
\ls{Substitute}, which is as expected and thus not detailed further.
Second, \ls{Apply} checks whether the resulting formula is a Boolean
constant. If this is the case, then propagation is initiated by
calling the \ls{PropagateUp} procedure.

Note that if $\fPsi{}{\gamma,J,\nu}$ is already a Boolean
constant that has not yet been propagated, because for instance the
$\gamma=\bar\beta$ and the constant is $\true$, then, since \ls{Apply}
tries again to propagate it, this Boolean value will actually be
propagated for the new interval $\set{\tau}$, as it is a singleton and
thus corresponds to a time point.

\begin{listing}[t]
  \centering
  \begin{minipage}[t]{0.15\linewidth}
    \caption{}
    \label{fig:proc_down}
  \end{minipage}
  \quad
  \begin{minipage}[t]{0.8\linewidth}
    \begin{lstlisting}
 procedure PropagateDown($\psi$, $J$, $x$, $d$)
  foreach $\nu$ with $\fPsi{}{\psi, J, \nu}$ $\neq$ nil
    $\fPsi{}{\psi, J, \nu[x\mapsto d]}$ := $\fPsi{}{\psi, J, \nu}$ 
    if $\psi$ $\not\in$ Atoms($\phi$) then
      foreach $\alpha^K \in \AP(\fPsi{}{\psi, J, \nu})$
        PropagateDown($\alpha$, $K$, $x$, $d$)
\end{lstlisting}
\end{minipage}
\end{listing}

The pseudo-code of the procedures \ls{PropagateDown} and
\ls{PropagateUp} is straightforward in that it implements downward and
upward propagation exactly as described in
Section~\ref{subsubsec:main}.

\begin{listing}[t]
  \centering
  \begin{minipage}[t]{0.15\linewidth}
    \caption{}
    \label{fig:proc_up}
  \end{minipage}
  \quad
  \begin{minipage}[t]{0.8\linewidth}
    \begin{lstlisting}
procedure Apply($\fPsi{}{\psi,J,\nu}$, $\theta$)
  $f$ := Substitute($\fPsi{}{\psi,J,\nu}$, $\theta$)
  if $f\in\Two$ then PropagateUp($\psi$, $J$, $\nu$, $f$)
  return $f$

procedure PropagateUp($\psi$, $J$, $b$)
  $\gamma$ := Parent($\psi$)
  if $\gamma$ = nil then
    if Singleton($J$) then verdicts := verdicts $\cup$ $\set{(\mathsf{Timestamp}(J),b)}$
  else if CanPropagateUp($\psi$, $J$, $b$)
    $\theta$ := [$\psi^{J} \mapsto b$]
    foreach $K, \mu$ with $\psi^{J}\in \AP(\fPsi{}{\gamma,K,\mu})$
      $\fPsi{}{\gamma,K,\mu}$ := Apply($\fPsi{}{\gamma,K,\mu}$, $\theta$)

procedure CanPropagateUp($\psi$, $J$, $b$)
  $\gamma$ := Parent($\psi$)    
  case $\gamma$ = nil or $\gamma \neq \_\until_I\_$: return $\true$
  case $\gamma$ = $\_\until_I\psi$: return (Singleton($J$) or not $b$)
  case $\gamma$ = $\psi\until_I\_$: return (Singleton($J$) or $b$)
\end{lstlisting}
\end{minipage}
\end{listing}

\section{Soundness and Completeness Proof}
\label{app:proof}

In this section, we will consider many substitutions from atomic
propositions to propositional formulas.
Given such a substitution~$\theta$ and a propositional formula~$\psi$,
we denote by $\theta(\psi)$ and $\psi\theta$ the formula obtained by
replacing in $\psi$ the atomic propositions $p$ that occur in both
$\psi$ and in $\pdef(\theta_i)$ by $\theta_i(p)$.

\subsection{Overview}

Let $\bar w$ be valid observation sequence and let $\phi$ be the
monitored formula. 
We assume that $\phi$ is not an atomic formula.
We let $(J^i_j,\sigma^i_j,\rho^i_j)$ be the $j$th letter of $w_i$. We
drop the superscript~$i$ if it is clear from the context.
Also, given an interval $J$ in $w_i$, we denote by $\hat{J}$ the index
$j\in\pos(w_i)$ such that $J_j = J$, assuming that the iteration~$i$
is clear from the context.

We first state a lemma from which correctness follows, and only
later prove the lemma.
We recall that $\fPsi{i}{\psi,J,\nu}$ denotes the the value of the
variable $\fPsi{}{\psi,J,\nu}$ from the pseudo-code at the end of
iteration~$i$.

\begin{lemma}\label{lemma:corol}
  For any $i\in\Nat$, $j\in\position(w_i)$, and $b\in\Two$, we have
  that
  $$
  \theta_i^{\emap}(\fPhi{i}{\phi,J}) = b
  \quad \text{iff} \quad
  \fPsi{i}{\phi,J,\emap} = b,
  $$  
  where $J$ is the interval of the $j$th letter of $w_i$.
\end{lemma}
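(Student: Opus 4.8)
The plan is to obtain Lemma~\ref{lemma:corol} by specializing a stronger invariant that I would establish by induction on the iteration index~$i$, quantified over \emph{all} subformulas~$\gamma$ of~$\phi$, all intervals~$J$ occurring in a letter of~$w_i$, and all valuations~$\nu$ relevant for~$\gamma$ and~$\hat J$. To relate the monitor's formulas $\fPsi{i}{\gamma,J,\nu}$ to the reference formulas $\theta_i^\mu(\fPhi{i}{\gamma,J})$ (with $\mu = \nu[x\mapsto\rho_{\hat J}(r)]$ when $\gamma=\freeze{r}{x}{\alpha}$ and $\mu=\nu$ otherwise), I first make precise the meaning of the placeholder atoms: let $\mathrm{exp}$ be the substitution that replaces $\bar\beta^{K}$, $\bar\alpha^{K}$, and $\bar{\bar\alpha}^{K}$ by $\tpp^{K}\land\beta^{K}$, $\tpp^{K}\to\alpha^{K}$, and $\tpp^{K}\lor\alpha^{K}$, respectively. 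The invariant I would carry is
\[
  \theta_i^\mu\bigl(\mathrm{exp}(\fPsi{i}{\gamma,J,\nu})\bigr)
  \;\equiv\;
  \theta_i^\mu(\fPhi{i}{\gamma,J}),
\]
together with the bookkeeping claim that a node for $(\gamma,J,\nu)$ is present in the monitor's state exactly when $\nu$ is relevant for $\gamma$ and $\hat J$ (a node that has simplified and propagated being regarded as holding its final Boolean value). The equivalence of Section~\ref{subsubsec:state} connecting $\theta_i^\mu(\fPhi{i}{\gamma,J})$ with $\osem{w_i,\hat J,\nu}{\gamma}$ is then used freely, and the extra conjunct $\bar{\bar\alpha}^{K}$ introduced in $\fPsi$ is absorbed using $\tpp^{K}\equiv\tpp^{K}\land(\tpp^{K}\lor\alpha^{K})$.

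For the base case $i=0$, I would check directly that the clauses of \textsf{Init} build, for the single interval $[0,\infty)$, exactly the formulas $\fPsi{0}{\gamma,[0,\infty),\emap}$ described in Section~\ref{subsubsec:state}, so that $\theta_0^{\emap}(\mathrm{exp}(\cdot))$ matches $\fPhi{0}{\gamma,[0,\infty)}$; the only instantiation performed is the propagation of $\true$ from the atom~$\true$. For the inductive step I would assume the invariant at the end of iteration~$i$ and follow the three state-changing procedures run in iteration~$i+1$. The procedure \textsf{PropagateDown} merely copies a node to a newly relevant valuation $\nu[x\mapsto d]$ without altering its propositional content, so it maintains the bookkeeping claim and preserves the equivalence for the new valuation. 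The procedure \textsf{PropagateUp} instantiates an atom to a Boolean value; since refinement only increases knowledge, the Boolean entries of $\theta$ are stable across iterations (a consequence of Theorem~\ref{thm:monotonicity}), so each instantiation faithfully applies one further binding of $\theta_{i+1}^\mu$, and the semantics-preserving simplification inside \textsf{Apply} keeps both sides of the equivalence in lockstep.

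The main obstacle is \textsf{NewTimePoint}, and within it \textsf{RefinementUntil}, applied to a temporal subformula $\gamma=\alpha\until_I\beta$ when an interval~$J$ is split into the intervals of $\mathsf{new}$. Here I must verify that the computed substitution transforms $\theta_i^\mu(\fPhi{i}{\gamma,J})$ into $\theta_{i+1}^\mu(\fPhi{i+1}{\gamma,K})$ for each affected interval~$K$: the disjunct anchored at~$J$, and each occurrence of $\bar\beta^J$, $\bar\alpha^J$, $\bar{\bar\alpha}^J$ in a formula anchored at an earlier interval, must be redistributed over the subintervals of $\mathsf{new}$ in accordance with the definition of $\fPhi{i+1}$ and with the values of $\tpp^{K}$ and $\tcp^{K,H}$ after splitting. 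I expect this to reduce to a finite case analysis on whether each $K\in\mathsf{new}$ is the singleton $\set{\tau}$ or a non-singleton, matched against the explicit $\bar\beta^J$-substitution displayed in Appendix~\ref{app:pseudocode} and against the guards of \textsf{CanPropagateUp}. The delicate point is that these guards---withholding $\true$ for $\beta$-atoms and $\false$ for $\alpha$-atoms at non-singletons---are exactly what keeps the $\mathrm{exp}$-image in step with $\tpp^{K}=\unknown$ at non-singletons: holding back a simplification is sound precisely because a later refinement may still place no time point in~$K$.

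Finally, I would specialize the invariant to $\gamma=\phi$, $\nu=\emap$, and the interval~$J$ at position~$j$, so that the invariant reads $\theta_i^{\emap}(\mathrm{exp}(\fPsi{i}{\phi,J,\emap}))\equiv\theta_i^{\emap}(\fPhi{i}{\phi,J})$ (a binding introduced by a top-level freeze being carried by a trigger down to the corresponding $\alpha$-node, which the invariant's valuation bookkeeping already tracks). Because \textsf{Apply} always simplifies its result, the stored formula $\fPsi{i}{\phi,J,\emap}$ is a constant $b\in\Two$ if and only if its $\mathrm{exp}$-image under $\theta_i^{\emap}$ is equivalent to~$b$, which by the invariant holds if and only if $\theta_i^{\emap}(\fPhi{i}{\phi,J})\equiv b$. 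Reading ``$=b$'' on both sides of the lemma as equivalence to the constant~$b$, this is precisely the claimed biconditional.
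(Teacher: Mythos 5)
Your proposal takes essentially the same route as the paper's proof: the paper likewise derives Lemma~\ref{lemma:corol} by specializing a stronger invariant (its Lemma~\ref{lemma:key}) quantified over all subformulas, intervals, and relevant valuations, proved by an outer induction on the iteration~$i$ with an inner structural induction, with \textsf{Init} as base case, the main work in \textsf{NewTimePoint}/\textsf{RefinementUntil} (verified by the same singleton/non-singleton case analysis), and the withholding guards of \textsf{CanPropagateUp} treated exactly as you describe. Moreover, your invariant $\theta_i^\mu(\mathrm{exp}(\fPsi{i}{\gamma,J,\nu}))\equiv\theta_i^\mu(\fPhi{i}{\gamma,J})$ coincides, modulo the induction hypothesis, with the paper's $\deltap_i^\mu(\fPsi{i}{\gamma,J,\nu})\equiv\theta_i^\mu(\fPhi{i}{\gamma,J})$, since the state-aware substitution $\deltap_i^\mu$ agrees with your $\theta_i^\mu\circ\mathrm{exp}$ once stored Boolean values and semantic truth values are known to agree.
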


We show next how observational soundness and completeness follow from
this lemma.
We first note that for any iteration $i\in\Nat$, the variable
$\fPsi{}{\psi,J,\nu}$ is defined at the end of iteration~$i$ for
some\footnote{We will later, in Lemma~\ref{lemma:key}, also
  characterize for which $\nu$ is $\fPsi{i}{\psi,J,\nu}$ defined.} 
$\nu$ if and only if $\psi$ is
a subformula of~$\phi$ and $J$ is an interval in~$w_i$.
Next, we note that the global variable~\ls{verdicts} is only updated
in the execution of \ls{PropagateUp($\psi$, $J$, $b$)} when
$\psi=\phi$ and $J$ is a singleton. Furthermore, by analyzing where it
is called from, we see that all calls are preceded by setting
$\fPsi{}{\psi,J,\nu}$ to~$b$.
Moreover, $\nu$ has to be $\emap$ because for any defined variable
$\fPsi{}{\psi,J,\nu}$ with $\nu\neq\emap$ we have that
$\psi\neq\phi$. Indeed, $\fPsi{}{\psi,J,\nu}$ is only used with a
``new'' $\nu$ in \ls{PropagateDown} and this procedure is never called
for $\psi=\phi$.
We thus obtain that if some tuple $(\tau,b)$ is added to
\ls{verdicts} then $\fPsi{}{\phi,\set{\tau},\emap}=b$.

Observational soundness follows directly from the lemma, the above
stated properties of $\theta_i^\nu$ and $\Phi_i$, and the observations
made in the previous paragraph.
Note that, since $\phi$ is closed, then $\osem{w_i,j,\nu}{\phi} =
\osem{w_i,j,\emap}{\phi}$, for any partial valuation~$\nu$.

Consider now completeness. Say that $\osem{w_i,j,\nu}{\phi}=b\in\Two$
and $j$ is a time point with timestamp $\tau$. Let $J=\set{\tau}$.
Then $\osem{w_i,j,\emap}{\phi} = b$ and thus
$\theta_i^{\emap}(\fPhi{i}{\phi,J}) = b$. By the lemma, we have that
$\fPsi{i}{\phi,J,\emap} = b$. Then there is an iteration $i'\le i$
when $\fPsi{i'}{\phi,J',\emap}$ has been set to~$b$, where $J'$ is the
interval in $w_{i'}$ from which $J$ originates.
Let $i''$ be the first iteration when $J$ is a letter of $w_{k}$ for
some $k$. At this iteration $\fPsi{i''}{\phi,J,\emap}$ is set to
$\fPsi{i'}{\phi,J',\emap}$ (see the \ls{NewTimePoint} procedure). Clearly $i'\leq i''\leq i$.
The setting of $\fPsi{i''}{\psi,J,\emap}$ to a new value is preceded
by a call to \ls{Apply}, and since this new value is a Boolean value,
\ls{Apply} calls \ls{PropagateUp} which adds $(\tau,b)$ to
\ls{verdicts}.

\subsection{The Key Lemma}

We state next the key lemma of the proof, from which
Lemma~\ref{lemma:corol} follows.
In order to state this more general lemma, we first introduce some
additional notation.

\subsubsection{Additional Notation.}

Given an $i\in\Nat$, a subformula $\psi$ of $\phi$, and a
$j\in\position(w_i)$, we define inductively the set $\Val_i(\psi,j)$
of \emph{relevant valuations} for $\psi$ at iteration~$i$ and
position~$j$.
For $i=0$, we set $\Val_i(\psi,0) := \set{\emap}$, for any subformula
$\psi$ of $\phi$, and for $i>0$ we let
\begin{align*}
  \Val'_i(\psi,j) & := 
  \left\{
  \begin{array}{l@{\quad}l}
    \set{[\,]} & \text{if $\psi=\phi$},
    \\
    \Val_i(\gamma,j) & \text{if $\gamma=\neg\psi$, $\gamma=\psi\lor\psi'$, or $\gamma=\psi'\lor\psi$},
    \\
    \setx{\nu[x\mapsto \rho^i_j(r)]}{\nu\in\Val_i(\gamma,j)} & \text{if $\gamma=\freeze{x}{r}{\psi}$}
    \\
    \cup_{h\in A}\Val_i(\gamma,h) & \text{if $\gamma=\psi\until_I\psi'$}
    \\
    \cup_{k\in B}\Val_i(\gamma,k) & \text{if $\gamma=\psi'\until_I\psi$}
  \end{array}
  \right.
  \\
  \text{and} \qquad\qquad &
  \\
  \Val_i(\psi,j) & := \setx{\nu\in\Val'_i(\psi,j)}{\osem{w_i,j,\nu}{\psi}\not\in\Two},
\end{align*}
where % $\psi$ is a proper subformula of $\phi$.
$\gamma$ is the ``parent'' of $\psi$ (i.e.~the subformula of $\phi$
that has $\psi$ as a direct subformula) and
\begin{align*}
  A := &\; \setx{h\in\Nat}{k<h\leq j \text{ where $k\in\Nat$ is such that $k\leq j$ and $\tc_{w_i,I}(j,k)\neq\false$}}, \\
  B := &\; \setx{k\in\Nat}{k\leq j \text{ and $\tc_{w_i,I}(j,k)\neq\false$}}.
\end{align*}

We denote by $\AP_i$ the set of the atomic propositions used by the
formulas~$\fPhi{i}{\gamma,J}$.
We let $\bAP_i$ be the set of atomic propositions obtained from
$\AP_i$ be removing the atomic propositions of the form $\tpp^J$ and
replacing atomic propositions of the form $\beta^J$ and $\alpha^J$
where $\alpha\until_I\beta$ is an subformula of $\phi$ for some $I$,
by the atomic propositions $\bar\beta^J$ and respectively
$\bar\alpha^{J}$ and $\bar{\bar\alpha}^{J}$.
Note that $\bAP_i$ represents the set of atomic propositions of the
formulas $\fPsi{i}{\gamma,J,\nu}$ from the pseudo-code.

We also let $\deltap_i^\nu$ be the following substitution: 
\begin{align*}
\deltap_i^\nu(\bar\beta^J) := &
\left\{
\begin{array}{l@{\quad}l}
\tpp^J\land\beta^J & \text{if $J$ not a singleton and $\fPsi{i}{\beta,J,\nu}\notin\Two$}
\\
\tpp^J & \text{if $J$ not a singleton and $\fPsi{i}{\beta,J,\nu}=\true$}
\\
\beta^J & \text{if $J$ is a singleton and $\fPsi{i}{\beta,J,\nu}\notin\Two$}
\\
\text{undefined} & \text{otherwise}
\end{array}
\right.
\\
\deltap_i^\nu(\bar\alpha^J) := &
\left\{
\begin{array}{l@{\quad}l}
\tpp^J\to\alpha^J & \text{if $J$ not a singleton and $\fPsi{i}{\alpha,J,\nu}\notin\Two$}
\\
\neg\tpp^J & \text{if $J$ not a singleton and $\fPsi{i}{\alpha,J,\nu}=\false$}
\\
\alpha^J & \text{if $J$ is a singleton and $\fPsi{i}{\alpha,J,\nu}\notin\Two$}
\\
\text{undefined} & \text{otherwise}
\end{array}
\right.
\\
\deltap_i^\nu(\bar{\bar\alpha}^J) := &
\left\{
\begin{array}{l@{\quad}l}
\tpp^J\lor\alpha^J & \text{if $J$ not a singleton and $\fPsi{i}{\alpha,J,\nu}\notin\Two$}
\\
\tpp^J & \text{if $J$ not a singleton and $\fPsi{i}{\alpha,J,\nu}=\false$}
\\
\alpha^J & \text{if $J$ is a singleton and $\fPsi{i}{\alpha,J,\nu}\notin\Two$}
\\
\text{undefined} & \text{otherwise}
\end{array}
\right.
\end{align*}
Note that, for instance when $J$ is a singleton and
$\fPsi{i}{\alpha,J,\nu}=\true$, there is no need to define
$\deltap_i^\nu(\bar\alpha^J)$, because in this case $\bar\alpha^J$ is not
an atomic proposition of $\fPsi{i}{\alpha\until_I\beta,K,\nu}$: that
atomic proposition has already been instantiated.

\subsubsection{The Lemma and its Proof.}

The following key lemma states the main invariant of the monitoring algorithm.

\begin{lemma}\label{lemma:key}
  For any non-atomic subformula $\psi$ of $\phi$, $i\in\Nat$,
  $j\in\position(w_i)$, partial valuation~$\nu\in\Val_i(\psi,j)$, we
  have that
  $$
  \theta_i^\mu(\fPhi{i}{\psi,J})
  \equiv
  \deltap_i^\mu(\fPsi{i}{\psi,J,\nu}),
  $$
  where $J$ is the interval of the $j$th letter of $w_i$ and $\mu =
  \nu[x\mapsto \rho^i_j(r)]$ if $\psi=\freeze{r}{x}{\alpha}$ and
  $\mu=\nu$ otherwise.
\end{lemma}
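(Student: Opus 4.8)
The plan is to prove the lemma by induction on the iteration number $i$, treating the stated equivalence as the loop invariant of \ls{Monitor}, with a nested structural induction on $\psi$ following the bottom-up order in which \ls{PropagateUp} instantiates atoms. The key conceptual point is that the left-hand side $\theta_i^\mu(\fPhi{i}{\psi,J})$ is fixed by the semantics on $w_i$, whereas the right-hand side $\deltap_i^\mu(\fPsi{i}{\psi,J,\nu})$ depends on the monitor's state; since both $\theta_i$ and $\deltap_i$ refer to the \emph{end-of-iteration} state, I would prove the equivalence for the final configuration of iteration $i$, assuming it for iteration $i-1$. Two facts organize the bookkeeping. First, $\fPsi$ contains no $\tpp$ atoms --- they are absorbed into the barred atoms $\bar\beta^K,\bar\alpha^K,\bar{\bar\alpha}^K$, as recorded by $\bAP_i$ --- and $\deltap_i^\mu$ unfolds each barred atom into its $\tpp$-dependent meaning, reintroducing $\tpp^K$ exactly for non-singleton $K$. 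Second, the simplification performed by \ls{Apply}/Substitute keeps an ordinary atom ($\tcp^{K,H}$ or a child atom $\alpha^K$) precisely when its $w_i$-value is $\unknown$, so that on the right-hand side the surviving ordinary atoms match exactly those that $\theta_i$ leaves uninstantiated on the left. Consequently, on the non-barred atoms $\deltap_i^\mu$ acts as the identity, and for the nontemporal connectives the equivalence reduces, via the definition of $\fPsi$ from $\fPhi$ in Section~\ref{subsubsec:state}, to the invariant holding for the direct subformulas.

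For the base case ($i=0$) I would read off \ls{Init}: for each non-atomic $\psi$ the initialized $\fPsi{0}{\psi,\co{0,\infty},\emap}$ is exactly the variant of $\fPhi{0}{\psi,\co{0,\infty}}$ from Section~\ref{subsubsec:state}, and applying $\deltap_0^\emap$ returns $\theta_0^\emap(\fPhi{0}{\psi,\co{0,\infty}})$ by a case analysis on the top connective (the temporal case using the single disjunct $k=j$, with guard $\tcp^{J,J}$ or $\true$ according to whether $I=\tl$). The nontemporal inductive cases are routine given the first organizing fact; the freeze case is the only delicate one. There the valuation shift $\mu=\nu[x\mapsto\rho^i_j(r)]$ on the left is mirrored on the right by the trigger convention and by \ls{PropagateDown}, which merely copies $\fPsi{}{\alpha,K,\nu}$ to $\fPsi{}{\alpha,K,\nu[x\mapsto d]}$ without changing the syntactic formula; hence the induction hypothesis for $\alpha$ transfers verbatim to the shifted valuation, and the relevant-valuation set $\Val_i(\psi,j)$ stays synchronized with the nodes actually present.

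The core of the inductive step, and the step I expect to be the main obstacle, is \ls{NewTimePoint} together with \ls{RefinementUntil} for a temporal subformula $\gamma=\alpha\until_I\beta$. Here the interval $J\ni\tau$ of $w_{i-1}$ is replaced in $w_i$ by the new intervals in \ls{new}, with any complete interval simply absent --- so its position is missing from both $w_i$ and \ls{new}, matching the removal by (T\ref{enum:observation_removal}). The ideal $\fPhi{i}{\gamma,H}$ is the disjunction re-indexed over the positions of $w_i$, and the crux is to show that the substitution returned by \ls{RefinementUntil} reconstructs precisely this disjunction from the barred representation in $\fPsi{i-1}{\gamma,H,\nu}$: the clause $\bar\beta^J\mapsto\mathsf{anchor}$ produces one disjunct per new anchor $K$, each guarded by $\tcp^{K,H}$ and carrying the continuation conjunct over the intervening new intervals, the clause $\bar\alpha^J\mapsto\mathsf{continuation}$ replaces $\bar\alpha^J$ by $\bigwedge_{K\in\mathsf{new}}\bar\alpha^K$, and the clauses $\tcp^{J,K}\mapsto\true$ and $\bar{\bar\alpha}^J\mapsto\true$ discard the now-stale conjuncts. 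The heavy part is verifying this combinatorial match: after applying $\deltap_i^\mu$, the reconstructed disjuncts must coincide up to $\equiv$ with the $\theta_i^\mu$-image of the new disjuncts of $\fPhi{i}{\gamma,H}$, with the $\tpp^K$, $\beta^K$, and $\alpha^K$ occurrences lining up with the singleton/non-singleton status of each $K$. This is most transparent on the three-disjunct pattern for $\mathsf{new}=(K_1,K_2,K_3)$ displayed in Appendix~\ref{app:pseudocode}, and I would first establish it there and then argue the general case by induction on $|\mathsf{new}|$.

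Finally, I would account for the data and Boolean propagation on the new singleton $\set{\tau}$. \ls{PropagateDown} extends $\Val_i$ exactly along the freeze quantifiers, as above, and the main loop then sets the leaf atoms $p(\bar x)$ at $\set{\tau}$ to their Boolean values and calls \ls{PropagateUp}. The essential check is that \ls{CanPropagateUp} instantiates a barred atom precisely when, after substituting the propagated value $b$, its $\deltap_i$-meaning becomes Boolean and the atom should disappear: for $\gamma=\_\until_I\beta$ it instantiates iff $J$ is a singleton or $b=\false$, which is exactly the complement of the three defined cases of $\deltap_i^\nu(\bar\beta^J)$, and symmetrically for $\gamma=\alpha\until_I\_$ using $\deltap_i^\nu(\bar\alpha^J)$ and $\deltap_i^\nu(\bar{\bar\alpha}^J)$. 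In the complementary situations the monitor deliberately keeps the barred atom, and $\deltap_i$ supplies its meaning ($\tpp^J\land\beta^J$, $\tpp^J$, $\tpp^J\to\alpha^J$, etc.), where $\tpp^J=\unknown$ leaves exactly the subterms present in the $\theta_i^\mu$-image of $\fPhi{i}{\gamma,J}$. Combining the bottom-up structural induction with these per-procedure equalities establishes the invariant for the end-of-iteration substitutions and completes the induction; the \ls{NewTimePoint}/\ls{RefinementUntil} reconstruction is the one step demanding genuine combinatorial care.
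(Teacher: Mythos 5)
Your proposal is correct and follows essentially the same route as the paper's proof: a nested induction (outer on the iteration~$i$, inner structural on~$\psi$), with the base case read off \textsf{Init}, and the inductive step decomposed into the rewriting phase (\textsf{NewTimePoint}/\textsf{RefinementUntil} reconstructing the until-disjunction over the intervals in \textsf{new}) and the propagation phase (checking that \textsf{CanPropagateUp} instantiates barred atoms exactly in the complement of the defined cases of~$\deltap_i^\nu$). The only difference is presentational: the paper packages your per-procedure checks as a chain of commuting equivalences between named substitutions, e.g.\ $\bar\zeta^J_i\,\bar\delta_i \equiv \bar\delta_{i-1}\,\zeta^{J}_{i}$, $\bar\xi_i^\nu\,\deltap_i^\nu \equiv \bar\delta_i\,\xi_i^\nu$, and $\fPhi{i-1}{\gamma,J'}\,\zeta^{J}_{i} \equiv \fPhi{i}{\gamma,J}$, whereas you verify the same facts procedurally against the pseudocode.
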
 

We first note that Lemma~\ref{lemma:corol} follows easily from
Lemma~\ref{lemma:key}. Indeed, we just need to show that,
$\deltap_i^\mu(\fPsi{i}{\phi,J,\emap}) = b$ iff
$\fPsi{i}{\phi,J,\emap}=b$, for any $b\in\Two$ where $\mu$ is an in
the lemma statement. The right to left direction is trivial, while the
left to right direction also follows easily: if
$\fPsi{i}{\phi,J,\emap}\notin\Two$ then it contains at least an atomic
proposition, and then $\deltap_i^{\emap}(\fPsi{i}{\phi,J,\emap})$ also
contains an atomic proposition, by the definition of
$\deltap_i^{\mu}$, and thus
$\deltap_i^\mu(\fPsi{i}{\phi,J,\emap})\notin\Two$ too.

We devote the rest of this section to the proof of Lemma~\ref{lemma:key}.
We start with an assumption and continue with a series of definitions.

We assume in the pseudo-code that formula rewriting (that is, applying
substitutions) and propagations are separated into two distinct
phases. Note that this can be easily achieved by postponing
propagations, that is, by storing all tuples $(\phi, J, f)$ with
$f\in\Two$ that occur during the execution of \ls{Apply}, and by
performing the corresponding propagations at the end of the
\ls{NewTimePoint} procedure. We choose not to present this two-phase
version of the pseudo-code in order to keep the pseudo-code more
compact.

Let $\mathsf{new}_i$ be the value of \ls{new} at iteration~$i>0$.
Given an interval $J\in\mathsf{new}_i$, we let $\zeta^J_i$ be the
substitution computed by the pseudo-code at some iteration~$i>0$
during the formula rewriting phase. Note that there is no formula
rewriting phase at iteration $i=0$. For $i>0$, for instance when
$\alpha$ is such that its parent is $\psi=\neg\alpha$, we have that
$\zeta^J_i(\alpha^{J'})=\alpha^J$ when $J$ is a new interval in $w_i$
obtained by splitting the interval $J'$ in $w_{i-1}$.\footnote{Note
  that in the case of temporal subformulas, this is the composition of
  two substitutions: the unnamed one and the substitution~$\theta$
  computed by \lsf{RefinementUntil}.}

We let $\bar\xi_i^\nu$ be the substitution computed by the pseudo-code
at iteration~$i$ during the propagation phase for the partial
valuation~$\nu$.
Formally, $\bar\xi_i^\nu$ is a partial valuation over $\bAP_i$ defined
as follows:
\begin{align*}
    \xi_i^\nu(p(\bar{x})^J) & := \hspace*{0.8em} \true\qquad \text{if $\bar{x}\in \pdef(\phi)$, $p\in\pdef(\sigma)$, and $\nu(\bar{x})\in \sigma(p)$},
  \\
  \xi_i^\nu(p(\bar{x})^J) & := \hspace*{0.8em} \false\qquad \text{if $\bar{x}\in \pdef(\phi)$, $p\in\pdef(\sigma)$, and $\nu(\bar{x})\not\in \sigma(p)$},
  \\
  \bar\xi_i^\nu(\gamma^J) & := \hspace*{0.8em} b\qquad \text{if $\gamma$ is non-atomic and $\fPsi{i}{\gamma,J,\nu}=b\in\Two$},
  \\
  \bar\xi_i^\nu(\bar\beta^J) & :=
  \left\{
  \begin{array}{l@{\qquad}l}
    \true  & \text{if $J$ is a singleton and $\fPsi{i}{\beta,J,\nu}=\true$},
    \\
    \false & \text{if $\fPsi{i}{\beta,J,\nu}=\false$},
  \end{array}\right.
  \\
  \bar\xi_i^\nu(\bar\alpha^J) :=
  \bar\xi_i^\nu(\bar{\bar\alpha}^J) & :=
  \left\{
  \begin{array}{l@{\qquad}l}
    \true  & \text{if $\fPsi{i}{\alpha,J,\nu}=\true$},
    \\
    \false & \text{if $J$ is a singleton and $\fPsi{i}{\alpha,J,\nu}=\false$},
  \end{array}\right.
  \\
  \bar\xi_i^\nu(\tcp_{\alpha\until_I\beta}^{K,J}) & := 
  \left\{
    \begin{array}{l@{\qquad}l}
      \true & \text{if $(K-J)\subseteq I$},
      \\
      \false & \text{if $(K-J)\cap I = \emptyset$}.
    \end{array}
  \right.
\end{align*}

For any $i>0$, subformula~$\psi$ of $\phi$, and interval~$J$ in $w_i$,
we have
\begin{equation}
  \label{eq:Psi}
  \fPsi{i}{\psi,J,\nu} := \bar\xi_i^\mu(\zeta^J_i(\fPsi{i-1}{\psi,J',\nu})),
\end{equation}
where $J'$ is the interval in $w_{i-1}$ from which $J$ originates and
$\mu = \nu[x\mapsto \rho^i_j(r)]$ if $\psi=\freeze{r}{x}{\alpha}$ and
$\mu=\nu$ otherwise.

We define next the substitution~$\xi_i^\nu$ over $\AP_i$ and which is
a variant of $\bar\xi_i^\nu$:
\begin{align*}
  %% \xi_i^\nu(p(\bar{x})^J) & := \true\quad \text{if $\bar{x}\in \pdef(\phi)$, $p\in\pdef(\sigma)$, and $\nu(\bar{x})\in \sigma(p)$},
  %% \\
  %% \xi_i^\nu(p(\bar{x})^J) & := \false\quad \text{if $\bar{x}\in \pdef(\phi)$, $p\in\pdef(\sigma)$, and $\nu(\bar{x})\not\in \sigma(p)$},
  %% \\
  \xi_i^\nu(\gamma^J) & := \bar\xi_i^\nu(\gamma^J) \quad  \text{if $\gamma\in\sub(\phi)$},
  \\
  \xi_i^\nu(\tpp^{J}) & := \true \hspace*{3.5em} \text{if $J$ is a singleton},
  \\
  \xi_i^\nu(\tcp_{\psi}^{K,J}) & := \bar\xi_i^\nu(\tcp_{\psi}^{K,J}). &
\end{align*}

We also let $\bar\delta_i$ be the following substitution: 
\begin{align*}
\bar\delta_i(\bar\beta^J) & := \tpp^J\land\beta^J,
\\ 
\bar\delta_i(\bar\alpha^J) & := \tpp^J\to\alpha^J,
\\
\bar\delta_i(\bar{\bar\alpha}^J) & := \tpp^J\lor\alpha^J.
\end{align*}
Note that we do not have that $\deltap_i^\nu = \xi_i^\nu \circ
\bar\delta_i$. Indeed, if $\Psi_i^{\beta,J,\nu}=\false$ then
$\bar\beta^J\, \deltap_i^\nu = \bar\beta^J$, while $\bar\beta^J\,
\bar\delta_i\, \xi_i^\nu = \false$.
However, as we will prove later, we have that $\deltap_i^\nu \circ
\bar\xi_i^\nu = \xi_i^\nu \circ \bar\delta_i$.

For $i>0$ and partial valuation~$\nu$, we define the
substitution~$\theta_{i-1,i}^\nu$ such that it behaves
as~$\theta_{i-1}^\nu$, but it acts however on the atomic
propositions of formulas~$\fPhi{i}{\gamma,J,\nu}$. In other words, in
contrast to~$\theta_i^\nu$, it does not take into account the new
interpretations received at iteration~$i$. Formally,
$\theta_{i-1,i}^\nu$ is a partial function over~$\AP_i$ defined by
\begin{align*}
  \theta_{i-1,i}^\nu(\alpha^{J}) := & \  \theta_{i-1}^\nu(\alpha^{J'})
  \\
  \theta_{i-1,i}^\nu(\tpp^{J}) := & \  \theta_{i-1}^\nu(\tpp^{J'})
  \\
  \theta_{i-1,i}^\nu(\tcp^{H,K}) := & \  \theta_{i-1}^\nu(\tcp^{H',K'}), 
\end{align*}
where $J'$, $H'$, and $K'$ are the intervals in $w_{i-1}$ from which
the intervals $J$, $H$, and respectively $K$ originate.

Finally, we let $\theta_{i,\Delta}$ be the substitution
defined over $\AP_i$ by
\[\begin{array}{rl@{\quad}l}
\theta_{i,\Delta}^\nu(\alpha^{J}) := & \theta_i^\nu(\alpha^{J}) & \text{if $i=0$ or $\theta_{i-1}^\nu$ is undefined on $\alpha^{J'}$,}
\\
\theta_{i,\Delta}^\nu(\tpp^J) := & \theta_i^\nu(\tpp^J) & \text{if $i=0$ or $\theta_{i-1}^\nu$ is undefined on $\tpp^{J'}$,}
\\
\theta_{i,\Delta}^\nu(\tcp^{H,K}) := & \theta_i^\nu(\tcp^{H,K}) & \text{if $i=0$ or $\theta_{i-1}^\nu$ is undefined on $\tcp^{H',K'}$,}
\end{array}\]
where, when $i>0$, $J'$, $H'$, and $K'$ are the intervals in $w_{i-1}$ from which
the intervals $J$, $H$, and respectively $K$ originate.

Clearly, we have that
\begin{align}
\label{eq:theta}
\theta_i^\nu & = \theta_{i-1,i}^\nu\, \theta_{i,\Delta}^\nu,\quad \text{for $i>0$},
\\
\label{eq:theta0}
\theta_0^\nu & = \theta_{0,\Delta}^\nu.
\end{align}

The following easy to prove equivalence will also be useful:
\begin{equation}
  \label{eq:Phi}
  \fPhi{i}{\psi,J}\theta_i^\mu = b
  \quad\text{iff}\quad
  \psi^J \theta_i^\nu = b,
\end{equation}
for any formula $\psi$ that is a subformula of $\phi$ and any $b\in\Two$, and
where $\mu = \nu[x\mapsto \rho^i_j(r)]$ if
$\psi=\freeze{r}{x}{\alpha}$ and $\mu=\nu$ otherwise.

We summarize in Table~\ref{tab:substs} all
substitutions used in the proof together with their informal meaning.
\begin{table}[t]
  \caption{Summary of substitutions used in the proof.}
  \label{tab:substs}
  \centering
  \begin{tabular}{|l@{\ }|@{\ }l@{\ }|@{\ }l|}
    \hline
    notation & domain\ & intuition  % TODO: here "domain" is used in a large sense
    \\ \hline\hline \rule{0pt}{2.5ex}
    $\theta_i^\nu$ & $\AP_i$ & $\gamma^J \mapsto \osem{w_i,\hat{J},\nu}{\gamma}$ if in $\Two$
    \\ \hline \rule{0pt}{2.5ex}
    $\theta_{i-1,i}^\nu$ & $\AP_i$ & ''extension'' of $\theta_{i-1}^\nu$ from $A_{i-1}$ to $\AP_i$
    \\ \hline \rule{0pt}{2.5ex}
    $\theta_{i,\Delta}^\nu$ & $\AP_i$ & a substitution such that $\theta_i^\nu = \theta_{i-1,i}^\nu\, \theta_{i,\Delta}^\nu$
    \\ \hline \rule{0pt}{2.5ex}
    $\bar\zeta_i^{J}$ & $\bAP_{i-1}$ & the substitution of the rewriting phase % $\bAP_i\cap \bar{B}_i$
    \\ \hline \rule{0pt}{2.5ex}
    $\zeta_i^J$ & $\AP_{i-1}$ & variant of $\bar\zeta_i^J$ over $\AP_{i-1}$ such that $\bar\zeta^J_i\, \bar\delta_i \equiv \bar\delta_{i-1}\, \zeta^{J}_{i}$ % $\AP_i\cap B_i$
    \\ \hline \rule{0pt}{2.5ex}
    $\bar\xi_i^\nu$ & $\bAP_i$ & the substitution of the propagation phase
    \\ \hline \rule{0pt}{2.5ex}
    $\xi_i^\nu$ & $\AP_i$ & $\gamma^J \mapsto \fPsi{i}{\gamma,J,\nu}$ if in $\Two$
    % \\ \hline \rule{0pt}{2.5ex}
    % $\xi_{i-1,i}$ & $\AP_i$ & ''extension'' of $\xi_{i-1}$ to $\AP_i$
    \\ \hline \rule{0pt}{2.5ex}
    $\bar\delta_i$ & $\bAP_i$ & e.g. $\bar\beta^J\mapsto \tpp^J\land\beta^J$
    \\ \hline \rule{0pt}{2.5ex}
    $\deltap_i^\nu$ & $\bAP_i$ & variant of $\bar\delta_i$, taking propagations into account
    % e.g. $\bar\beta^J\mapsto \beta^J$ if $J$ is singleton
    \\ \hline
  \end{tabular}
\end{table}

In the remaining proof, we will use equivalences between
substitutions, with the following meaning.
Given two substitutions $\theta$ and $\theta'$ from atomic
propositions to propositional formulas, we write that
$\theta\equiv\theta'$ iff $\pdef(\theta)=\pdef(\theta')$ and $p\theta
\equiv p\theta'$, for any $p\in\pdef(\theta)$.

\medskip

Next, we prove Lemma~\ref{lemma:key} by nested induction, the outer
induction being on the iteration~$i$ and the inner induction being a
structural induction on $\psi$.

\emph{Outer base case: $i=0$}. We have a single letter in $w_0$ with the
interval $J=[0,\infty)$. 
Let $\fPsip{0}{\psi,J,\emap}$ be the value of $\fPsi{}{\psi,J,\emap}$ at the
point of execution of the \ls{Init} procedure (thus at iteration~$0$)
between the two \ls{foreach} loops, that is, before propagation of the
$\true$~atoms.
We thus have that 
\begin{equation}\label{eq:Psi0}
\fPsi{0}{\gamma,J,\emap} = \fPsip{0}{\gamma,J,\emap}\, \bar\xi_0^{\emap}.
\end{equation}
We also have that
\begin{equation}\label{eq:Psi0-Phi0}
\fPsip{0}{\gamma,J,\emap}\bar\delta_0 \equiv \fPhi{0}{\gamma,J}.
\end{equation}
This is easy to check by inspecting the \ls{Init} procedure. For
instance, for formulas $\gamma = \alpha\until_I\beta$, we have that
$\fPsip{0}{\gamma,J} = \tcp^{J,J} \land \bar\beta^{J} \land
\bar{\bar\alpha}^{J} $ and thus
$$
\fPsip{0}{\gamma,J}\bar\delta_0 
= 
\tcp^{J,J} \land (\tpp^J \land \beta^{J}) \land (\tpp^J\lor \alpha^{J}) 
\equiv 
\tpp^J \land \tcp^{J,J} \land \beta^{J} 
= 
\fPhi{0}{\gamma,J}.
$$

This case then follows from the following sequence of equivalences:
\[
\begin{array}{r@{\quad}l}
%   & \fPhi{i}{\gamma,J}\theta_i
%   & \equiv 
  \blue{\fPsi{0}{\gamma,J,\emap}}\, \deltap_0^{\emap}
  \equiv & \byref{\ref{eq:Psi0}}{\fPsi{0}{\gamma,J,\emap} = \fPsip{0}{\gamma,J,\emap}\, \bar\xi_0}
  \\[0.5ex]
  \fPsip{0}{\gamma,J}\, \blue{\bar\xi_0^{\emap}\, \deltap_0}
  \equiv & \byref{\ref{eq:xi-delta}}{\bar\xi_i^\nu\, \deltap_i^\nu \equiv \bar\delta_i\, \xi_i^\nu}
  \\[0.5ex]
  \blue{\fPsip{0}{\gamma,J,\emap}\, \bar\delta_0}\, \xi_0^{\emap}
  \equiv & \byref{\ref{eq:Psi0-Phi0}}{\fPsip{0}{\gamma,J,\emap}\, \bar\delta_0 \equiv \fPhi{0}{\gamma,J}}
  \\[0.5ex]
  \fPhi{0}{\gamma,J}\, \blue{\xi_{0}^{\emap}}
  \equiv & \byvar{(\ref{eq:theta0}) and (\ref{eq:IHp})}{\xi_{0}^{\emap} \equiv \theta_0^{\emap}}
  \\[0.5ex]
  \fPhi{i}{\gamma,J}\, \theta_{0}^{\emap}
  \ \phantom{\equiv}
\end{array}
\]
We postpone the proof of the not yet justified equivalences (namely
the 2nd and 4th), as similar ones are also used in the inductive case,
and are proved together.

\emph{Outer inductive case: $i>0$}. We assume that the 
equivalence from the lemma statement holds for $i-1$:
\begin{equation}
\label{eq:IH}
\tag{IH}
\fPsi{i-1}{\gamma,J',\nu}\, \deltap_{i-1}^\nu \equiv \fPhi{i-1}{\gamma,J'}\, \theta_{i-1}^\nu,
\end{equation}
where $J'$ is the intervals in~$w_{i-1}$ from which $J$ originates.
The inductive case follows from the following sequence of equivalences:
\[
\begin{array}{r@{\quad}l}
  \blue{\fPsi{i}{\gamma,J,\nu}}\, \deltap_i^\mu
  \equiv 
  & \byref{\ref{eq:Psi}}{\fPsi{i}{\gamma,J,\nu} = \fPsi{i-1}{\gamma,J',\nu}\, \bar\zeta^J_i\, \bar\xi_i^\mu}
  \\[0.5ex]
  \fPsi{i-1}{\gamma,J',\nu}\, \bar\zeta^J_i\, \blue{\bar\xi_i^\mu\, \deltap_i^\mu}
  \equiv 
  & \byref{\ref{eq:xi-delta}}{\bar\xi_i^\mu\, \deltap_i^\mu \equiv \bar\delta_i\, \xi_i^\mu}
  \\[0.5ex]
  \fPsi{i-1}{\gamma,J',\nu}\, \bar\zeta^J_i\, \bar\delta_i\, \blue{\xi_i^\mu}
  \equiv 
  & \byref{\ref{eq:xi-idem}}{\xi_i^\mu \equiv \xi_i^\mu\, \xi_i^\mu}
  \\[0.5ex]
  \fPsi{i-1}{\gamma,J',\nu}\, \blue{\bar\zeta^J_i\, \bar\delta_i}\, \xi_i^\mu\, \xi_i^\mu
  \equiv 
  & \byref{\ref{eq:zeta-delta}}{\bar\zeta^J_i\, \bar\delta_i \equiv \bar\delta_{i-1}\, \zeta^{J}_{i}}
  \\[0.5ex]
  \fPsi{i-1}{\gamma,J',\nu}\, \bar\delta_{i-1}\, \blue{\zeta^{J}_{i}\, \xi_i^\mu} \, \xi_i^\mu
  \equiv 
  & \byref{\ref{eq:xi-zeta}}{\zeta^{J}_{i}\, \xi_i^\mu \equiv \xi_{i-1}\, \zeta^{J}_{i}}
  \\[0.5ex]
  \blue{\fPsi{i-1}{\gamma,J',\nu}\, \bar\delta_{i-1}\, \xi_{i-1}}\, \zeta^{J}_{i}\, \xi_i^\mu
  \equiv 
  & \byref{\ref{eq:Psi-xi}}{\fPsi{i-1}{\gamma,J',\nu}\bar\delta_{i-1}\, \xi_{i-1} \equiv \fPsi{i-1}{\gamma,J',\nu}\deltap_{i-1}}
  \\[0.5ex]
  \blue{\fPsi{i-1}{\gamma,J',\nu}\, \deltap_{i-1}}\, \zeta^{J}_{i}\, \xi_i^\mu
  \equiv 
  & \byref{\ref{eq:IH}}{\fPsi{i-1}{\gamma,J',\nu}\, \deltap_{i-1} \equiv \fPhi{i-1}{\gamma,J'}\, \theta_{i-1}^\mu}
  \\[0.5ex]
  \fPhi{i-1}{\gamma,J'}\, \theta_{i-1}^\mu\, \zeta^{J}_{i}\, \blue{\xi_i^\mu}
  \equiv 
  & \byref{\ref{eq:IHp}}{x\,\xi_i^\mu \equiv x\,\theta_{i,\Delta}^\mu, \text{for any $q\in\AP(\fPhi{i-1}{\gamma,J'}\, \theta_{i-1}^\mu\, \zeta^{J}_{i})$}}
  \\[0.5ex]
  \fPhi{i-1}{\gamma,J'}\, \blue{\theta_{i-1}^\mu\, \zeta^{J}_{i}}\, \theta_{i,\Delta}^\mu
  \equiv 
  & \byref{\ref{eq:zeta-theta}}{\theta_{i-1}^\mu\, \zeta^{J}_{i} \equiv \zeta^{J}_{i}\, \theta_{i-1,i}^\mu}
  \\[0.5ex]
  \blue{\fPhi{i-1}{\gamma,J'}\, \zeta^{J}_{i}}\, \theta_{i-1,i}^\mu\, \theta_{i,\Delta}^\mu
  \equiv 
  & \byref{\ref{eq:Phi-zeta}}{\fPhi{i-1}{\gamma,J'}\, \zeta^{J}_{i} \equiv \fPhi{i}{\gamma,J}}
  \\[0.5ex]
  \fPhi{i}{\gamma,J}\, \blue{\theta_{i-1,i}^\mu\, \theta_{i,\Delta}^\mu}
  \equiv 
  & \byref{\ref{eq:theta}}{\theta_{i-1,i}^\mu\, \theta_{i,\Delta}^\mu \equiv \theta_i^\mu}
  \\[0.5ex]
  \fPhi{i}{\gamma,J}\, \theta_i^\mu
  \ \phantom{\equiv}
\end{array}
\]
where $\mu = \nu[x\mapsto \rho^i_j(r)]$ if
$\psi=\freeze{r}{x}{\alpha}$ and $\mu=\nu$ otherwise, and $\zeta^J_i$
is a substitution depending on $\bar\zeta^J_i$.

We now complete the proof by proving the not yet justified
equivalences occurring in the above two sequences of equivalences.
We start with the following statement. For any $q\in\AP_i$ such that
$q=\tpp^K$ or $q=\tcp^{H,K}$, or $q=\psi^K$ with $\psi$ a direct
subformula of $\gamma$, the following holds:
\begin{equation}
  \label{eq:IHp}
  \tag{IH'}
  q\,\xi_i^\mu \equiv q\,\theta_{i,\Delta}^\mu.
\end{equation}

%\emph{Equivalence~(\ref{eq:xi-theta}).}
The case when $q$ is one of the atomic propositions $p(\bar{x})^J$
with $p(\bar{x})$, $\tpp^{J}$, and $\tcp^{H,K}$, follows directly from
the definitions of~$\xi_i^\mu$ and~$\theta_{i,\Delta}^\mu$.
So let $q=\psi^K$. We have that $\psi^K\, \xi_i^\mu$ equals
$\fPsi{i}{\psi,K,\mu}$ if $\fPsi{i}{\psi,K,\mu}\in\Two$ and
equals~$\psi^K$ otherwise.

Suppose first that $\fPsi{i}{\psi,K,\mu} = b$, for some
$b\in\Two$. Then, by the inner induction hypothesis, we have that
$\fPhi{i}{\psi,K}\theta_i^{\eta} \equiv
\fPsi{i}{\psi,K,\mu}\deltap_i^{\eta} = b$, where $\eta =
\mu[x\mapsto \rho^i_j(r)]$ if $\psi=\freeze{r}{x}{\alpha}$ and
$\eta=\nu$ otherwise.
From~(\ref{eq:Phi}), we know that $\fPhi{i}{\psi,K}\theta_i^{\eta}
\equiv \psi^K\theta_i^{\mu}$, and thus we obtain that
$\psi^K\theta_i^\mu = b$.
   
Suppose now that $\fPsi{i}{\psi,K,\mu}\not\in\Two$. Then, 
reasoning similarly to the previous case, we obtain that
$\psi^K\theta_i^\mu\notin\Two$. Thus $\psi^K\theta_i^\mu = \psi^K$ and
hence also $\psi^K\theta_{i,\Delta}^\mu = \psi^K$.

\subsubsection{Remaining Details.}

\makeatletter
\tagsleft@true
\makeatother

The following two equivalences follow directly from the definitions of
the four involved substitutions.
\begin{align}
  \label{eq:xi-idem}
  \xi_i^\nu & \equiv \xi_i^\nu\, \xi_i^\nu
  \\
  %% \label{eq:xi-theta}
  %% \xi_i^\nu & \equiv \theta_{i,\Delta}^\nu
  %% \\
  \label{eq:xi-delta}
  \bar\xi_i^\nu\, \deltap_i^\nu & \equiv \bar\delta_i\, \xi_i^\nu
\end{align}

Furthermore, for any $i\in\Nat$, $J$ interval in $w_i$, partial
evaluations~$\nu$ and subformula $\gamma$ of $\phi$, the following
equivalence holds:
\begin{align}
  \label{eq:Psi-xi}
  \fPsi{i}{\gamma,J,\nu}\, \deltap_i^\mu & \equiv
                                           \fPsi{i}{\gamma,J,\nu}\bar\delta_i\,
                                           \xi_i^\mu .
\end{align}
Indeed, let $\fPsip{i}{\gamma,J,\nu} :=
\fPsi{i-1}{\gamma,J',\nu}\bar\zeta^{J}$, for $i>0$. From
(\ref{eq:Psi}) and~(\ref{eq:Psi0}), we get that, for any $i\geq 0$, we
need to prove that $\fPsip{i}{\gamma,J,\nu}\, \bar\xi_i^\nu\,
\deltap_i^\nu \equiv \fPsip{i}{\gamma,J,\nu}\, \bar\xi_i^\nu\,
\bar\delta_i\, \xi_i^\nu$.
This follows directly from equivalence~(\ref{eq:xi-delta}) and the
equivalence $\bar\xi_i^\nu\equiv \bar\xi_i^\nu\, \bar\xi_i^\nu$.

\begin{lemma}
  For any~$i>0$ and~$J$ in $\mathsf{new}_i$, there is a
  substitution~$\zeta^{J}_i$ such that, for any formula~$\gamma$ and
  partial valuation~$\nu$, the following equivalences hold:
  \begin{align}
    \label{eq:xi-zeta}
    \zeta^{J}_{i}\, \xi_i^\nu & \equiv \xi_{i-1}^\nu\, \zeta^{J}_{i}
    \\
    \label{eq:zeta-theta}
    \zeta^{J}_{i}\, \theta_{i-1,i}^\nu & \equiv \theta_{i-1}^\nu\, \zeta^{J}_{i}
    \\
    \label{eq:zeta-delta}
    \bar\zeta^J_i\, \bar\delta_i & \equiv \bar\delta_{i-1}\, \zeta^{J}_{i}
    \\
    \label{eq:Phi-zeta}
    \fPhi{i}{\gamma,J} & \equiv \fPhi{i-1}{\gamma,J'}\, \zeta^{J}_{i},
  \end{align}
  where $J'$ is the interval in $w_{i-1}$ from which $J$ originates.
\end{lemma}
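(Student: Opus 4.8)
The plan is to first fix the substitution $\zeta^J_i$ explicitly from the geometry of the split, and then establish the four equivalences, concentrating the real work on the transformation of the until-formulas while deriving the rest by atom-by-atom bookkeeping. Going from $w_{i-1}$ to $w_i$, a single interval $J'$ is replaced by the intervals collected in $\mathsf{new}_i$ (together with possibly some complete intervals that get discarded), whereas every other interval is carried over unchanged. Accordingly, I would define $\zeta^J_i$ on $\AP_{i-1}$ to act as the identity on every atomic proposition whose interval differs from $J'$, and to relocate the propositions attached to $J'$ --- namely $\alpha^{J'}$, $\tpp^{J'}$, and the metric propositions $\tcp^{H',J'}$ and $\tcp^{J',K'}$ --- to propositional formulas over the new intervals. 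For non-temporal propositions this is the plain renaming $\alpha^{J'}\mapsto\alpha^J$, while the image of $\tpp^{J'}$ and of the metric propositions is dictated by how a time point in $J$ sits within the split, exactly as encoded by the \ls{RefinementUntil} procedure. I would arrange the definition so that \eqref{eq:zeta-delta} holds essentially by construction: since $\bar\delta_i$ merely expands the barred propositions $\bar\beta$, $\bar\alpha$, $\bar{\bar\alpha}$ into $\tpp$-guarded combinations, matching $\bar\zeta^J_i$ on the barred side forces the action of $\zeta^J_i$ on the unbarred-and-$\tpp$ side, and I would verify the two composites agree atom by atom.

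Next I would prove the defining equivalence \eqref{eq:Phi-zeta}, $\fPhi{i}{\gamma,J}\equiv\fPhi{i-1}{\gamma,J'}\,\zeta^J_i$, by structural induction on $\gamma$. The base case ($\gamma$ atomic) together with the $\neg$, $\vee$, and freeze cases are immediate, since the defining formula is assembled directly from the children's propositions and $\zeta^J_i$ only renames the affected interval. The temporal case $\gamma=\alpha\until_I\beta$ carries the content: here $\fPhi{i}{\gamma,J}$ is the disjunction over anchor positions $k$ of $\tpp^{J_k}\land\tcp^{J_k,J}_\gamma\land\beta^{J_k}\land\bigwedge_{h}(\tpp^{J_h}\to\alpha^{J_h})$, and I must show that applying $\zeta^J_i$ to the corresponding disjunction anchored at $J'$ in $w_{i-1}$ reproduces it. The crux is that the single disjunct anchored at $J'$ unfolds into several disjuncts anchored at the new intervals --- precisely the three-way expansion of $\bar\beta^J$ displayed in Appendix~\ref{app:pseudocode} --- with the intermediate pieces contributing the continuation conjuncts $\bar\alpha$ and $\bar{\bar\alpha}$ and with the metric guard $\tcp$ re-evaluated for each new interval. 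Verifying that this restructured disjunction-of-conjunctions matches the freshly defined $\fPhi{i}{\gamma,J}$, including the correct placement of the $\tpp$-guards that downgrade the non-time-point positions, is the step I expect to be the main obstacle; I would organise it according to the three fragments of $\mathsf{new}_i$ (the left remainder, the singleton, and the right remainder).

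Finally I would dispatch the two commutation equivalences \eqref{eq:xi-zeta} and \eqref{eq:zeta-theta} in parallel, checking $q\,\zeta^J_i\,\xi_i^\nu\equiv q\,\xi_{i-1}^\nu\,\zeta^J_i$ and its $\theta$-analogue on each $q\in\AP_{i-1}$. For $q$ on a surviving interval both sides collapse to the same Boolean assignment, because $\xi_i^\nu$ and $\xi_{i-1}^\nu$ (respectively $\theta_i^\nu$ and $\theta_{i-1}^\nu$) agree there --- the split leaves $\sigma$, $\rho$, and the metric relations untouched on all surviving intervals. For $q$ attached to $J'$ the key point, which I would record as a small sublemma, is that by transformation~(T\ref{enum:observation_split}) the fragments of $J'$ inherit the same $\sigma$ and $\rho$ as $J'$ (recall that a non-singleton interval has $\sigma=\rho=[\,]$), so predicate propositions keep their definedness status, while $\tpp$ and $\tcp$ take their refined values on the new singleton and subintervals exactly as $\zeta^J_i$ prescribes. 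Hence applying $\zeta^J_i$ before or after the valuation substitution yields propositionally equivalent results, which reduces both equivalences to the interval-wise bookkeeping already performed for \eqref{eq:Phi-zeta} and completes the proof.
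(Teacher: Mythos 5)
Your construction of $\zeta^{J}_{i}$ coincides with the paper's: identity off the split interval, plain renaming for propositions whose parent is non-temporal, and on the $J'$-propositions of an until-formula the $\bar\delta$-expansion of what the refinement substitution does to the barred propositions, so that \eqref{eq:zeta-delta} holds by construction; \eqref{eq:Phi-zeta} is then established, as in the paper, by the explicit three-fragment computation for the until case (a case distinction on whether $J$ is the left remainder, the singleton, or the right remainder --- note that no induction is actually needed, since $\fPhi{i}{\gamma,J}$ mentions only propositions of $\gamma$'s \emph{direct} subformulas).

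The step that would fail is your atom-wise verification of \eqref{eq:xi-zeta}. You justify it by claiming that $\xi_i^\nu$ and $\xi_{i-1}^\nu$ agree on surviving intervals ``because the split leaves $\sigma$, $\rho$, and the metric relations untouched'' there. But $\xi_i^\nu(\gamma^K)$ is defined from $\fPsi{i}{\gamma,K,\nu}$, the value at the \emph{end} of iteration~$i$, i.e.\ after the new message's data has been propagated upward --- and upward propagation is precisely the mechanism by which a new time point resolves formulas at \emph{old} intervals (otherwise the monitor could never emit verdicts for past timestamps). Concretely, take $\gamma=\eventually_I\,q=\true\until_I q$ unresolved at a surviving singleton $K$ of $w_{i-1}$, and suppose the new time point makes $q$ true at a distance from $K$ lying in $I$; then $\fPsi{i}{\gamma,K,\nu}\in\Two$ while $\fPsi{i-1}{\gamma,K,\nu}\notin\Two$, so $\gamma^K\,\zeta^{J}_{i}\,\xi_i^\nu$ is a Boolean constant whereas $\gamma^K\,\xi_{i-1}^\nu\,\zeta^{J}_{i}=\gamma^K$: the two composites are not propositionally equivalent on this atom. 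The paper's overall argument survives because it never uses \eqref{eq:xi-zeta} in isolation: in the main chain it first duplicates $\xi_i^\mu$ via \eqref{eq:xi-idem}, commutes only one copy past $\zeta^{J}_{i}$ (turning it into $\xi_{i-1}$), and keeps the trailing $\xi_i^\mu$, which later becomes $\theta_{i,\Delta}^\mu$ via \eqref{eq:IHp} and absorbs exactly the newly propagated values that your check misses. So your verification must be carried out modulo a trailing application of $\xi_i^\nu$ (equivalently, only the ``old-information'' fragment of $\xi_i^\nu$ can be commuted past $\zeta^{J}_{i}$), not as a plain equivalence of the two composites; to be fair, the lemma's statement is loose on this point and the paper's own proof dismisses these checks as routine.

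A related slip: \eqref{eq:zeta-theta} relates $\theta_{i-1,i}^\nu$ --- the iteration-$(i{-}1)$ substitution \emph{transported} to the intervals of $w_i$ --- to $\theta_{i-1}^\nu$, so both sides carry only iteration-$(i{-}1)$ information and the asymmetry above does not arise (modulo the same caveat about trailing substitutions for propositions attached to $J'$). Your parenthetical ``respectively $\theta_i^\nu$ and $\theta_{i-1}^\nu$'' reads it instead as a commutation between the new and the old $\theta$, which would fail for exactly the reason given in the previous paragraph.
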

\begin{proof}
  Let $\psi$ be a proper subformula of $\phi$ and $\gamma$ its parent.
  For readability, in this proof we drop the index~$i$ from
  $\bar\zeta_i^J$ and $\zeta_i^J$.
  If $\psi$ is a direct subformula of a non-temporal subformula of
  $\phi$, then $\bar\zeta^J(\psi^{J'}) = \psi^J$. In this case we let
  $\zeta^J(\psi^{J'}) := \psi^J$.
  It is easy to check that the four equivalences hold in this case.

  We consider now the case when $\gamma$ is a temporal formula, with
  $\gamma=\alpha\until_I\beta$.
  We first note that for atomic propositions $\bar\beta^{J'}$,
  $\bar\alpha^{J'}$, and $\bar{\bar\alpha}^{J'}$, the
  substitution~$\bar\zeta^J_i$ depends on whether $J$ is $L$, $N$,
  or~$R$, where $(L,N,R)$ are the intervals obtained by splitting~$J$.
  Thus, we make a case distinction based on the value of~$J$. We only
  consider one case, namely when $J=N$, the other ones being treated
  similarly.
  In this case we have the following equalities:
  \begin{align*}
    \bar\zeta^N(\bar\beta^{J'}) & = (\bar\beta^N \land \tcp^{N,N}) \lor (\bar\beta^R \land \tcp^{R,N} \land \bar{\bar\alpha}^R \land \bar\alpha^N),
    \\
    \bar\zeta^N(\bar\alpha^{J'}) & = \bar\alpha^N \land \bar\alpha^R,
    \\
    \bar\zeta^N(\bar{\bar\alpha}^{J'}) & = \true, 
    \\
    \bar\zeta^N(\tcp^{{J'},{J'}}) & = \true,
    \\
    \bar\zeta^N(\tcp^{H,{J'}}) & = \tcp^{H,N},\ \text{for $H>{J'}$},
  \end{align*}
  and the substitution $\zeta^N$ is defined as follows
  \begin{align*}
    \zeta^N(\beta^{J'}) & := 
      (\beta^N \land \tpp^{N} \land \tcp^{N,N})\ \lor 
    \\ & \phantom{:=}                 
      (\beta^R \land \tpp^{R} \land \tcp^{R,N} \land 
      (\tpp^R\lor\alpha^R) \land (\tpp^N\to\alpha^N)),
    \\
    \zeta^N(\alpha^{J'}) & := (\tpp^N\to\alpha^N) \land (\tpp^R\to\alpha^R),
    \\
    \zeta^N(\tpp^{{J'}}) & := \true, 
    \\
    \zeta^N(\tcp^{J',J'}) & := \true, 
    \\
    \bar\zeta^N(\tcp^{H,{J'}}) & := \tcp^{H,N},\ \text{for $H>{J'}$}.
  \end{align*}

  By just using the definitions, it is easy to check that the
  equivalences~(\ref{eq:xi-zeta}), (\ref{eq:zeta-theta}), and
  (\ref{eq:zeta-delta}) hold.
  For instance, we check next that
  $\bar\beta^{J'}\, \bar\zeta^J\, \bar\delta_i \equiv \bar\beta^{J'}\,
  \bar\delta_{i-1}\, \zeta^{J}$: % We have:
  \begin{align*}
    \bar\delta_i(\bar\zeta^N(\bar\beta^{J'})) 
    & = 
      \bar\delta_i\big((\bar\beta^N \land \tcp^{N,N}) \lor
      (\bar\beta^R \land \tcp^{R,N} \land \bar{\bar\alpha}^R \land
      \bar\alpha^N)\big)
    \\ & = 
        ((\tpp^N \land \beta^N) \land \tcp^{N,N})\ \lor
    \\ & \qquad
         ((\tpp^R\land\beta^R) \land \tcp^{R,N} \land (\tpp^R \lor\alpha^R) \land
         (\tpp^N \to\alpha^N))
    \\ & 
        = \zeta^N(\beta^{J'})
        = \zeta^N(\tpp^{J'}\land\beta^{J'})
        = \zeta^N(\bar\delta_{i-1}(\bar\beta^{J'})).
  \end{align*}

  For~(\ref{eq:Phi-zeta}), we check next that
  $\fPhi{i}{\gamma,N} \equiv \fPhi{i-1}{\gamma,J'}\zeta^N$:
  \begin{align*}
    \fPhi{i}{\gamma,N} \equiv\ & 
    \bigvee_{K\geq N}\big(\tpp^{K} \land \tcp^{K,N} \land \beta^{K} \land
    \bigwedge_{N\leq H<K}(\tpp^{H} \to \alpha^{H})
    \big)
    %% 
    % \\ \equiv\ &
    % \bigvee_{K\geq N}\big(\tpp^{K} \land \tcp^{K,N} \land \beta^{K} \land
    % (\tpp^{K} \lor \alpha^{K}) \land
    % \bigwedge_{N\leq H<K}(\tpp^{H} \to \alpha^{H})
    % \big)
    %% 
    \\ \equiv\ & 
    \big(\tpp^{N} \land \tcp^{N,N} \land \beta^{N} \big)
    \lor
    \\ & 
    \big(\tpp^{R} \land \tcp^{R,N} \land \beta^{R} \land (\tpp^{R} \lor \alpha^{R})
    \land (\tpp^{N} \to \alpha^{N}) \big) \lor
    \\ 
    & \bigvee_{K>R}\big(\tpp^{K} \land \tcp^{K,N} \land \beta^{K} \land
    \\ & \hspace*{5em}
    \land (\tpp^{N} \to \alpha^{N}) \land (\tpp^{R} \to \alpha^{R})
    \land \bigwedge_{R<H<K}(\tpp^{H} \to \alpha^{H})
    \big)
    \\
    \equiv\ &
    \zeta^N\Big(\big(\tpp^{J} \land \tcp^{J,J} \land \beta^{J}\big) \lor
    \\ & \quad
    \bigvee_{K>J}\big(\tpp^{K} \land \tcp^{K,J} \land \beta^{K} 
    \land (\tpp^{J} \to \alpha^{J}) \land \bigwedge_{J<H<K}(\tpp^{H} \to \alpha^{H})
    \big)\Big)
    \\
    \equiv\ & \zeta^N(\fPhi{i-1}{\gamma,J}).
  \end{align*}
  \qed
\end{proof}

%%% Local Variables:
%%% mode: latex
%%% TeX-master: "main"
%%% End:

\section{Additional Evaluation Details}
\label{app:casestudy}

\subsection{Compliance Policy Description}
\label{app:policies}

We start by explaining the predicate symbols that model the events
that the banking system is assumed to log or transmit to the monitor.
The predicate $\mathit{trans}(c,t,a)$ represents the execution of the
transaction~$t$ of the customer~$c$ transferring the amount~$a$ of
money.  The predicate $\mathit{report}(t)$ classifies the
transaction~$t$ as suspicious.
Note that a message sent to the monitor describes an event and the
register values. For instance, when executing a transaction, the
registers $\mathit{tid}$ and $\mathit{cid}$ store the identifiers of
the transaction and the customer; the amount of the transaction is
stored in the register $\mathit{sum}$. For a $\mathit{report}$ event,
the register $\mathit{tid}$ stores the identifier of the transaction
whereas the other registers for the customer and the amount store the
default value~$0$.

The formula~(P1) requires that a transaction~$t$ of a customer~$c$
must be reported within at most three time units if the transferred
amount~$a$ exceeds the threshold of \$2,000.
The formulas~(P2) to~(P4) are variants of~(P1).  (P2) requires that
whenever a customer $c$ makes a transaction that exceeds the threshold,
then any of $c$'s future transactions (within a specified period of
time) must eventually be reported (within a specified time bound).
(P3) requires that whenever a customer~$c$ makes a transaction $t$
that exceeds the threshold, then $c$ is not allowed to make further
transactions until the transaction $t$ is reported. Note that the syntactic sugar $\weakuntil$ 
(``weak until'') is used here instead of the primitive
temporal connective $\until$. We not require that the transaction must eventually be reported.
Finally, (P4) requires that whenever a customer $c$ makes a
transaction that exceeds the threshold, then any of $c$'s transactions
in a given time period must be reported.

\subsection{Evaluation in a Propositional Setting}
\label{app:propositonal}

We consider the formulas in Figure~\ref{fig:specifications_prop}
for comparing our experimental results in Section~\ref{sec:casestudy}
with the simpler settings where no data values are involved.  These
formulas are propositional versions of the \MTLdata formula in
Figure~\ref{fig:specifications}, except (P3$'$), which has an
additional temporal connective and accounts for the additional event
$\mathit{unflag}$.
\begin{figure}[t]
  \centering
  \scalebox{.8}{
    \begin{minipage}{15.2cm}
  \begin{gather}
    \tag{P1$'$}
    \always 
    \mathit{transaction} \wedge \mathit{suspicious} \to 
    \eventually_{[0,3]} \mathit{report}
    \\
    \tag{P2$'$}
    \always 
    \mathit{transaction} \wedge \mathit{suspicious} \to
    \always_{(0,3]} \mathit{transaction} \to \neg\mathit{suspicious}
    \\
    \tag{P3$'$}
    \always 
    \mathit{transaction} \wedge \mathit{suspicious} \to
    \big(
      (\mathit{transaction}\rightarrow\eventually_{[0,3]}\mathit{report})
      \weakuntil 
      \mathit{unflag}
    \big)
    \\
    \tag{P4$'$}
    \always 
    \mathit{transaction} \wedge \mathit{suspicious} \to
    \always_{[0,6]}
    \mathit{transaction} \to \eventually_{[0,3]} \mathit{report}
  \end{gather}
    \end{minipage}}
  \caption{MTL formulas.}
  \label{fig:specifications_prop}
\end{figure}

\begin{figure}[t]
  \centering
  \subfigure[in-order]{\includegraphics[scale=.20]{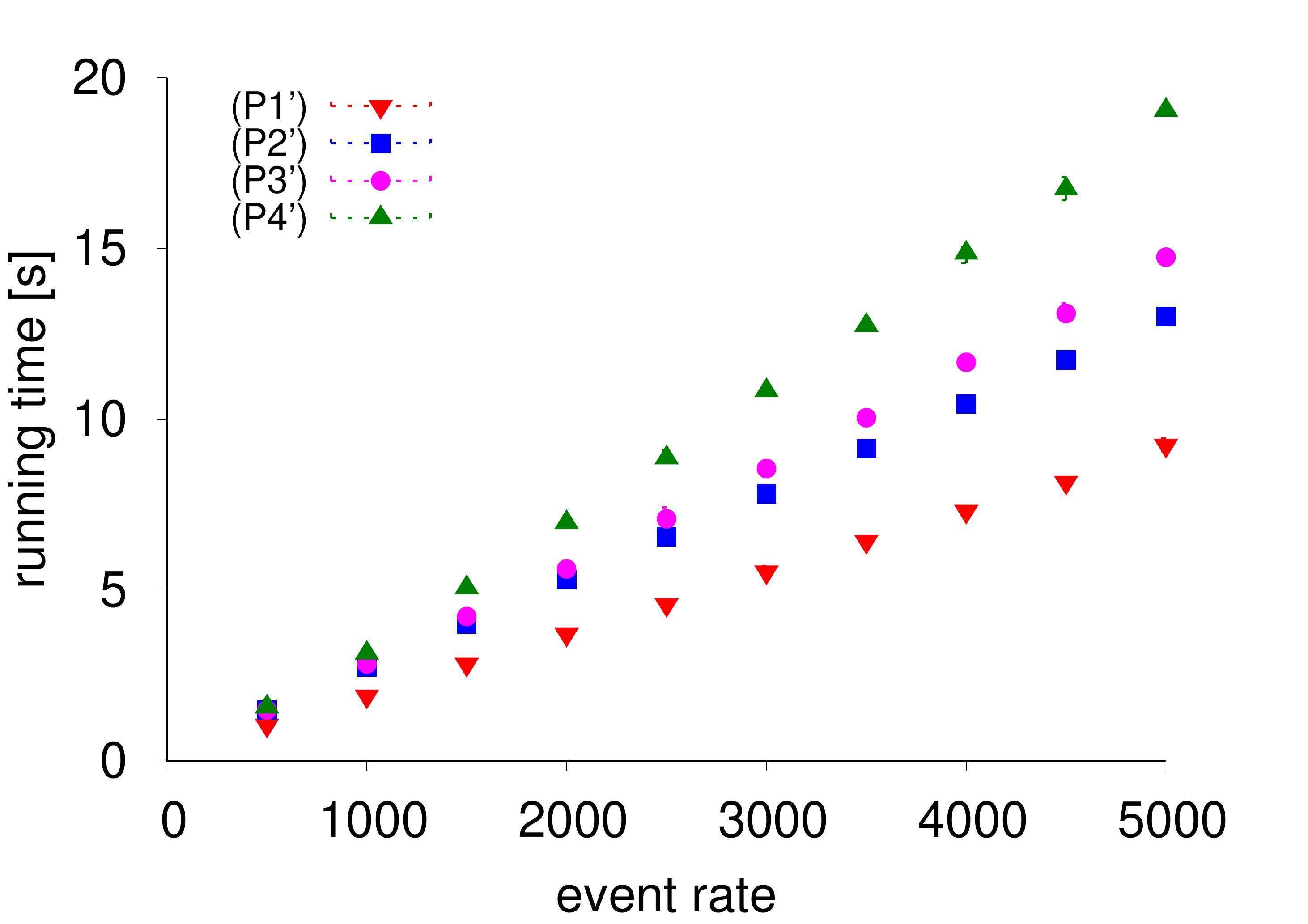}}
  \subfigure[out-of-order]{\includegraphics[scale=.20]{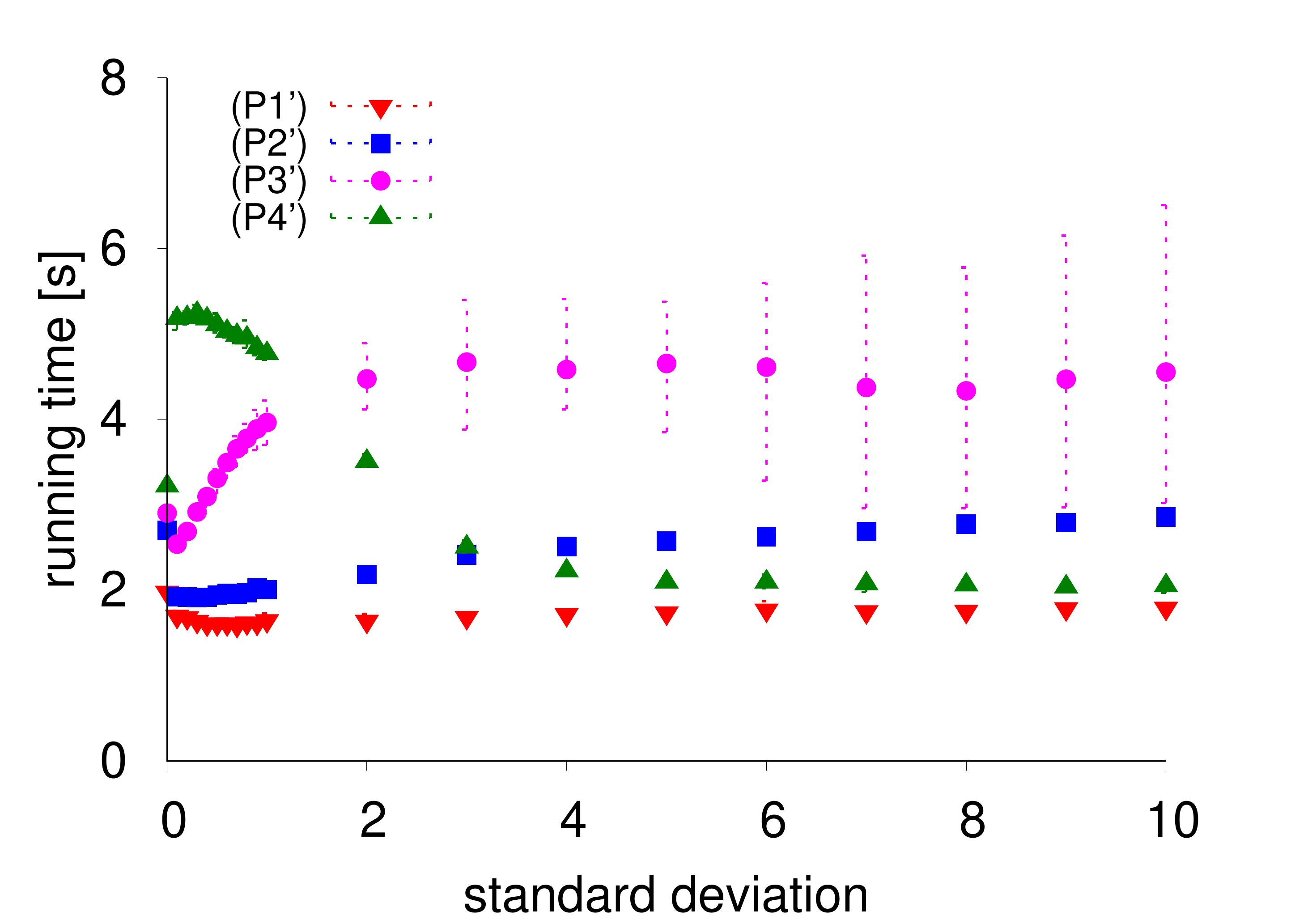}}
  \caption{Running times in a propositional setting (where each data
    point shows the mean of five logs together with the minimum and
    maximum).}
  \label{fig:performance_prop}
\end{figure}
Figure~\ref{fig:performance_prop}(a) shows the running times on logs
with different event rates for the formulas~(P1$'$) to~(P4$'$).
Figure~\ref{fig:performance_prop}(b) shows the impact when messages
are received out of order for logs with an event rate 1000.
We remark that some care must be taken when comparing these figures
with the Figures~\ref{fig:performance}(a) and~(b).
First, the formulas express different policies. For instance, in
(P2$'$) a report might discharge multiple transactions.  Second, the
logs for the propositional settings differ from the logs for the
formulas~(P1) to~(P4). In particular, 
the events in the log files
generated for the propositional settings do not account for different
customers. Furthermore, we have chosen event rates that are 10 times higher.
However, the running times in the propositional setting are
significantly faster. In the propositional setting, our
prototype usually processes an event in a fraction of a millisecond.

%%% Local Variables:
%%% mode: latex
%%% TeX-master: "main"
%%% End:

\end{document}